\documentclass[11pt, a4paper]{article}
\usepackage{jheparxiv}
\usepackage[utf8]{inputenc}
\usepackage{amsmath}
\usepackage{amsfonts}
\usepackage{amssymb}
\usepackage{ntheorem}
\usepackage{latexsym}
\usepackage{mathrsfs}
\usepackage{braket}		
\usepackage{graphicx}
\usepackage{color}
\usepackage{xcolor}
\usepackage{slashed}
\usepackage{twistor}
\usepackage{subcaption}
\usepackage{mathtools}
\usepackage{shuffle}

\newcommand{\sa}{\mathsf{a}}

\newcommand{\sT}{\mathsf{T}}

\newcommand{\ba}{\mathbf{a}}

\newcommand{\m}{\mathrm{m}}

\newcommand{\bolg}{\mathbf{g}}

\renewcommand{\d}{\mathrm{d}}

\newcommand{\qed}{\hfill \ensuremath{\Box}}

\newcommand{\msf}[1]{\mathsf{#1}}

\newcommand{\bigma}{\boldsymbol{\sigma}}

\renewcommand{\det}{\mathrm{det}}
\newcommand{\bolth}{\mathbf{\tilde{h}}}
\newcommand{\bolh}{\mathbf{h}}
\newcommand{\brpt}{\mathrm{pt}}

\newcommand{\fullpt}{\mathrm{PT}}

\newtheorem*{thmno}{Theorem}


\title{The KLT kernel in twistor space}

\author{Tim Adamo}
\author{\& Sonja Klisch}

\affiliation{School of Mathematics and Maxwell Institute for Mathematical Sciences \\
        University of Edinburgh, EH9 3FD, United Kingdom}

\emailAdd{t.adamo@ed.ac.uk}
\emailAdd{s.klisch@ed.ac.uk}

\abstract{The double copy relationship between Yang-Mills theory and general relativity can be stated in terms of a field theory Kawai-Lewellen-Tye (KLT) momentum kernel, which maps two colour-ordered gluon amplitudes to a graviton amplitude at tree-level. These amplitudes can also be written in compact, helicity-graded representations on twistor space which include the famous Parke-Taylor and Hodges formulae in the maximal helicity violating sector. However, a double copy formulation of these helicity-graded formulae has proved elusive. In this paper, we use graph-theoretic methods to obtain an explicit double copy representation of the tree-level, helicity graded S-matrix of general relativity in terms of a KLT-like integral kernel in twistor space. This integral kernel glues together two colour-ordered integrands for tree-level gluon scattering on twistor space to produce tree-level graviton amplitudes, and admits a chiral splitting into positive and negative helicity degrees of freedom. Furthermore, the kernel can be inverted to obtain a new formula for the tree-level S-matrix of biadjoint scalar theory, which we verify using recursion relations. We also derive extensions of this integral kernel to graviton scattering in anti-de Sitter space and self-dual radiative spacetimes, commenting on their potential double copy interpretations.}

\begin{document}

\maketitle

\section{Introduction}

The double copy is a relation between scattering amplitudes in gravitational and non-gravitational quantum field theories (QFTs), which has been the subject of more than 38 years of active study (see~\cite{Bern:2002kj,Bern:2019prr,Borsten:2020bgv,Bern:2022wqg,Adamo:2022dcm} for reviews). The original incarnation, due to Kawai-Lewellen-Tye (KLT)~\cite{Kawai:1985xq}, expresses closed string amplitudes as a product of two open string amplitudes, multiplied via a kinematic object now known as the \emph{KLT} or \emph{momentum kernel}. The field theory limit of this statement gives a powerful relation between tree amplitudes in general relativity and Yang-Mills: 
\begin{equation} \label{intro1}
\mathcal{M}^{\text{GR}}_{n} = \sum_{\alpha, \beta \in S_{n- 3}} \cA^{\text{YM}}_{n} [12\alpha n]\, S_n [\alpha |\beta]\, \cA^{\text{YM}}_{n}[2\beta1n]\,,
\end{equation}
where $\cM^{\text{GR}}_n$ is the $n$-point tree-level graviton amplitude and $\cA^{\text{YM}}_n[12\alpha n]$ is the $n$-point tree-level, colour-ordered partial amplitude for gluon scattering in colour ordering $12\alpha n$. The sum is over the set of orderings $\alpha,\beta\in S_{n-3}$, each of size $(n-3)!$, and $S_{n}[\alpha|\beta]$ is the KLT kernel, a polynomial of degree $n-3$ in 2-particle Mandelstam invariants whose form is known explicitly for arbitrary $n$~\cite{Bjerrum-Bohr:2009ulz,Bjerrum-Bohr:2010diw,Bjerrum-Bohr:2010kyi,Bjerrum-Bohr:2010pnr}. KLT kernels linking a remarkable array of other QFTs and string theories at tree-level have also been constructed~\cite{Chi:2021mio,Chen:2023dcx}.

The double copy has been extended to loop-level via the notion of \emph{colour-kinematics duality}~\cite{Bern:2008qj,Bern:2010ue,Bern:2010yg}, where the complexity of gravity amplitudes is traded for the simplicity of gluon amplitudes whose kinematic numerators are in a suitable representation. At tree-level, the existence of this colour-kinematic representation is in one-to-one correspondence with the KLT kernel~\cite{Bjerrum-Bohr:2009ulz,Stieberger:2009hq,HenryTye:2010tcy,Bjerrum-Bohr:2010mia,Feng:2010my}, and it has been shown to exist (or with some modifications) for 4-point scattering with maximal supersymmetry through 5-loops~\cite{Bern:2017yxu,Bern:2017ucb}. The existence of a corresponding KLT kernel beyond tree-level is less clear, although there has been substantial progress in the study of closely related string monodromy relations at higher genus~\cite{Tourkine:2016bak,Hohenegger:2017kqy,Mafra:2017ioj,Casali:2019ihm,Casali:2020knc,Stieberger:2021daa,Stieberger:2022lss,Stieberger:2023nol,Mazloumi:2024wys}. Moreover, the double copy toolkit has recently been applied to precision-frontier calculations in the gravitational 2-body problem; the most recent example at writing is the calculation of the scattering angle between two supersymmetric black holes to \emph{fifth-order} in the Post-Minkowski expansion and first-order in the self-force expansion of $\cN=8$ supergravity~\cite{Bern:2024adl}.

Yet despite these myriad advances, there are still some surprising gaps in our understanding of double copy. For instance, it is not known how to explicitly double copy some of the most famous scattering amplitude formulae, which are written in a helicity graded representation for the external states. In this paper, we establish a new double copy description of tree-level scattering using helicity graded representations of gluon and graviton amplitudes. 

\medskip

Indeed, the study of scattering amplitudes in four-dimensions has enjoyed a parallel set of advances inspired by the natural organisation of amplitudes into helicity sectors (also known as $R$-charge sectors in supersymmetric theories). The simplest non-vanishing tree-level amplitudes in gauge theory and gravity are those that are \emph{maximally helicity violating} (MHV), with two negative helicity and arbitrarily many positive helicity particles\footnote{The conjugate configuration with two positive and arbitrarily many negative helicity particles is known as $\overline{\text{MHV}}$ configuration, and is equal to the complex conjugate of the MHV amplitude for Lorentzian-real kinematics.}. It is a remarkable fact that these amplitudes in gauge theory and gravity admit compact, closed-form expressions at arbitrary multiplicity in the external positive helicity particles, known as the Parke-Taylor~\cite{Parke:1986gb} and Hodges formulae~\cite{Hodges:2012ym}\footnote{Other expressions for the graviton MHV amplitude were known before~\cite{Berends:1988zp,Bern:1998sv,Mason:2009afn,Nguyen:2009jk}, but these lack the manifest permutation invariance in the external gravitons which is manifest in Hodges' formula.}, for MHV gluon and graviton scattering, respectively. 

The incredible simplicity of scattering amplitudes in the MHV sector can be explained by the fact that both Yang-Mills and general relativity admit perturbative expansions around the self-dual sector in asymptotically flat spacetimes~\cite{Plebanski:1977zz,Chalmers:1996rq,Mason:2005zm,Sharma:2021pkl}. The MHV sectors represent the first non-trivial term in this perturbative expansion around a classically integrable subsector. One can then ask if a similar level of simplicity is lurking at all-orders in the expansion around self-duality at tree-level.

Answering this question involves using \emph{twistor theory}~\cite{Penrose:1967wn}, an algebro-geometric formalism which manifests the integrability of the self-dual sector in gauge theory and gravity~\cite{Ward:1977ta,Penrose:1976js,Mason:1991rf}. Indeed, it was noted long ago that the Parke-Taylor formula can be written in terms of a certain worldsheet correlator in twistor space~\cite{Nair:1988bq}, and Witten extended this idea to the full tree-level S-matrix of Yang-Mills theory~\cite{Witten:2003nn}. Here, the tree-level N$^k$MHV amplitude, with $k+2$ negative helicity and arbitrarily many positive helicity external gluons, is given by an integral over the moduli space of holomorphic, degree $k+1$ maps from the Riemann sphere to twistor space, which is localised by kinematic constraints. This can be written in a compact form for arbitrary multiplicity, and is known as the Roiban-Spradlin-Volovich-Witten (RSVW) formula~\cite{Roiban:2004yf} for the helicity-graded tree-level S-matrix of Yang-Mills theory. There is also a helicity-graded formula for the tree-level S-matrix of general relativity, again written as a localised moduli integral over holomorphic rational maps to twistor space, known as the Cachazo-Skinner formula~\cite{Cachazo:2012kg}.

\medskip

While studies of helicity graded all-multiplicity amplitudes and the double copy were initially closely entangled\footnote{Indeed, the first investigations of gravitational MHV scattering were closely tied to a double copy of the Parke-Taylor formula~\cite{Berends:1988zp}.}, the two paths have since diverged significantly. In particular, there is no obvious imprint of double copy visible in the Hodges formula for gravitational MHV scattering, or, more generally, in the Cachazo-Skinner formula for the helicity graded tree-level graviton S-matrix. Of course, on a basic level we know that a double copy relationship between the RSVW formula and the Cachazo-Skinner formula must exist: the KLT double copy \eqref{intro1} preserves the helicity grading. Yet an explicit double copy construction of the Cachazo-Skinner representation of the the graviton amplitudes has remained elusive\footnote{A formula for N$^k$MHV graviton scattering was developed by Cachazo-Geyer~\cite{Cachazo:2012da} building on a sort of double copy from the RSVW formula, but this formula remains a conjecture and has not been shown to be equal to the Cachazo-Skinner formula.}. Some recent progress was made in determining colour-kinematic numerators for the MHV sector directly from the Hodges formula~\cite{Frost:2021qju}.

In some sense, this opacity is compensated by the clarity of the double copy in the Cachazo-He-Yuan (CHY) formalism~\cite{Cachazo:2013gna,Cachazo:2013hca,Cachazo:2013iea}, where tree-level amplitudes are calculated by localising integrands on the $(n - 3)!$ solution of the scattering equations, a set of contraints linking kinematics with the moduli of a punctured Riemann sphere. Here, double copy is beautifully apparent at the level of the integrands; this representation of double copy led to the discovery that the matrix inverse of the KLT momentum kernel is equal to the amplitudes of another quantum field theory - bi-adjoint scalar (BAS) theory. Mathematically, this is an extremely non-trivial fact, as \emph{a priori} this requires the calculation of the inverse of an $(n-3)!\times (n-3)!$ matrix.  However, in the CHY formalism, the momentum kernel is a $1 \times 1$ matrix, with the remaining complexity absorbed by the scattering equations. Since then, the relation between the momentum kernel and BAS theory has been studied independently of CHY~\cite{Mizera:2016jhj,Carrasco:2016ldy,Mafra:2016mcc,Mizera:2017cqs,Arkani-Hamed:2017mur,Frost:2018djd,Chi:2021mio,Chen:2023dcx}, and proven in~\cite{Frost:2020eoa}.

The scattering equations and the degree $d$ maps that underlie N$^{d-1}$MHV amplitudes are closely connected. In fact, the moduli integrals in the RSVW or Cachazo-Skinner formulae are localised onto a subset of size $E(n - 3, d -1)$ (the Eulerian number) of the $(n-3)!$ solutions to the scattering equations~\cite{Spradlin:2009qr,Cachazo:2013iaa,Roehrig:2017wvh}. It is thus possible to `grade' the CHY formulae by helicity to obtain the corresponding N$^{d - 1}$MHV amplitudes in gauge theory and gravity. However, this involves a rather non-trivial basis change at the level of the moduli integrals~\cite{Litsey:2013jfa,Du:2016blz,He:2016vfi,Zhang:2016rzb,Geyer:2016nsh} which obscures the initially obvious double copy structure as well as any link with BAS amplitudes\footnote{In~\cite{Cachazo:2016sdc}, a formula for BAS scattering based on rational maps to twistor space was given, but it involves a sum over different degrees, meaning that it cannot be the inverse of a KLT kernel which preserves the helicity grading.}. 

\medskip

In this paper we provide the link that connects these two organising principles of amplitudes: the double copy and helicity grading. Our main results can be summarised as follows.
\begin{itemize}
    \item[] \emph{Theorem~\ref{thm:PT-KLT}:} The integrand of the Cachazo-Skinner formula $\mathcal{I}^{\text{GR}}_{n,d}$ is the product of two colour-ordered RSVW integrands $\mathcal{I}^{\text{YM}}_{n,d}[\alpha]$ multiplied via a \emph{helicity-graded integral kernel} $S_{n, d} [\alpha |\beta]$:
\begin{equation} \label{intro2}
\mathcal{I}^{\text{GR}}_{n, d} = \sum_{\substack{\rho, \omega \in \Omega_d \\ \bar{\rho}, \bar{\omega} \in \Omega_{n - d - 2}}} \mathcal{I}^{\text{YM}}_{n, d}[\rho \bar{\rho}]\, S_{n, d}[\rho \bar{\rho} |\omega^{\mathrm{T}} \bar{\omega}] \,\mathcal{I}^{\text{YM}}_{n, d}[\omega^{\mathrm{T}} \bar{\omega}]\,,
\end{equation}
where $\Omega_d, \Omega_{n - d - 2}$ are certain subsets of colour orderings with $d\,!$ and $(n - d- 2)!$ elements, respectively. This kernel $S_{n,d}$, for which we give an explicit formula, is an object at the level of \emph{integrands} on twistor space and admits a chiral splitting according to the helicity configurations of the external particles, as does the basis of colour orderings. 
 
    \item[] \emph{Theorem~\ref{BASthm}:}  The inverse of the helicity-graded integral kernel (viewed as a linear map on the bases of colour orderings) defines a new formula for all tree-level, colour-ordered partial amplitudes of BAS theory. Schematically,
    \begin{equation}
        m_{n}[\rho \bar{\rho} | \omega^{\mathrm{T}} \bar{\omega}^{\mathrm{T}}] = \int \d \mu_d \,S^{-1}_{n, d}[\rho \bar{\rho}|\omega^{\mathrm{T}} \bar{\omega}] \,\prod_{i = 1}^n \varphi_i\,,
    \end{equation}
    where the integral is over the moduli space of maps of degree $d$ from the Riemann sphere to twistor space, and external kinematics are encoded in the twistor wavefunctions $\{\varphi_i\}$. Note that this expression for the partial amplitude $m_{n}[\rho\bar{\rho}|\omega^{\mathrm{T}}\bar{\omega}]$ depends on the degree of the map only through the form of the colour-orderings.
\end{itemize}

The derivation of the integral kernel underpinning Theorem~\ref{BASthm} rests on a non-trivial re-writing of the integrand of the Cachazo-Skinner formula using various tools from graph theory. The integral kernel $S_{n,d}$ should be viewed as the KLT kernel in twistor space; indeed, in the MHV sector ($d=1$) where all moduli integrals can be performed explicitly, the resulting momentum space expression is the well-known KLT momentum kernel. The proof of the new formula for BAS tree-amplitudes in Theorem~\ref{BASthm} is by iteration, using BCFW recursion adapted to twistor space. The result is remarkable, as the final expression is a scalar amplitude, independent of helicity.

The graph-theoretic methods utilized here can naturally be adapted to any twistor space formula which has a similar mathematical structure to the original Cachazo-Skinner formula. This allows us to find generalizations of the integral kernel on certain \emph{curved} four-dimensional spacetimes; namely, anti-de Sitter space~\cite{Adamo:2015ina} and self-dual radiative spacetimes~\cite{Adamo:2022mev}. In both cases, this leads to tantalizing hints at the possibility of an explicit manifestation of double copy for scattering in curved spacetimes, although we observe several issues that prevent us from drawing any definitive conclusions in this regard.

\medskip

The paper is structured as follows. Section \ref{sec:rev} is a review of the tree-level S-matrices in gauge theory and gravity, as well as results in graph theory that will be used throughout the paper. Section \ref{sec:deriv} contains the derivation of the helicity-graded integral kernel. The derivation rests on the graph theoretic properties of the gravity integrand, and results in a chirally split momentum kernel. The special case of the MHV configuration is highlighted and reproduces the usual field theory KLT momentum kernel. In Section \ref{sec:proof} the twistor space BAS formula is presented and justified. It is then proven using BCFW recursion techniques. Section \ref{sec:bg} applies the methods of the previous two sections to amplitude formulae on non-trivial backgrounds: AdS and self-dual radiative gauge fields and spacetimes. Finally, Section \ref{sec:conc} concludes, while Appendices \ref{parityApp} and \ref{softApp} verify expected physical properties of the BAS formula.


\section{Review: tree-level S-matrices \& graph theory} \label{sec:rev}

Our derivation of a helicity-graded momentum kernel for the tree-level S-matrix of gravity, and its relationship with the double copy of the tree-level S-matrix of Yang-Mills theory, makes use of formulae for all tree-level gluon and graviton scattering amplitudes in a helicity-graded representation. These are the Roiban-Spradlin-Volovich-Witten (RSVW) formula~\cite{Witten:2003nn,Roiban:2004yf} for gluons, and the Cachazo-Skinner formula~\cite{Cachazo:2012kg}, for gravitons. Both are written in terms of moduli integrals over the space of holomorphic maps from the Riemann sphere to twistor space, with the degree of the map fixed by the helicity configuration of the scattering amplitude and all moduli integrals localized in terms of the kinematic data. In addition, the foundation of our results lies in several (fairly standard) results from algebraic combinatorics and graph theory applied to these formulae.

While these formulae and results have been in the literature and textbooks for many years, they are not necessarily well-known to a general mathematical physics audience. In this section we review the RSVW and Cachazo-Skinner formulae, providing a brief summary of their relevant features, as well as cataloging concepts and results in graph theory which will be relevant for the main results of the paper. The reader who is already familiar with these topics may wish to simply skim this section, familiarizing themselves with our notation and terminology.


\subsection{Tree-level scattering amplitudes of gauge theory and gravity}

In four spacetime dimensions, gluons and gravitons have only two on-shell polarizations which can be labeled by \emph{helicity}, which is simply a sign corresponding to whether the linearised field strength of the gluon or graviton is self-dual (positive helicity) or anti-self-dual (negative helicity). This allows the tree-level S-matrices of Yang-Mills and general relativity in 4-dimensions to be graded by the number of negative helicity external particles in the scattering process (where all external particles are assumed to be outgoing). Consistency and classical integrability of the self-dual sectors in these theories ensure that the tree-level amplitudes with less than two negative helicity particles vanish, so the first non-trivial amplitude in this helicity grading, known as the \emph{maximal helicity violating} (MHV) configuration, involves two negative helicity and arbitrarily many positive helicity external particles. 

In general, the $n$-point, tree-level scattering amplitude with $k+2$ negative helicity external particles is called the tree-level N$^k$MHV amplitude, denoted by $\cM_{n,k+1}$. For gluon scattering, it is often useful to further decompose the tree-level scattering amplitude into colour-ordered partial amplitudes. It is easy to show that all tree-amplitudes in Yang-Mills theory (with gauge group SU$(N)$, for instance) can be decomposed as
\be\label{cdecomp}
\cM_{n,k+1}=\sum_{\rho\in S_n\setminus\Z_n}\tr\left(\sT^{\sa_{\rho(1)}}\cdots\sT^{\sa_{\rho(n)}}\right)\,\cA_{n,k+1}[\rho]\,,
\ee
where the sum is over non-cyclic permutations on $n$ labels and $\sT^{\sa}$ are generators of the adjoint representation of the gauge group. All kinematic information is packaged in the \emph{colour-ordered partial amplitude} $\cA_{n,k+1}[\rho]$; knowing this partial amplitude, with the explicit colour-dependence stripped off, is equivalent to knowing the full tree-level amplitude.

It is a remarkable fact that MHV amplitudes in gauge theory and gravity have simple, compact expressions for arbitrary multiplicity when written in the appropriate variables. In 4-dimensions, a (complex) null momentum $k^{\mu}$ can be written as a simple $2\times2$ matrix $k^{\alpha\dot\alpha}=\kappa^\alpha\,\tilde{\kappa}^{\dot\alpha}$, where $\alpha=0,1$ and $\dot\alpha=\dot{0},\dot{1}$ are negative and positive chirality $\SL(2,\C)$ Weyl spinor indices, respectively. The little group acts as the rescaling $\kappa\to r\,\kappa$, $\tilde{\kappa}\to r^{-1}\,\tilde{\kappa}$ for $r$ any non-vanishing complex number. Spinor indices are raised and lowered using the $\SL(2,\C)$-invariant Levi-Civita symbols, and we adopt the conventions:
\be\label{shcon}
\kappa^{\alpha}=\epsilon^{\alpha\beta}\,\kappa_{\beta}\,, \qquad \kappa_{\alpha}=\kappa^{\beta}\,\epsilon_{\beta\alpha}\,,
\ee
and similarly for dotted indices. These are easily seen to be consistent with the normalization $\epsilon^{\alpha\beta}\,\epsilon_{\gamma\beta}=\delta^{\alpha}_{\gamma}$. A tree-level scattering amplitude involving $n$ gluons or gravitons will be a rational function of the Lorentz-invariant kinematical quantities
\be\label{genMan}
\la i\,j\ra:=\epsilon_{\beta\alpha}\,\kappa^{\alpha}_i\,\kappa^{\beta}_j\,, \qquad [i\,j]:=\epsilon_{\dot\beta\dot\alpha}\,\tilde{\kappa}^{\dot\alpha}_i\,\tilde{\kappa}^{\dot\beta}_j\,,
\ee
for $i,j=1,\ldots,n$. 

In these variables, tree-level MHV gluon and graviton scattering amplitudes take an incredibly simple form. The famous Parke-Taylor formula~\cite{Parke:1986gb} for MHV gluon scattering is given by
\be\label{ParkeTaylor}
\cA_{n,1}[12\cdots n]=\delta^{4}\!\left(\sum_{i=1}^n k_i\right)\,\frac{\la a\,b\ra^{4}}{\la 1\,2\ra\,\la2\,3\ra\cdots\la n\,1\ra}\,,
\ee
where gluons $a,b$ have negative helicity. This formula manifests the cyclicity associated with the colour-ordering (broken only by the choice of the two negative helicity gluons), and scales under the little group for each gluon with homogeneity appropriate to that gluon's helicity. 

The Hodges formulae~\cite{Hodges:2012ym} for tree-level MHV graviton scattering is the corresponding expression\footnote{There are several earlier explicit formulae for the gravitational MHV amplitude~\cite{Berends:1988zp,Bern:1998sv,Nguyen:2009jk,Mason:2009afn}, but none of them explicitly manifests the permutation invariance of the external gravitons or is as compact as the Hodges formula.} in general relativity:
\be\label{HodgesForm}
\cM_{n,1}=\delta^{4}\!\left(\sum_{i=1}^n k_i\right)\,\la1\,2\ra^8\,\mathrm{det}'(\mathrm{H})\,,
\ee
where gravitons 1,2 have negative helicity. Here, $\mathrm{H}$ is a $(n-2)\times(n-2)$ matrix whose entries are
\be\label{momHM}
\mathrm{H}_{ij}=\left\{\begin{array}{l}
                       \frac{[i\,j]}{\la i\,j\ra} \quad \mathrm{if}\: i\neq j \\
                       -\sum_{j\neq i}\frac{[i\,j]}{\la i\,j\ra}\,\frac{\la1\,j\ra\,\la2\,j\ra}{\la1\,i\ra\,\la2\,i\ra} \quad \mathrm{if}\: i=j
                       \end{array}\right.\,, \quad i,j\in\{3,\ldots,n\}\,,
\ee
while
\be\label{momHM2}
\mathrm{det}'(\mathrm{H}):=\frac{|\mathrm{H}^{i}_{i}|}{\la1\,2\ra^2\,\la1\,i\ra^2\,\la2\,i\ra^2}\,,
\ee
is a reduced determinant which is independent of the choice of positive helicity graviton $i$ on the support of momentum conservation.

\medskip

It is natural to ask if there are similarly compact expressions for gluon and graviton scattering beyond the MHV sector. This is indeed possible, by writing the S-matrix in terms of a localized integral over the moduli space of rational, holomorphic maps from the Riemann sphere to twistor space. The degree $d$ of the underlying map is related to the N$^k$MHV degree of the corresponding scattering amplitude by $d=k+1$. 

The twistor space of (complexified) Minkowski space is given by an open subset of 3-dimensional complex projective space (we follow the notation and conventions of~\cite{Adamo:2017qyl}). Let $Z^A=(\mu^{\dot\alpha},\lambda_{\alpha})$ be holomorphic, homogeneous coordinates on $\P^3$. Twistor space is then the open subset
\be\label{PT}
\PT=\left\{Z^A\in\P^3\,|\,\lambda_{\alpha}\neq0\right\}\,.
\ee
Let $\sigma^{\ba}=(\sigma^{\mathbf{0}},\sigma^{\mathbf{1}})$ be holomorphic homogeneous coordinates on $\P^1$, the Riemann sphere. A holomorphic map of degree $d$ from $\P^1$ to $\PT$ can be parameterized as
\be\label{degd}
Z^{A}(\sigma)=U^{A}_{\ba_1\cdots\ba_d}\,\sigma^{\ba_1}\cdots\sigma^{\ba_d}\equiv U^{A}_{\ba(d)}\,\sigma^{\ba(d)}\,,
\ee
where the $4(d+1)$ components of $U^{A}_{\alpha(d)}$ are the moduli of the map. This is a redundant parameterization, due to the $\SL(2,\C)$ automorphisms of $\P^1$ and $\C^*$ projective rescalings of the homogeneous coordinates, so the integration measure on the moduli space of such maps is given by
\be\label{modmeasure}
\d\mu_{d}:=\frac{\d^{4(d+1)}U}{\mathrm{vol}\,\GL(2,\C)}\,.
\ee
Here, the quotient by the (infinite) volume of $\GL(2,\C)\cong\SL(2,\C)\times\C^*$ is understood in the Faddeev-Popov sense. 

The extension of the Parke-Taylor formula to the full tree-level S-matrix of Yang-Mills theory is known as the Roiban-Spradlin-Volovich-Witten (RSVW) formula~\cite{Witten:2003nn,Roiban:2004yf}. Let $\tilde{\bolg}\subset\{1,\ldots,n\}$ denote the set of $d+1$ negative helicity gluons in the N$^{d-1}$MHV amplitude, and $\bolg=\{1,\ldots,n\}\setminus\tilde{\bolg}$ be its complement. Denote the M\"obius-invariant inner product between homogeneous coordinates on $\P^1$ as
\be\label{Mobip}
(i\,j):=\epsilon_{\mathbf{b}\ba}\,\sigma_i^{\ba}\,\sigma_j^{\mathbf{b}}\,.
\ee
Define the Vandermonde determinant of the set $\tilde{\bolg}$ on $\P^1$ by
\be\label{VDMdet}
|\tilde{\bolg}|:=\prod_{\substack{i,j\in\tilde{\bolg} \\ i<j}}(i\,j)\,.
\ee
This object is homogeneous of degree $d$ in each $\sigma_i$ for $i\in\tilde{\bolg}$. Finally, define the quantity
\be\label{wsPT}
\fullpt_n[\rho]:=\prod_{i=1}^{n}\frac{\D\sigma_{\rho(i)}}{\left(\rho(i)\,\rho(i+1)\right)}\,,
\ee
where $\rho\in S_n\setminus\Z_n$,
\be\label{Dsigma}
\D\sigma:=(\sigma\,\d\sigma)\,,
\ee
is the holomorphic 1-form trivializing the canonial bundle of $\P^1$ and the product in \eqref{wsPT} is understood cyclically.

With these ingredients, the RSVW formula for the tree-level N$^{d-1}$MHV gluon amplitude is~\cite{Witten:2003nn,Roiban:2004yf}
\be\label{RSVW}
\cA_{n,d}[\rho]=\int \d\mu_d\,|\tilde{\bolg}|^4\,\fullpt_n[\rho]\,\prod_{i\in\bolg}a_i(Z(\sigma_i))\,\prod_{j\in\tilde{\bolg}}b_j(Z(\sigma_j))\,.
\ee
Here the $\{a_i(Z)\}$ and $\{b_j(Z)\}$ are, respectively, the positive and negative helicity gluon wavefunctions on twistor space. By the Penrose transform~\cite{Penrose:1969ae,Eastwood:1981jy}, which equates solutions of the zero-rest-mass equations on spacetime to cohomology on twistor space, these wavefunctions are valued in
\be\label{PenTYM}
a_i(Z)\in H^{0,1}(\PT,\cO)\,, \qquad b_j(Z)\in H^{0,1}(\PT,\cO(-4))\,,
\ee
where $H^{0,1}$ is the Dolbeault cohomology group and $\cO(k)\to\PT$ denotes the sheaf of locally holomorphic functions of homogeneity $k\in\Z$ on $\PT$; by convention, we denote $\cO(0)\equiv\cO$. 

At first glance, it may seem inconceivable that an algebro-geometric integral formula like \eqref{RSVW} can truly capture all tree-level gluon scattering amplitudes in Yang-Mills theory. However, by choosing the twistor wavefunctions to represent momentum eigenstates, one can show that in fact \emph{all} of the integrals in \eqref{RSVW} are localized in terms of the kinematics, leaving a rational function of those kinematics with support dictated by overall 4-momentum conservation. Indeed, the twistor wavefunctions for momentum eigenstates take the form~\cite{Roiban:2004yf,Witten:2004cp,Adamo:2011pv}:
\be\label{glmomeig}
\begin{split}
a_i(Z)&=\int_{\C^*}\frac{\d t_i}{t_i}\,\bar{\delta}^{2}(\kappa_i-t_i\,\lambda)\,\e^{\im t_i\,[\mu\,i]}\,, \\
b_j(Z)&=\int_{\C^*}t_j^3\,\d t_j\,\bar{\delta}^{2}(\kappa_j-t_j\,\lambda)\,\e^{\im t_j\,[\mu\,j]}\,,
\end{split}
\ee
where the holomorphic delta function in $N$-dimensions is defined, for some holomorphic quantity $z\in\C^N$, by
\be\label{holdel}
\bar{\delta}^{N}(z):=\frac{1}{(2\pi\im)^N}\,\bigwedge_{a=1}^{N}\dbar\left(\frac{1}{z_a}\right)\,,
\ee
and the integral over the scaling parameter $t$ ensures that each wavefunction has the appropriate homogeneity on twistor space.

Inserting these wavefunctions into \eqref{RSVW}, all moduli integrals for the $\mu^{\dot\alpha}(\sigma)$ part of the holomorphic map $Z:\P^1\to\PT$ can be done against the exponentials to give delta functions. Combined with the holomorphic delta functions in each wavefunction, this gives $2(d+1)+2n$ total constraints, with $2(d+1)+2n-4$ remaining integrals (accounting for the $\GL(2,\C)$ redundancy in the moduli measure). Thus, each remaining integral is localized by a constraint, leaving 4 additional delta functions; it is straightforward to show that these correspond precisely to momentum conservation. Furthermore, the integrand of \eqref{RSVW} after performing the $\mu^{\dot\alpha}(\sigma)$-moduli integrals is a rational function of the remaining moduli as well as the kinematic data. The constraints are similarly rational functions, implying that the result, after all integrals have been localized against the constraints, is itself a rational function of the kinematic data, as required for tree-level scattering amplitudes.

More generally, it can be \emph{proved} that the RSVW formula is a correct representation of the tree-level S-matrix of Yang-Mills theory using tree-level unitarity arguments based on the factorization properties of the formula~\cite{Vergu:2006np,Skinner:2010cz,Dolan:2011za,Adamo:2013tca}. When $d=1$, the localization of the moduli integrals can be done explicitly and the Parke-Taylor amplitude \eqref{ParkeTaylor} is recovered for the MHV sector.

\medskip

The gravitational version of this story, the Cachazo-Skinner formula, follows similar lines, although requiring several more involved ingredients. External gravitons in a scattering process are represented on twistor space via the Penrose transform as twistor wavefunctions
\be\label{PenTGR}
h(Z)\in H^{0,1}(\PT,\cO(2))\,, \qquad \tilde{h}(Z)\in H^{0,1}(\PT,\cO(-6))\,,
\ee
for positive and negative helicity, respectively. The momentum eigenstate representatives for these wavefunctions are
\be\label{grmomeig}
\begin{split}
h_i(Z)&=\int_{\C^*}\frac{\d t_i}{t^3_i}\,\bar{\delta}^{2}(\kappa_i-t_i\,\lambda)\,\e^{\im t_i\,[\mu\,i]}\,, \\
\tilde{h}_j(Z)&=\int_{\C^*}t_j^5\,\d t_j\,\bar{\delta}^{2}(\kappa_j-t_j\,\lambda)\,\e^{\im t_j\,[\mu\,j]}\,.
\end{split}
\ee
Let $\bolth\subset\{1,\ldots,n\}$ be the set of $d+1$ negative helicity gravitons in a N$^{d-1}$MHV scattering process, and let $\bolh$ be the complementary set of positive helicity gravitons. 

Define a $(n-d-1)\times(n-d-1)$ matrix $\HH$ with entries\footnote{In general, the diagonal entries of $\HH$ are defined in terms of an arbitrary choice of section $\mathfrak{s}(\sigma)$ of $\cO(d+1)\to\P^1$~\cite{Cachazo:2012kg,Cachazo:2012pz}. Here, we make the simple choice $\mathfrak{s}(\sigma)=\prod_{l\in\bolth}(\sigma\,l)$; while the amplitude itself is invariant for different choices, our later calculations of the integral kernel will change. This is not surprising, since double copy representations (such as colour-kinematic numerators) are generically gauge-dependent~\cite{Bern:2008qj}.}
\be\label{HMat}
\begin{split}
\HH_{ij}&=-t_i\,t_j\,\frac{[i\,j]}{(i\,j)}\,\sqrt{\D\sigma_i\,\D\sigma_j}\,, \qquad i,j\in\bolh\,, i\neq j\,, \\
\HH_{ii}&=t_i\,\D\sigma_i\,\sum_{\substack{j\in\bolh \\ j\neq i}}t_j\,\frac{[i\,j]}{(i\,j)}\,\prod_{l\in\bolth}\frac{(j\,l)}{(i\,l)}\,, \quad i\in\bolh\,.
\end{split}
\ee
It can be shown that this matrix has corank 1, so it has a natural (generally non-vanishing) associated reduced determinant
\be\label{redHdet}
\mathrm{det}'(\HH):=\frac{|\HH^{i}_{i}|}{\left|\bolth\cup\{i\}\right|^2}\,\prod_{j\in\bolth\cup\{i\}}\D\sigma_i\,,
\ee
for any $i\in\bolh$. It can be shown that $\mathrm{det}'(\HH)$ is independent of the choice of this $i$. 

Next, define the `dual' $(d+1)\times(d+1)$ matrix $\HH^{\vee}$
\be\label{dHmat}
\begin{split}
\HH^{\vee}_{ij}&=\frac{\la\lambda(\sigma_i)\,\lambda(\sigma_j)\ra}{(i\,j)}\,, \qquad i,j\in\bolth\,, i\neq j \\
\HH^{\vee}_{ii}&=-\frac{\la\lambda(\sigma_i)\,\d\lambda(\sigma_i)\ra}{\D\sigma_i}\,, \qquad i\in\bolth\,.
\end{split}
\ee
The matrix $\HH^{\vee}$ also has corank 1, with the natural (generally non-vanishing) reduced determinant
\be\label{reddHdet}
\mathrm{det}'\!\left(\HH^\vee\right):=\frac{|\HH^{\vee\,i}_{i}|}{\left|\bolth\setminus\{i\}\right|^2}\,,
\ee
for any $i\in\bolth$. It can be shown that not only is $\mathrm{det}'(\HH^\vee)$ independent of this choice of $i\in\bolth$, but it is actually independent of \emph{all} the marked points $\{\sigma_1,\ldots,\sigma_n\}$~\cite{Skinner:2013xp}. In particular, this reduced determinant is actually equal to the \emph{resultant} (cf., \cite{Gelfand:2008}) of the $\lambda_{\alpha}(\sigma)$ components of the holomorphic map $Z:\P^1\to\PT$~\cite{Cachazo:2013zc}. This gives $\mathrm{det}'(\HH^\vee)$ an algebro-geometric meaning in its own right; among other properties, it ensures that $\mathrm{det}'(\HH^\vee)$ vanishes whenever $\lambda_{\alpha}(\sigma)=0$ -- that is, whenever the holomorphic map $Z$ lands outside of twistor space.

With these ingredients in place, the Cachazo-Skinner formula for the tree-level scattering amplitudes of general relativity is:
\be\label{CSform}
\cM_{n,d}=\int\d\mu_d\,|\bolth|^8\,\mathrm{det}'(\HH)\,\mathrm{det}'\!\left(\HH^{\vee}\right)\,\prod_{i\in\bolh}h_i(Z(\sigma_i))\,\prod_{j\in\bolth}\tilde{h}_j(Z(\sigma_j))\,.
\ee
As with the RSVW formula, when evaluated on the momentum eigenstates \eqref{grmomeig} all of the integrals in this expression are localized against delta function constraints, with four remaining delta functions imposing momentum conservation. The formula is a rational function of the kinematic data and can be shown to exhibit the correct factorization properties~\cite{Cachazo:2012pz,Adamo:2013tca}, thereby establishing that it is indeed a representation of the tree-level S-matrix of general relativity. Furthermore, when $d=1$ it is straightforward to show that the Cachazo-Skinner formula reduces to the Hodges formula \eqref{HodgesForm} for the graviton MHV amplitude.

\medskip

It should be noted that both the RSVW and Cachazo-Skinner formulae can be obtained as correlation functions in \emph{twistor string theories}: chiral 2d CFTs governing holomorphic maps from a closed Riemann surface to twistor space. This connection was first suggested long ago by Nair~\cite{Nair:1988bq}, who observed that the Parke-Taylor formula could arise in this way, and generalized to the full tree-level S-matrix of Yang-Mills theory by Witten~\cite{Witten:2003nn} and others~\cite{Berkovits:2004hg,Berkovits:2004jj,Mason:2007zv}. A twistor string theory for gravity was later discovered by Skinner~\cite{Skinner:2013xp}. For the considerations in the remainder of this paper, we will however not need to exploit any underlying twistor string description of the amplitudes.


\subsection{Relevant concepts in graph theory}
\label{graphTheory}

Many of the key arguments in this paper make use of results in graph theory. Most of these are standard results in algebraic combinatorics, but for the sake of completeness (and those readers unfamiliar with graph theory), we include a review of the necessary concepts here. Standard textbook references are~\cite{Stanley:1999,vanLint:2001,Stanley:2013}, with concise introductions to some of the relevant theorems in the amplitudes literature to be found in~\cite{Feng:2012sy,Adamo:2012xe,Frost:2021qju}.

\medskip

A \emph{graph} $G = (V, E)$ is a set $V$ of vertices, and a set $E$ of edges. Each edge $e \in E$ is a pair of two vertices $(v - w)$ if the graph is \emph{undirected}, or $(v \rightarrow w)$ if the graph is \emph{directed}. The graph is \emph{simple} if there is no more than one edge connecting any two vertices, and no edge connects the same vertex. A \emph{tree graph} is a connected simple graph without cycles. A \emph{spanning tree} of a connected graph $G$ is a connected tree on the same vertices and with the set of edges a subset of the set of edges of $G$.

Given an undirected tree graph $G$ on the vertex set $V = \{1, 2, \ldots, n\}$, we can fix some $b \in V$ to be the `root' (also sometimes called a `sink') of $G$. Then one can uniquely assign a direction to the edges of $G$ so that each vertex $i \neq b$ has exactly one outgoing edge, and $b$ only has incoming edges. Thus for a given vertex $i \neq k$ in $G$, we may define $o(i)\in V$ to be the vertex connected to $i$ via the outgoing edge. Given an undirected tree graph $G$ and a choice of root $b\in V$, we denote the directed tree graph arising in this way as $G^b$.

\begin{figure}
    \centering
    \includegraphics{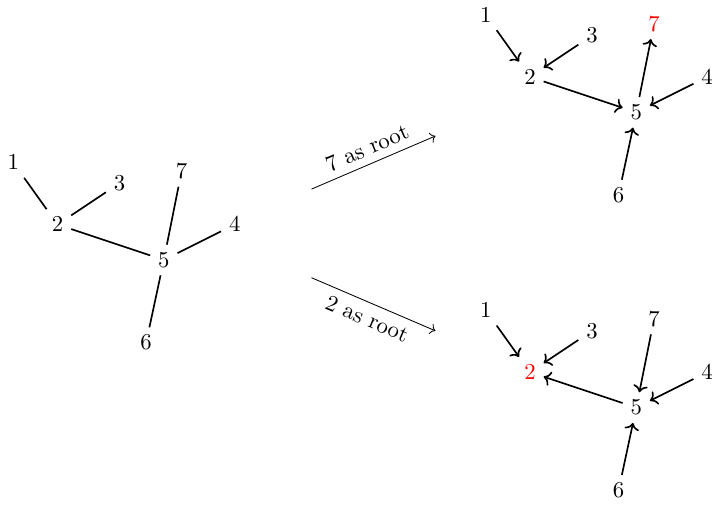}
    \caption{Labelling a point as a `root' gives a unique orientation to an undirected tree graph.}
    \label{fig:enter-label}
\end{figure}

From now on, we will consider graphs inscribed on the Riemann sphere $\P^1$, so that to each vertex $i\in V$ there is an associated point on $\P^1$, written in homogeneous coordinates as $\sigma^{\ba}_{i}$. 

Now, directed tree graphs can be related to orderings on the set of vertices. An \emph{ordering} of $V$ is a word that uses each letter $v \in V$ exactly once. For an ordering $\rho$ on $V$ we define the \emph{broken Parke-Taylor} factor $\mathrm{pt}_{n}[\rho]$ in the ring of rational functions on $(\P^1)^{n}$ as the product
\begin{equation}\label{brokePT}
\mathrm{pt}_{n}[\rho]=\prod_{i\in V\setminus\{\rho(n)\}}\frac{1}{(\rho(i)\,\rho(i+1))}\,,
\end{equation}
where $\rho(i+1)$ is the letter following $\rho(i)$ in the word, and $\rho(n)$ is the last element of $\rho$. For example,
\be\label{bPTex}
\mathrm{pt}_{n}[12\cdots n]=\frac{1}{(1\,2)\,(2\,3)\cdots (n-1\,n)}\,,
\ee
and in general these broken Parke-Taylor factors are related to their `full' counterparts \eqref{wsPT} by $\mathrm{pt}_n[\rho] = \mathrm{PT}_{n}[\rho] (\rho(n)\,\rho(1))$. 

It is a simple corollary of a result in~\cite{Frost:2021qju}\footnote{The proposition in \cite{Frost:2021qju} is the case where we fix $x = (1, 0)$, and all other points are of the form $\sigma_i^{\ba}=(z_i, 1)$. Similar relations also arise in the study of arrangements of hyperplanes in complex affine spaces~\cite{Schechtman:1991}.} that:
\begin{proposition}\label{PTprop}
Let $G^b$ be a directed tree graph on $\P^1$ with vertices $\{1 \, \ldots n\}$ and root $b$, and let $x\in\mathbb{P}^1$ be a non-intersecting point. Then:
\begin{equation}\label{PTpropident}
    \prod_{\substack{\mathrm{edges}\\ i \rightarrow j}} \frac{(j\,x)}{(i\,j)(i\,x)} = \sum_{\substack{\rho \\ o(i) <_{b\rho} i}} \mathrm{pt}_{n}[b\rho]\, \frac{(b\,x)}{(\rho(n)\,x)}.
\end{equation}
where the sum is over all orderings $\rho$ on $\{1,\ldots,n\}\setminus\{b\}$ such that $o(i)$ -- the vertex connected to $i$ by its unique outgoing edge -- precedes $i$ in the word $b\rho$ for all $i\neq b$.
\end{proposition}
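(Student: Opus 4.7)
Proposition~\ref{PTprop} is flagged as a simple upgrade of the identity proved in~\cite{Frost:2021qju} for a specific chart on $\P^1$, so the plan is to obtain it by a projective covariance argument that uses the cited result as a black box.

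First I would check that both sides transform identically under the diagonal $\GL(2,\C)$ action $\sigma\mapsto M\sigma$ on the $\sigma_i$ and on $x$. Every bracket scales as $(\cdot\,\cdot)\mapsto\det(M)(\cdot\,\cdot)$, so the $n-1$ numerator brackets against the $2(n-1)$ denominator brackets on the left give a net scaling of $\det(M)^{-(n-1)}$. On the right, each $\mathrm{pt}_n[b\rho]$ scales by $\det(M)^{-(n-1)}$ while $(b\,x)/(\rho(n-1)\,x)$ is scale-invariant, so every summand scales the same way. The identity is therefore $\GL(2,\C)$-covariant and it suffices to prove it in any chosen frame. I would then specialize to the frame of~\cite{Frost:2021qju}: $x=(1,0)$ and $\sigma_i=(z_i,1)$ for each $i$, which is reachable from a generic non-intersecting configuration by a $\GL(2,\C)$ transformation. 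In this frame every $(i\,x)$ equals the same nonzero constant, so both $(j\,x)/(i\,x)$ on the left and $(b\,x)/(\rho(n-1)\,x)$ on the right evaluate to $1$, and the identity collapses to
\[
\prod_{\substack{\mathrm{edges}\\ i\rightarrow j}}\frac{1}{(i\,j)}=\sum_{\substack{\rho\\ o(i)<_{b\rho}i}}\mathrm{pt}_n[b\rho],
\]
which is exactly the statement proved in~\cite{Frost:2021qju}. Since both sides of the general identity are rational functions, equality on an open dense locus extends to all non-intersecting configurations.

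As a self-contained alternative I would induct on $n$: remove a leaf $l\neq b$ with unique outgoing edge to $m:=o(l)$, apply the inductive hypothesis to the smaller tree $G'^b$, and show that the missing edge factor $(m\,x)/((l\,m)(l\,x))$ is exactly what is needed to turn the inductive right-hand side into the full right-hand side for $G^b$. Every ordering $\rho$ compatible with $G^b$ arises uniquely from a compatible $\rho'$ of $V\setminus\{b,l\}$ by inserting $l$ strictly after $m$ in $b\rho'$, so the task reduces to a local identity for each fixed $\rho'$. Using the Schouten (Pl\"ucker) identity
\[
\frac{(a\,b)(l\,x)}{(a\,l)(l\,b)}=\frac{(b\,x)}{(l\,b)}+\frac{(a\,x)}{(a\,l)},
\]
each interior insertion contribution becomes the telescoping difference $g(q)-g(q-1)$ for $g(q):=(\tau_q\,x)/(l\,\tau_q)$ with $\tau=b\rho'$, so the sum over allowed positions telescopes and, together with the boundary term from appending $l$ at the end, collapses to the required factor.

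The main obstacle in the inductive route is the bookkeeping: correctly splitting into the front, interior, and end insertion regimes and tracking how $\mathrm{pt}_n[b\rho]$ differs from $\mathrm{pt}_{n-1}[b\rho']$ in each. The projective-covariance route avoids this entirely, reducing the proposition to a one-line weight count plus the identity of~\cite{Frost:2021qju}, so that is what I would take as the main proof.
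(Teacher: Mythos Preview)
Your approach is precisely what the paper has in mind: it simply declares the proposition a ``simple corollary'' of the result in~\cite{Frost:2021qju} (the special case $x=(1,0)$, $\sigma_i=(z_i,1)$) and provides no further argument, so your covariance reduction is the natural fleshing-out of that remark, and your inductive alternative goes beyond anything the paper offers.

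There is, however, a real (if easily repaired) gap in your main route. The special frame $x=(1,0)$, $\sigma_i=(z_i,1)$ is \emph{not} reachable from a generic configuration by a diagonal $\GL(2,\C)$ transformation alone: that group has only four parameters, while fixing the representative of $x$ and normalising all $n$ second components of the $\sigma_i$ imposes $n+2$ conditions, so the $\GL(2,\C)$-orbit of the special locus is not dense for $n\geq 3$. What saves the argument is that both sides also carry the same homogeneity weight in each \emph{individual} variable---weight $0$ in $x$ and in $\sigma_b$, weight $-2$ in every other $\sigma_i$---which you should verify directly (it is a one-line count per vertex, using that each $i\neq b$ is the source of exactly one edge and the target of arbitrarily many). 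Once that is established, the individual rescalings $\sigma_i\mapsto\lambda_i\sigma_i$, $x\mapsto\lambda_x x$ are also covariances of the identity, and together with $\SL(2,\C)$ they do bring any non-intersecting configuration to the special frame: use $\SL(2,\C)$ to send the projective class of $x$ to $[1:0]$, then rescale each $\sigma_i$ (whose second component is nonzero since $\sigma_i\not\propto x$) to have second component $1$.

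Your inductive alternative is sound in outline; the Schouten-based telescoping works as you indicate, and the sign and indexing issues you flag as bookkeeping are indeed the only subtlety.
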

In an intuitive sense, the sum appearing on the right-hand-side of the identity \eqref{PTpropident} is over all orderings on the vertices which are compatible with the directed tree graph $G^b$.

\begin{figure}
    \centering
    \includegraphics{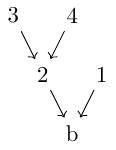} \qquad \qquad
    \includegraphics{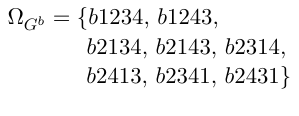}
    \caption{A tree graph $G^b$ and its set of compatible orderings $\Omega_{G^b}$.}
    \label{fig:compatords}
\end{figure}
\begin{figure}
\centering
\includegraphics{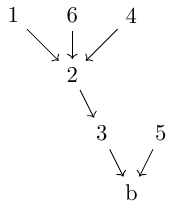} \qquad 
\includegraphics{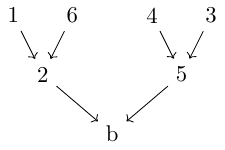}
\caption{The ordering $(b532461)$ is compatible with both of these directed tree graphs rooted at $b$. (Note that this is a non-exhaustive list.)}
\label{fig:compattrees1}
\end{figure}
\begin{figure}
\centering
\includegraphics{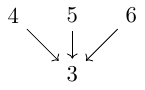} \qquad 
\includegraphics{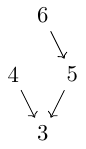} \qquad 
\includegraphics{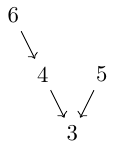}
\caption{All tree graphs rooted at $3$ which are simultaneously compatible with the orderings $(3456|3546)$.}
\label{fig:simcomp}
\end{figure}
More precisely, define an ordering $\rho$ on $\{1,\ldots,n\}\setminus\{b\}$ to be \emph{compatible} with the directed tree $G^b$ if $o(i) <_{b \rho} i$ for all $i$ in the tree. The set of all orderings compatible with $G^b$ will be denoted by $\Omega_{G^b}$. See Figure~\ref{fig:compatords} for an example.

Conversely, we say that a directed tree graph $G^b$ is \emph{compatible} with an ordering $\rho$ if $o(i) <_{b \rho} i$ for all vertices $i\neq b$ of $G^b$. Therefore, to each ordering $\rho$, we can associate a set of compatible directed trees $\cT^{b}_{\rho}$ that satisfy this relation. See Figure~\ref{fig:compattrees1} for an example. Further, it will be useful to consider the intersection $\cT^b_{\rho, \omega} = \cT^b_{\rho} \cap \cT^b_{\omega}$ of all directed trees that are simultaneously compatible with the orderings $\rho$ and $\omega$. See Figure~\ref{fig:simcomp} for an example.

\medskip

We now consider \emph{weighted graphs}, which are graphs where each edge $(i - j)$ is assigned a `weight' $w_{ij}\in\C$.  
\begin{proposition}\label{compTrees}
    The following summations are equivalent:
\begin{equation}\label{eqcompTrees}
    \sum_{T \in \cT^{b}_{\rho,\omega}} \prod_{(i \rightarrow j) \in T} w_{ij} = \prod_{j\neq b} \sum_{\substack{i<_{b\rho}j\\ i<_{b\omega} j}} w_{ij},
\end{equation}
for symmetric weights $w_{ij}=w_{ji}$.
\end{proposition}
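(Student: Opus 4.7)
The plan is to set up a bijection between the trees appearing in the left-hand sum and tuples of outgoing-edge choices (one for each non-root vertex), then factor the resulting sum into a product, and finally invoke symmetry of the weights.

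First, I would observe that specifying a directed tree $G^b$ rooted at $b$ on the vertex set $V=\{1,\ldots,n\}$ is equivalent to specifying the function $o:V\setminus\{b\}\to V$ that sends each non-root vertex $i$ to its unique successor $o(i)$. Under this correspondence, compatibility with $\rho$ and $\omega$ becomes the pointwise condition $o(i)<_{b\rho}i$ and $o(i)<_{b\omega}i$ for every $i\neq b$. The key lemma I need is that \emph{any} function $o$ satisfying these two conditions arises from a unique element of $\cT^b_{\rho,\omega}$. The only substantive thing to check here is that the directed graph with edges $\{i\to o(i):i\neq b\}$ is a tree: it has exactly $n-1$ edges, and it cannot contain an undirected cycle, because if one existed then the vertex $i_m$ maximal in the $b\rho$ order along the cycle would have both of its cycle-edges pointing away from it, contradicting the fact that $o$ assigns $i_m$ only one outgoing edge. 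Connectedness then follows since a forest with $n-1$ edges on $n$ vertices must be a tree.

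With this bijection in hand, the sum on the left-hand side of \eqref{eqcompTrees} becomes a sum over admissible tuples $(o(i))_{i\neq b}$, with independent choices at each vertex, so it factorizes:
\begin{equation*}
\sum_{T\in\cT^b_{\rho,\omega}}\prod_{(i\to j)\in T}w_{ij}
=\sum_{o}\prod_{i\neq b}w_{i,o(i)}
=\prod_{i\neq b}\sum_{\substack{j<_{b\rho}i\\ j<_{b\omega}i}}w_{ij}.
\end{equation*}
Relabelling the outer index $i\leftrightarrow j$ and using the symmetry $w_{ij}=w_{ji}$ recasts this in the form \eqref{eqcompTrees}, completing the argument.

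The only genuine subtlety is the bijection step, i.e.\ verifying that unconstrained pointwise choices of $o(i)$ satisfying the two order conditions really do always assemble into a tree rather than a graph with cycles or disconnected components; everything else is a formal manipulation of sums and products together with the symmetry $w_{ij}=w_{ji}$. I would therefore spend the bulk of the write-up on the cycle-exclusion argument via the order-maximal vertex, and dispose of the factorization in one line.
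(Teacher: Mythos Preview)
Your proof is correct, and it takes a genuinely different route from the paper. The paper proceeds by induction on the number of vertices: it identifies the last letter $n$ in the ordering $b\rho^+$, argues that in any compatible tree $n$ must be a leaf (since it can have no incoming edge), and then observes that $\cT^b_{\rho^+,\omega^+}$ is obtained from $\cT^b_{\rho,\omega}$ by attaching $n$ via a single edge $(n\to k)$ with $k<_{b\omega}n$. The factor $\sum_{k<_{b\omega}n}w_{kn}$ then peels off and the inductive hypothesis finishes the job.

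Your approach instead trades the induction for a one-shot bijection: a compatible tree is the same data as an admissible function $o:V\setminus\{b\}\to V$, and the acyclicity follows from the order-maximal-vertex argument. This is cleaner in that it exposes directly \emph{why} the sum factorizes---the choices of $o(i)$ at distinct vertices are independent once one knows they always assemble into a tree---whereas the paper's induction establishes the product form one factor at a time. Your argument is essentially the functional-digraph point of view, and it is arguably more conceptual; the paper's induction is slightly more hands-on but equally short. Both rely on the symmetry $w_{ij}=w_{ji}$ only at the final cosmetic step of matching the index labelling in \eqref{eqcompTrees}.
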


\begin{proof}
    We proceed by induction. As the base case, consider two vertices $b$ and $c$, so the only possible ordering for $\rho$ and $\omega$ is the singlet $c$. In this case, \eqref{eqcompTrees} is simply $w_{bc} = w_{bc}$, which is true for symmetric weights.

    Now, suppose the equality \eqref{eqcompTrees} holds when $b \rho,\, b\omega$ are some orderings on $n-1$ letters. We can now create a new orderings $b\rho^+,\, b\omega^+$ on $n$ letters by inserting the new letter $n$ anywhere after $b$ into either ordering. Without loss of generality, we may assume that $n$ is the last letter in $b \rho^+$ (upon relabelling our alphabet appropriately). All compatible directed trees in $\cT^b_{\rho^+, \omega^+}$ can now be constructed by joining the $n$ vertex appropriately to trees from $\cT^{b}_{\rho, \omega}$. Consider some $T \in \cT^{b}_{\rho, \omega}$. It's possible to generate a tree with $n$ vertices from $T$ by either putting $n$ in the middle of an existing edge, or connecting it to a vertex with a single line. Since $n$ is the \emph{last} letter in $\rho^+$, for a compatible tree there can be no $(k \rightarrow n)$ edge. Therefore the only possibility is that $n$ has valence one, with one $(n \rightarrow k)$ edge joining it to $T$. The new vertex $n$ can only join onto vertices $k$ such that $k<_{b\omega}n$, and so the compatible trees formed out of $T$ can be labelled by this joining vertex $i$. 

    Each tree in $\cT^b_{\rho^+, \omega^+}$ is described uniquely in this way. In particular, given any $T^+\in\cT^b_{\rho^+, \omega^+}$, the corresponding `base tree' $T \in \cT^{b}_{\rho, \omega}$ is easily identified by removing the $n$ vertex (which, by the preceding argument, must have valence one), which is uniquely connected to some other vertex, $k$. Therefore.
    \begin{align}
        \sum_{T \in \cT^b_{\rho^+, \omega^+}} \prod_{(i \rightarrow j) \in T} w_{ij} &= \sum_{T \in \cT^b_{\rho, \omega}} \left(\sum_{k <_{b\omega} n} w_{kn} \prod_{(i \rightarrow j) \in T} w_{ij} \right)\\
        &= \sum_{k <_{b\omega} n} w_{kn} \sum_{T \in \cT^b_{\rho, \omega}}  \prod_{(i \rightarrow j) \in T} w_{ij} \\
        &=  \sum_{k <_{b\omega} n} w_{kn} \prod_{j \neq b, n} \sum_{\substack{i<_{b\rho}j\\ i<_{b\omega} j}} w_{ij} \\
        &= \prod_{j \neq b} \sum_{\substack{i<_{b\rho^+}j\\ i<_{b\omega^+} j}} w_{ij}\,,
    \end{align}
    as desired. \qed
\end{proof}

\medskip

A further key result from graph theory tells us how to sum all of the weights of spanning tree graphs of $G$. Suppose $G$ has $n$ vertices, and let $w_{ij}\in\C$ again denote the weight assigned to each edge $(i - j)$, with $w_{ij}=0$ if $(i - j)\notin E$. One then constructs the $n\times n$ \emph{weighted Laplacian matrix of} $G$, $W(G)$, with the entries
\begin{equation}
    W(G)_{ij} = \begin{cases}
        \sum_{(k - i)\in E} w_{ik} & \mathrm{if} \, i = j, \\
        - w_{ij} & \mathrm{if} \, i \neq j,
    \end{cases}\label{Lmatrix}
\end{equation}
This matrix is degenerate, as the sum of elements in each column and row vanishes, so its determinant is zero. However, the minor corresponding to removing the row and column corresponding to vertex $b$, $|W(G)^{b}_{b}|$, is generically non-vanishing.

The following is a classical result in graph theory:
\begin{thmno}[Weighted Matrix-Tree Theorem]\label{WMTthm}
The minor $|W(G)^{b}_{b}|$, obtained by deleting the $b^{\mathrm{th}}$ row and column of the weighted Laplacian matrix of $G$ is independent of $b$, and equal to
\begin{equation}
    \begin{split}
    \left|W(G)^i_i\right| &= \sum_{\substack{T\\ \mathrm{spanning}\,\,  G}} \left( \prod_{(i - j)\in E(T)} w_{ij} \right) \\
     &=\sum_{\substack{T^b\\ \mathrm{spanning}\,\,  G}} \left( \prod_{(i \to j)\in E(T^b)} w_{ij} \right)\,,
    \end{split}
\end{equation}
where the first sum is over all spanning trees of $G$, and the second sum is over all directed spanning trees of $G$ rooted at $b\in V(G)$.
\end{thmno}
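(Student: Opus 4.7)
The plan is to follow the classical Cauchy--Binet route. Orient each undirected edge of $G$ arbitrarily, and form the $n\times|E|$ signed incidence matrix $B$: for a directed edge $e=(v\to w)$, set $B_{ve}=+1$, $B_{we}=-1$, and $B_{ue}=0$ otherwise. Let $D$ be the $|E|\times|E|$ diagonal matrix with $D_{ee}=w_{ij}$ when $e$ is the oriented version of $(i-j)$. Using the symmetry $w_{ij}=w_{ji}$ for the off-diagonal entries and the definition of the diagonal entries of $W(G)$, a direct computation verifies
\begin{equation}
W(G)=B\,D\,B^{\top},
\end{equation}
with the right-hand side independent of the chosen orientation.

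I would next apply the Cauchy--Binet formula to the reduced Laplacian $W(G)^b_b=B_b\,D\,B_b^{\top}$, where $B_b$ is $B$ with its $b^{\text{th}}$ row deleted, obtaining
\begin{equation}
\left|W(G)^b_b\right|=\sum_{\substack{S\subseteq E\\|S|=n-1}}\bigl(\det B_b[S]\bigr)^2\,\prod_{e\in S}w_e,
\end{equation}
where $B_b[S]$ denotes the $(n-1)\times(n-1)$ submatrix of $B_b$ with columns indexed by $S$. The key combinatorial input is the classical claim that $\det B_b[S]=\pm 1$ when $S$ is the edge set of a spanning tree of $G$, and $\det B_b[S]=0$ otherwise. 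For a spanning tree, this follows by induction on $n$: any tree has a leaf $v\neq b$, and the column of $B_b[S]$ indexed by the unique edge at $v$ has a single nonzero entry $\pm 1$ in row $v$; cofactor expansion reduces to the same claim for the tree on $V\setminus\{v\}$. If $S$ is not a spanning tree, then the subgraph $(V,S)$ is disconnected and has a component $C$ with $b\notin C$; summing the rows of $B_b[S]$ indexed by $C$ yields the zero row (each edge of $S$ either lies entirely inside $C$, contributing $+1$ and $-1$, or has no endpoint in $C$), exhibiting a row dependence and forcing $\det B_b[S]=0$.

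Combining these observations yields
\begin{equation}
\left|W(G)^b_b\right|=\sum_{T\text{ spanning tree of }G}\,\prod_{(i-j)\in E(T)}w_{ij},
\end{equation}
which is manifestly independent of $b$, proving the first equality. For the second equality I would invoke the bijection already described earlier in the excerpt: for any spanning tree $T$ of $G$ and any root $b$, there is a unique orientation of $T$ in which every vertex $i\neq b$ has exactly one outgoing edge (the first edge along the unique $i\to b$ path in $T$). This puts undirected spanning trees of $G$ in bijection with $b$-rooted directed spanning trees, and the symmetry $w_{ij}=w_{ji}$ guarantees that the product of edge weights is preserved under this bijection.

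The main obstacle is the characterisation of $\det B_b[S]$ -- both the $\pm 1$ evaluation on spanning trees and the vanishing on non-spanning-tree subgraphs. Both are standard facts in algebraic combinatorics, but they should be spelled out explicitly to keep the derivation self-contained and to make transparent where the hypothesis of symmetric weights is actually used.
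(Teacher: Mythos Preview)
Your proof is correct and follows the standard Cauchy--Binet route found in most algebraic combinatorics textbooks. However, the paper does not actually prove this theorem: it is simply stated as ``a classical result in graph theory'' and then used as a black box in the subsequent derivations, so there is no paper proof to compare against. Your argument would serve well as a self-contained justification that the paper omits.
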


\medskip

A further refinement of directed tree graphs is the concept of a \emph{rooted binary tree}. These are tree graphs, with a choice of root inducing a direction, where each vertex (other than the root) in the graph has exactly two incoming edges (hence the `binary') and one outgoing edge; the root itself has only a single incoming edge. As such, rooted binary trees are naturally labelled by the (semi-ordered) set of leaves $L$, which encode the (unique) external edge structure of the graph. By labelling the external edges of the binary tree, the label $L$ can be encoded using the nested bracket notation, as illustrated in Figure \ref{fig:totalBracketing}.
\begin{figure}
    \centering
    \includegraphics{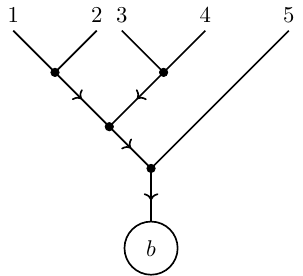}
    \caption{The binary tree corresponding to the complete bracketing $[[[1, 2], [3, 4]], 5]$ of the leaves $L = \{1, 2, 3, 4, 5\}$, with root $b$.}
    \label{fig:totalBracketing}
\end{figure}

There is a natural notion of weighting which can be associated to a rooted binary tree. Suppose a rooted binary tree has $n-1$ external edges, which we label $\{1,\ldots,n-1\}$, and let $E$ denote the set of \emph{internal} edges in the rooted binary tree. To any internal edge $e\in E$, we can assign a weight $w_e\in\C$. Now, any $e\in E$ is uniquely identified with the set $I(e)\subset L$ of leaves in the tree which are `upstream' (with respect to the direction of the graph) from $e$. Equivalently, $I(e)$ labels the (smaller) rooted binary tree rooted at the endpoint of the edge $e$. We can then assign
\be
w_e = \sum_{\{i, j\} \subset I(e)} w_{ij}
\ee
in terms of some symmetric $w_{ij} \in \mathbb{C}$. We denote the product of weights for a rooted binary tree $T$ by
\be\label{totweight}
w_T=\prod_{e\in E}w_e\,,
\ee
and the sum of all weights by
\be\label{sumweight}
w_{\mathrm{total}}=\sum_{e\in E}w_e\,.
\ee
The sum $w_{\mathrm{total}}$ can be viewed as the total weight of the rooted binary tree `flowing' into the root.

Just like tree graphs, binary trees can also be \emph{compatible} with pairs of orderings on $L$. A binary tree $T$ is compatible with the two orderings $\rho$, $\omega$ on $L$ if it is planar with respect to both. In practice, it is possible to algorithmically construct all binary trees compatible with two given orderings (cf., \cite{Cachazo:2013iea}). We denote the set of binary trees, rooted at the vertex $b$, which are compatible with both $\rho$ and $\omega$ by $\mathcal{BT}^{b}_{\rho,\omega}$. There is also a \emph{sign} associated with each pair of orderings $\rho,\,\omega$, given by the number of times the ordering $\omega$ winds around $\rho$ (see~\cite{Mizera:2016jhj} for further discussion).

The following result is an easy generalization of one obtained in~\cite{Frost:2020eoa} for a specific class of weightings associated with cubic Feynman diagrams:
\begin{proposition}\label{propInv}
    Let $\rho$, $\omega$ be orderings on the letters $\{2,\ldots,n-1\}$ and $w_{ij}=w_{ji}\in \C$. Then the two maps
    \be\label{KLTinvS}
S[1\rho|1\omega]=\sum_{T\in\mathcal{T}^{n}_{1\rho,1\omega}}\prod_{(i\rightarrow j)\in E(T)}w_{ij}\,,
    \ee
    \be\label{KLTinvT}
    T[1\rho|1\omega]=\pm\frac{1}{w_{\mathrm{total}}}\,\sum_{T\in\mathcal{BT}^{n}_{1\rho,1\omega}}\frac{1}{w_{T}}\,,
    \ee
    on the space of orderings are inverses of each other.
\end{proposition}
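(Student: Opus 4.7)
The plan is to prove the matrix identity $\sum_\omega S[1\rho|1\omega]\,T[1\omega|1\sigma] = \delta_{\rho\sigma}$ (up to the $\pm$ sign in $T$) by induction on $n$, the number of vertices. The base case $n=3$ is a short direct verification: both matrices are $1\times 1$ and the identity follows by a quick partial-fraction check, which fixes the overall sign convention for the inductive hypothesis.

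For the inductive step, I would invoke Proposition~\ref{compTrees} to replace the tree-sum in $S$ by its product form
\begin{equation}
S[1\rho|1\omega] = \prod_{j\neq n}\sigma_j(\rho,\omega), \qquad \sigma_j(\rho,\omega) := \sum_{\substack{i<_{n,1\rho}\,j\\ i<_{n,1\omega}\,j}}w_{ij},
\end{equation}
and then reorganize $\sum_\omega S[1\rho|1\omega]\,T[1\omega|1\sigma]$ by first fixing a binary tree $T_B \in \mathcal{BT}^n_{1\omega,1\sigma}$ and summing over orderings $\omega$ compatible with both $T_B$ and the factorization of $S$. Every binary tree with leaves $\{1,\ldots,n-1\}$ admits a canonical recursive decomposition at its top internal edge, splitting the leaves into complementary subsets $A\sqcup B = \{1,\ldots,n-1\}$ on which smaller binary subtrees reside. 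This decomposition is the engine of the induction: each subtree contributes a subsum to which the inductive hypothesis applies at lower $n$.

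The central combinatorial lemma would be a telescoping identity: for fixed $T_B$, the product $\prod_{j\neq n}\sigma_j(\rho,\omega)$ can be rearranged (using the recursive $A\sqcup B$ split) so that the factor associated with each internal edge $e\in E(T_B)$ matches the denominator $w_e = \sum_{\{i,j\}\subset I(e)} w_{ij}$ appearing in $1/w_{T_B}$. After pairing these factors off and multiplying by the prefactor $1/w_{\mathrm{total}}$, only the diagonal term in $(\rho,\sigma)$ survives and produces the required Kronecker delta. This structure is a natural generalization of the KLT--BAS inversion proven in~\cite{Frost:2020eoa} from Mandelstam weights to arbitrary symmetric $w_{ij}$.

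The principal obstacle will be tracking the $\pm$ sign determined by the winding number of $\omega$ around $\rho$, and ensuring that the recursive binary-tree decomposition meshes cleanly with the vertex-by-vertex factorization of $S$. A careful choice of pivot leaf (for instance, always the leaf adjacent to the root-most internal edge) and a lexicographic organization of the orderings should close the induction without overcounting; the bulk of the technical work will lie in verifying these bookkeeping details carefully.
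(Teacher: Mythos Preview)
The paper does not actually supply a proof of Proposition~\ref{propInv}; it simply states the result as ``an easy generalization of one obtained in~\cite{Frost:2020eoa} for a specific class of weightings associated with cubic Feynman diagrams,'' and then moves on to discuss its interpretation in terms of BAS Feynman diagrams. So there is no paper-proof to compare against beyond this citation.

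Your proposal is precisely in the spirit of what the paper invokes: you plan to rerun the Frost argument with Mandelstam invariants replaced by generic symmetric weights $w_{ij}$, using the product form from Proposition~\ref{compTrees} and the recursive binary-tree split to drive an induction on $n$. That is the natural route, and since nothing in Frost's argument is special to the kinematic case $w_{ij}=s_{ij}$, the generalization is indeed mechanical. Your acknowledgement that the sign and bookkeeping at the binary-tree split constitute the real work is accurate; what you have written is a correct strategy rather than a complete proof, but it is the same strategy the paper is pointing to.
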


\medskip

The genesis of this result lies in the close relationship between weighted, rooted binary trees and the tree-level Feynman diagrams of \emph{biadjoint scalar} (BAS) \emph{theory}, a theory of massless scalars valued in the tensor product of two Lie algebras, interacting cubically via the structure constants~\cite{Cachazo:2013iea, Mafra:2016ltu, Mafra:2020qst}. The tree-level scattering amplitudes of BAS theory can be decomposed into doubly-colour-ordered partial amplitudes $m_{n}[\rho|\omega]$, and the Feynman diagrams contributing to such a partial amplitude correspond to the rooted binary trees in $\mathcal{BT}^{n}_{\rho,\omega}$, where the root is chosen (without loss of generality) to correspond to the $n^{\mathrm{th}}$ external scalar. 

The weights naturally assigned to these trees are simply the Mandelstam invariants corresponding to the momenta flowing through each internal edge of the Feynman diagram. In particular, if $\{k^{\mu}_i\}_{i=1,\ldots,n}$ are the massless external momenta for the $n$-point BAS scattering process, let
\be\label{mandle1}
k_{I}^{\mu}:=\sum_{i\in I}k_{i}^{\mu}\,.
\ee
Then (as discussed above), the edge $e$ in the binary tree is equivalently labeled by some $I(e)\subset L$ corresponding to the `leaves' contributing to the edge, and one can make the assignment 
\be\label{BASweights}
 w_I\equiv s_I=\sum_{\{i,j\}\subset I(e)}s_{ij}\,, \qquad s_{ij}:=2\,k_i\cdot k_j\,,
\ee
so that
\be\label{BASweight2}
2 \,k_{I}\cdot k_{J}=\sum_{i\in I,\,j\in J}s_{ij}\,.
\ee
In this language, Proposition~\ref{propInv} is the statement of the relation between the tree-level KLT kernel, encoded by the map $S$ in \eqref{KLTinvS}, and the tree-level partial amplitudes, encoded by
\be\label{BAStree0}
m_{n}[1\rho n|1\omega n]=w_{\mathrm{total}}\,T[1\rho|1\omega]\,,
\ee
with $T$ defined by \eqref{KLTinvT}.

 \begin{figure}
 \centering
 \includegraphics{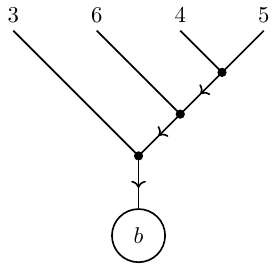} \qquad \qquad
 \includegraphics{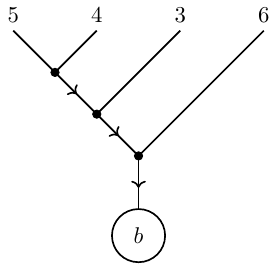}
\caption{The two binary trees compatible with the orderings $(3456|3546)$, rooted at $b$. Legs with arrows are those for which weights are naturally assigned taking contributions from the total momentum of the subtree it connects. Both of these come with a coefficient $-1$ from the winding number.}
 \end{figure}


\section{Helicity-graded integral kernel} \label{sec:deriv}

We now use the machinery of graph theory, reviewed in Section~\ref{graphTheory}, to re-write the Cachazo-Skinner formula \eqref{CSform} in a fashion that manifests a KLT-like double copy structure. In particular, the result is expressed in terms of an integral kernel, which is graded by the helicity configuration and defined as an integrand factor to be integrated over the moduli space of rational maps from the Riemann sphere to twistor space of the appropriate degree (fixed by the helicity configuration). The result builds on the colour-kinematics dual numerators derived for the MHV sector in~\cite{Frost:2021qju}, but extended to identify a double copy kernel and generalised to all helicity configurations.


\subsection{Reduced determinants from orderings and directed trees}

The key idea is to rewrite the reduced determinants appearing in the Cachazo-Skinner formula, namely $\mathrm{det}'(\HH)$ and $\mathrm{det}'(\HH^\vee)$, into a form resembling a minor of a weighted Laplacian matrix\footnote{The connection with weighted Laplacian matrices has been noted many times before, particularly for the reduced determinant of $\HH$ (cf., \cite{Feng:2012sy,Adamo:2012xe,Skinner:2013xp,Adamo:2021bej}).}. Then the weighted matrix-tree theorem will enable us to translate the formula into the language of sums over weighted graphs, where we can apply further results from Section~\ref{graphTheory}.

To begin, consider the matrix $\HH$, whose entries are given by \eqref{HMat}. Using the definition of the reduced determinant \eqref{redHdet} and elementary properties of determinants, it is straightforward to show that $\mathrm{det}'(\HH)$ can be written as
\be\label{reddetH}
\det'(\mathbb{H}) =\frac{|B_b^b|}{|\bolth|^2}\, \prod_{\substack{j \in \bolh \\ l \in\bolth} }\frac{1}{(j\,l)^2}\,\prod_{k=1}^{n}\D\sigma_k\,,
\ee
where $b\in\bolh$ is arbitrary and $B$ is a $(n-d-1)\times (n-d-1)$ matrix with entries 
\begin{equation}\label{BMat}
\begin{split}
B_{ij} &= \mathbb{H}_{ij}\, \prod_{l \in \bolth} (i\,l)\, (j\,l)\,, \qquad i,j\in\bolh\,,\, i\neq j\,,\\
B_{ii} &= - \sum_{\substack{j\in\bolh \\ j\neq i}} B_{ij}\,, \qquad i\in\bolh\,.
\end{split}
\end{equation}
In particular, $B$ has the form of a weighted Laplacian matrix for the simple graph on the set of vertices $\bolh$, where the weight corresponding to the edge $(i - j)$, for $i,j\in\bolh$, is
\be\label{Bweight}
B_{ij}:=-t_i\,t_j\,\frac{[i\,j]}{(i\,j)}\,\prod_{l \in \bolth} (i\,l)\, (j\,l)\,.
\ee
Note that, as required, this is symmetric (i.e., $B_{ij}=B_{ji}$). 

By the weighted matrix tree theorem, this means that the reduced determinant of $\HH$ is equal to
\be\label{HMT1}
\det'(\mathbb{H})=\frac{1}{|\bolth|^2}\,\prod_{\substack{k\in\bolh \\ l\in\bolth}}\frac{1}{(k\,l)^2}\sum_{\substack{T^{b} \\ \mathrm{spanning}\,\,\bolh}}\prod_{(i\to j)} B_{ij}\,,
\ee
where we have stripped off an overall factor of $\prod_{i=1}^n\D\sigma_i$ and the sum is over directed trees on the set $\bolh$, with the direction fixed by the (arbitrary) choice of $b\in\bolh$. Now, for each $T^b$ -- that is, for each directed tree rooted at $b$ -- each vertex $i\in\bolh\setminus\{b\}$ appears only once as the source of an edge $(i\to j)$, as every vertex other than $b$ has a unique outgoing edge. Consequently, it follows that
\be\label{HMT2}
\begin{split}
\det'(\mathbb{H})&=\frac{1}{|\bolth|^2}\,\prod_{\substack{k\in\bolh \\ l\in\bolth}}\frac{1}{(k\,l)^2}\,\prod_{\substack{m\in\bolth \\ a\in\bolh\setminus\{b\}}}t_a\,(a\,m)\,\sum_{\substack{T^{b} \\ \mathrm{spanning}\,\,\bolh}}\prod_{(i\to j)}\left(-t_j\,\frac{[i\,j]}{(i\,j)}\,\prod_{l\in\bolth}(j\,l)\right) \\
&=\frac{1}{|\bolth|^2}\,\prod_{l\in\bolth}\frac{1}{(b\,l)^2}\,\prod_{\substack{m\in\bolth \\ a\in\bolh\setminus\{b\}}}\frac{t_a}{(a\,m)}\,\sum_{\substack{T^{b} \\ \mathrm{spanning}\,\,\bolh}}\prod_{(i\to j)}\left(-t_j\,\frac{[i\,j]}{(i\,j)}\,\prod_{l\in\bolth}(j\,l)\right) \\
&=\frac{1}{|\bolth|^2}\,\prod_{l\in\bolth}\frac{1}{(b\,l)^2}\,\sum_{\substack{T^{b} \\ \mathrm{spanning}\,\,\bolh}}\prod_{(i\to j)}\left(-t_i\,t_j\,\frac{[i\,j]}{(i\,j)}\,\prod_{l\in\bolth}\frac{(j\,l)}{(i\,l)}\right)\,.
\end{split}
\ee
Now, pick any two distinct $x,y\in\bolth$; these can be used to trivially re-express the reduced determinant as
\begin{multline}\label{HMT3}
\det'(\mathbb{H})=\frac{1}{|\bolth|^2}\,\prod_{l\in\bolth}\frac{1}{(b\,l)^2}\,\sum_{\substack{T^{b} \\ \mathrm{spanning}\,\,\bolh}}\prod_{(i\to j)}\left(-t_i\,t_j\,[i\,j]\,(i\,j)\,\prod_{l\in\bolth\setminus\{x,y\}}\frac{(j\,l)}{(i\,l)}\right) \\
\times\,\prod_{(i\to j)}\frac{(j\,x)}{(i\,j)\,(i\,x)}\,\prod_{(i\to j)}\frac{(j\,y)}{(i\,j)\,(i\,y)}\,,
\end{multline}
which can now be further processed using Proposition~\ref{PTprop}.

In particular, applying \eqref{PTpropident} to each of the factors in the second line of \eqref{HMT3}, we arrive at an expression for the reduced determinant in terms of the broken Parke-Taylor factors \eqref{brokePT}:
\begin{multline}\label{HMT4}
\det'(\mathbb{H})=\frac{1}{|\bolth|^2}\,\prod_{l\in\bolth}\frac{1}{(b\,l)^2}\,\sum_{\substack{T^{b} \\ \mathrm{spanning}\,\,\bolh}}\prod_{(i\to j)}\left(-t_i\,t_j\,[i\,j]\,(i\,j)\,\prod_{l\in\bolth\setminus\{x,y\}}\frac{(j\,l)}{(i\,l)}\right) \\
\times\,\sum_{\rho\in\Omega_{T^b(\bolh)}}\mathrm{pt}_{n-d-1}[b\rho]\,\frac{(b\,x)}{(\rho^{*}\,x)}\,\sum_{\omega\in\Omega_{T^b(\bolh)}}\mathrm{pt}_{n-d-1}[b\omega]\,\frac{(b\,y)}{(\omega^*\,y)}\,,
\end{multline}
where the sums in the second line are over orderings on $\bolh\setminus\{b\}$ compatible with the directed, rooted tree $T^b$, and $\rho^*,\omega^*\in\bolh\setminus\{b\}$ denote the last elements of the orderings $\rho$ and $\omega$, respectively. The summations over graphs and compatible orderings can now be swapped to give
\begin{multline}\label{HMT5}
\det'(\mathbb{H})=\frac{(b\,x)\,(b\,y)}{\left|\bolth\cup\{b\}\right|^2}\,\sum_{\rho,\omega\in\mathcal{S}(\bolh\setminus\{b\})}\mathrm{pt}_{n-d}[b\rho x]\,\mathrm{pt}_{n-d}[b\omega y] \\
\times\sum_{T\in\mathcal{T}^{b}_{\rho,\omega}}\prod_{(i\to j)\in E(T)}\left(-t_i\,t_j\,[i\,j]\,(i\,j)\,\prod_{l\in\bolth\setminus\{x,y\}}\frac{(j\,l)}{(i\,l)}\right)\,,
\end{multline}
where $\mathcal{S}(\bolh\setminus\{b\})\cong S_{n-d-2}$ is the group of permutations on the $n-d-2$ letters of $\bolh\setminus\{b\}$. Note that the summation in the second line is now over directed trees on $\bolh$ rooted at $b$ which are compatible with the orderings $\rho$ and $\omega$.

\medskip

A sequence of similar manipulations can also be performed to express the reduced determinant of $\HH^{\vee}$, with entries \eqref{dHmat}, in terms of a sum of orderings and compatible directed trees. The steps closely follow those we have gone through for $\mathrm{det}'(\HH)$; only the initial part of the argument -- connecting the reduced determinant of $\HH^{\vee}$ to the determinant of a weighted Laplacian matrix -- differs significantly.

This is primarily because the form of the diagonal entries of $\HH^{\vee}$ does not, as stated in \eqref{dHmat}, take the form of a sum, as needed for the general form of a weighted Laplacian matrix. However, there is a non-trivial identity\footnote{The right-hand side of this identity is actually the form of the diagonal entries for $\HH^{\vee}$ originally used in the formulation of the Cachazo-Skinner formula~\cite{Cachazo:2012kg,Cachazo:2012pz}.} (first proven in a slightly more general form in~\cite{Skinner:2013xp}) for each $i\in\bolth$: 
\be\label{dHdiagi}
\frac{\la\lambda(\sigma_i)\,\d\lambda(\sigma_i)\ra}{\D\sigma_i}=\sum_{\substack{j\in\bolth \\ j\neq i}}\frac{\la\lambda(\sigma_i)\,\lambda(\sigma_j)\ra}{(i\,j)}\,\prod_{k\in\bolth\setminus\{i,j\}}\frac{(k\,i)}{(k\,j)}\,,
\ee
which enables us to perform the necessary manipulations. Indeed, using the basic properties of determinants and the definition \eqref{reddetH} of $\mathrm{det}'(\HH^{\vee})$, one finds that
\be\label{dHMT1}
\det'(\mathbb{H}^{\vee}) = |\bolth|^2\,\left|B^{\vee\,a}_a\right|\,,
\ee
where $B^{\vee}$ is a $(d+1)\times(d+1)$ matrix with entries
\be\label{BveeMat}
\begin{split}
B^{\vee}_{ij}&=-\HH^{\vee}_{ij}\,\prod_{k\in\bolth\setminus\{i\}}\frac{1}{(i\,k)}\,\prod_{l\in\bolth\setminus\{j\}}\frac{1}{(l\,j)}\,, \qquad i,j\in\bolth\,,\, i\neq j\,, \\
B^{\vee}_{ii}&=-\sum_{\substack{j\in\bolth \\ j\neq i}}B^{\vee}_{ij}\,, \qquad i\in\bolth\,.
\end{split}
\ee
In particular, $B^{\vee}$ is a symmetric matrix of corank 1, with the structure of a weighted Laplacian matrix for the simple graph on the set of vertices $\bolth$. Note that this matrix carries projective weight on $\P^1$: each entry $B_{ij}^{\vee}$ has homogeneity $-2$ each $\sigma_i$, for $i\in\bolth$.

At this point, one can apply the weighted matrix tree theorem, Proposition~\ref{PTprop} and a bit of algebra to arrive at the identity
\begin{multline}\label{dHMT2}
\mathrm{det}'(\HH^\vee)=\frac{(a\,s)\,(a\,t)}{\left|\bolth\setminus\{a\}\right|^2}\sum_{\bar{\rho},\bar{\omega}\in\mathcal{S}(\bolth\setminus\{a\})}\mathrm{pt}_{d+2}[a\bar{\rho}s]\,\mathrm{pt}_{d+2}[a\bar{\omega}t] \\
\times\sum_{\bar{T}\in\mathcal{T}^{a}_{\bar{\rho},\bar{\omega}}}\prod_{(i\to j)\in E(\bar{T})}\left(\la\lambda(\sigma_i)\,\lambda(\sigma_j)\ra\,(i\,j)\,\prod_{k\in(\bolth\cup\{s,t\})\setminus\{i,j\}}\frac{(k\,i)}{(k\,j)}\right)\,,
\end{multline}
where $s,t\in\P^1$ are arbitrary points which do not coincide with the vertices of $\bolth$, and $\mathcal{S}(\bolth\setminus\{a\})\cong S_{d}$ is the group of permutations on the $d$ letters of $\bolth\setminus\{a\}$.


\subsection{The momentum kernel}

At this point, the expressions \eqref{HMT5} and \eqref{dHMT2} lead to a new representation of the Cachazo-Skinner formula for the helicity-graded, tree-level graviton S-matrix. \emph{A priori}, it may seem that all that we have accomplished at this point is to turn the compact, manifestly permutation-invariant integrand of \eqref{CSform} into an unwieldy mess of sums over permutations and compatible tree graphs. However, the utility of this description lies in the fact that the broken Parke-Taylor factors appearing in \eqref{HMT5} and \eqref{dHMT2} can be auspiciously combined.

In particular, note that the points $x,y\in\bolth$ appearing in \eqref{HMT5} and $s,t\in\P^1$ appearing in \eqref{dHMT2} are arbitrarily chosen. So without loss of generality, we can set $s=b\in\bolh$ and $x=a\in\bolth$. Then making use of the definition \eqref{brokePT}, it follows that 
\begin{multline}\label{btofPT}
\mathrm{pt}_{n-d}[b\rho a]\,\mathrm{pt}_{n-d}[b\omega y]\,\mathrm{pt}_{d+2}[a\bar{\rho}b]\,\mathrm{pt}_{d+2}[a\bar{\omega}t] = \\
\frac{(a\,b)\,(\omega^*\,\bar{\omega}^*)}{(y\,\omega^*)\,(\bar{\omega}^*\,t)}\,\frac{\mathrm{PT}_{n}[a\bar{\rho}b\rho]\,\mathrm{PT}_{n}[\bar{\omega}^{\mathrm{T}}ab\omega]}{\prod_{i=1}^{n}\D\sigma_i^2}\,,
\end{multline}
where $\omega^*$ and $\bar{\omega}^*$ denote the final entries in the orderings $\omega$ and $\bar{\omega}$, respectively.

This enables us to re-write the tree-level N$^{d-1}$MHV graviton amplitude as
\begin{multline}\label{altCS1}
\cM_{n,d}=\sum_{\substack{b\rho,b\omega\in\mathcal{S}(\bolh) \\ a\bar{\rho},a\bar{\omega}\in\mathcal{S}(\bolth)}}\int\d\mu_{n,d}\,|\bolth|^{4}\,\frac{(a\,b)\,(\omega^*\,\bar{\omega}^*)\,(b\,y)\,(a\,t)}{(y\,\omega^*)\,(\bar{\omega}^*\,t)} \prod_{k\in\bolth\setminus\{a\}}\frac{(k\,a)^2}{(k\,b)^2}\,\fullpt_{n}[a\bar{\rho}b\rho] \\
\fullpt_{n}[\bar{\omega}^{\mathrm{T}}ab\omega]\left[\sum_{T\in\mathcal{T}^{b}_{\rho,\omega}}\prod_{(i\to j)\in E(T)}\left(-t_i\,t_j\,[i\,j]\,(i\,j)\,\prod_{l\in\bolth\setminus\{a,y\}}\frac{(j\,l)}{(i\,l)}\right)\right] \\
\left[\sum_{\bar{T}\in\mathcal{T}^{a}_{\bar{\rho},\bar{\omega}}}\prod_{(i\to j)\in E(\bar{T})}\left(\la\lambda(\sigma_i)\,\lambda(\sigma_j)\ra\,(i\,j)\prod_{k\in(\bolth\cup\{b,t\})\setminus\{i,j\}}\frac{(k\,i)}{(k\,j)}\right)\right]\prod_{i\in\bolh}h_i \prod_{j\in\bolth}\tilde{h}_j\,,
\end{multline}
where
\be\label{ndmeasure}
\d\mu_{n,d}:=\frac{\d\mu_d}{\prod_{i=1}^{n}\D\sigma_i}\,,
\ee
and we have abbreviated $h_i\equiv h_i(Z(\sigma_i))$, $\tilde{h}_j\equiv\tilde{h}_j(Z(\sigma_j))$. By further manipulating the summations over tree graphs, this can be equivalently written as
\begin{multline}\label{altCS2}
\cM_{n,d}=\sum_{\substack{b\rho,b\omega\in\mathcal{S}(\bolh) \\ a\bar{\rho},a\bar{\omega}\in\mathcal{S}(\bolth)}}\int\d\mu_{n,d}\,|\bolth|^{8}\,\frac{(a\,b)\,(\omega^*\,\bar{\omega}^*)\,(b\,y)\,(a\,t)\,(y\,t)^2}{(y\,\omega^*)\,(\bar{\omega}^*\,t)} \prod_{\substack{k\in\bolh\setminus\{b,t\} \\ l\in\bolth\setminus\{a,y\}}}\frac{1}{(k\,l)^2}\,\fullpt_{n}[a\bar{\rho}b\rho] \\
\times\fullpt_{n}[\bar{\omega}^{\mathrm{T}}ab\omega]\left[\sum_{T\in\mathcal{T}^{b}_{\rho,\omega}}\prod_{(i\to j)\in E(T)}\left(-t_i\,t_j\,[i\,j]\,(i\,j)\,\prod_{l\in\bolth\setminus\{a,y\}}(i\,l)\,(j\,l)\right)\right] \\
\left[\sum_{\bar{T}\in\mathcal{T}^{a}_{\bar{\rho},\bar{\omega}}}\prod_{(i\to j)\in E(\bar{T})}\left(\frac{\la\lambda(\sigma_i)\,\lambda(\sigma_j)\ra}{(i\,j)}\prod_{k\in(\bolth\cup\{b,t\})\setminus\{i,j\}}\frac{1}{(k\,i)\,(k\,j)}\right)\right]\prod_{i\in\bolh}h_i \prod_{j\in\bolth}\tilde{h}_j\,.
\end{multline}
At this stage, the advantage of processing the Cachazo-Skinner formula in this way becomes clear: there is now a double copy structure apparent in the formula. In particular, the integrand now contains \emph{two} copies of the $\P^1$ Parke-Taylor factor $\fullpt_n$, each of which defines the integrand of the RSVW formula for the tree-level gluon S-matrix, glued together with an integral `kernel' and summed over a basis of colour-orderings.

To see this more explicitly, define
\be\label{genweightsH}
\phi_{ij}:=-t_i\,t_j\,[i\,j]\,(i\,j)\,\prod_{l\in\bolth\setminus\{a,y\}}(i\,l)\,(j\,l)\,, \qquad i,j\in\bolh\,,
\ee
which is weightless in each $i,j\in\bolh$ and weight $2$ in each $l\in\bolth\setminus\{a,y\}$, 
\be\label{genweightsHt}
\tilde{\phi}_{ij}:=\frac{\la\lambda(\sigma_i)\,\lambda(\sigma_j)\ra}{(i\,j)}\prod_{k\in(\bolth\cup\{b,t\})\setminus\{i,j\}}\frac{1}{(k\,i)\,(k\,j)}\,, \qquad i,j\in\bolth\,,
\ee
which carries weight $-2$ in each $k\in\bolth\cup\{b,t\}$, and
\be\label{genwfactor}
\cD[\omega,\bar{\omega}]:=\frac{(a\,b)\,(\omega^*\,\bar{\omega}^*)\,(b\,y)\,(a\,t)\,(y\,t)^2}{(y\,\omega^*)\,(\bar{\omega}^*\,t)} \prod_{\substack{k\in\bolh\setminus\{b,t\} \\ l\in\bolth\setminus\{a,y\}}}\frac{1}{(k\,l)^2}\,,
\ee
which has weight $+2$ in $\{a,b,t,y\}$, weight $-2(d-1)$ in all elements of $\bolh\setminus\{b,t\}$ and weight $-2(n-d-3)$ in all elements of $\bolth\setminus\{a,y\}$. Let
\be\label{intkernel}
S_{n,d}[\rho,\bar{\rho}|\omega,\bar{\omega}]:=\cD[\omega,\bar{\omega}]\,\left[\sum_{T\in\mathcal{T}^{b}_{\rho,\omega}}\prod_{(i\to j)\in E(T)}\phi_{ij}\right]\,\left[\sum_{\bar{T}\in\mathcal{T}^{a}_{\bar{\rho},\bar{\omega}}}\prod_{(i\to j)\in E(\bar{T})}\tilde{\phi}_{ij}\right]\,,
\ee
be the \emph{integral kernel}; this can be viewed as a linear transformation $S:\mathcal{S}(\bolh\setminus\{b\})\times\mathcal{S}(\bolth\setminus\{a\}) \to \mathcal{S}(\bolh\setminus\{b\})\times\mathcal{S}(\bolth\setminus\{a\})$ on the space of orderings.

In terms of this integral kernel, the graviton tree amplitude becomes
\be\label{altCS3}
\cM_{n,d}=\sum_{\substack{b\rho,b\omega\in\mathcal{S}(\bolh) \\ a\bar{\rho},a\bar{\omega}\in\mathcal{S}(\bolth)}}\int\d\mu_{n,d}\,|\bolth|^{8}\,\fullpt_{n}[a\bar{\rho}b\rho] \, S_{n,d}[\rho,\bar{\rho}|\omega,\bar{\omega}]\,\fullpt_{n}[\bar{\omega}^{\mathrm{T}}ab\omega] \,\prod_{i\in\bolh}h_i\,\prod_{j\in\bolth}\tilde{h}_j\,.
\ee
The double copy structure here can be made even more explicit by writing the RSVW formula for the tree-level N$^{d-1}$MHV colour-ordered partial amplitude as
\be\label{altRSVW}
\cA_{n,d}[\rho]=\int \d\mu_{d}\,\cI^{\tilde{\bolg}}_{n}[\rho]\,\prod_{i\in\bolg}a_i(Z(\sigma_i))\,\prod_{j\in\tilde{\bolg}}b_{j}(Z(\sigma_j))\,,
\ee
for the integrand
\be\label{RSVWint}
\cI^{\tilde{\bolg}}_{n}[\rho]:=|\tilde{\bolg}|^4\,\fullpt_n[\rho]\,.
\ee
In particular, the new representation \eqref{altCS3} of the Cachazo-Skinner formula is
\be\label{altCS4}
\cM_{n,d}=\sum_{\substack{b\rho,b\omega\in\mathcal{S}(\bolh) \\ a\bar{\rho},a\bar{\omega}\in\mathcal{S}(\bolth)}}\int\d\mu_{n,d}\,\cI_{n}^{\bolth}[a\bar{\rho}b\rho] \, S_{n,d}[\rho,\bar{\rho}|\omega,\bar{\omega}]\,\cI_{n}^{\bolth}[\bar{\omega}^{\mathrm{T}}ab\omega] \,\prod_{i\in\bolh}h_i\,\prod_{j\in\bolth}\tilde{h}_j\,,
\ee
which is clearly a double copy of the Yang-Mills integrand, glued together via the integral kernel.

Indeed, each of $\cI^{\bolth}_{n}[a\bar{\rho}b\rho]$ and $\cI_{n}^{\bolth}[\bar{\omega}^{\mathrm{T}}ab\omega]$ can be viewed as a vector in the $(n-d-2)!\times d!$-dimensional space of orderings $\mathcal{S}(\bolh\setminus\{b\})\times\mathcal{S}(\bolth\setminus\{a\})$. The sum over orderings in \eqref{altCS4} then simply corresponds to the multiplication of these two vectors via the square matrix $S$ corresponding to the integral kernel. 

It should be noted that this formula is independent of the choices of $\{a,b,y,t\}$. While this is far from obvious from the final expression \eqref{altCS4}, we saw that at each stage in the derivation where these choices were introduced, it was clear that they were entirely arbitrary.

\medskip

At this point, it is worth comparing the structure of \eqref{altCS4} with the more standard double copy structure arising from the usual KLT momentum kernel~\cite{Kawai:1985xq}. In particular, the KLT double copy representation of tree-level graviton scattering amplitudes takes the form
\be\label{KLTdc1}
\cM_{n,d}=\sum_{\alpha,\beta\in S_{n-3}}\cA_{n,d}[(n-1)n\alpha 1]\,S^{\mathrm{KLT}}[\alpha|\beta]\,\cA_{n,d}[1\beta(n-1)n]\,,
\ee
where $S^{\mathrm{KLT}}[\alpha|\beta]$ is the KLT momentum kernel, a rational function of the kinematic invariants given by~\cite{Bjerrum-Bohr:2010diw,Bjerrum-Bohr:2010kyi}
\be\label{KLTker}
S^{\mathrm{KLT}}[\alpha|\beta]=\prod_{i=2}^{n-2}\left(s_{1\alpha(i)}+\sum_{j>i}^{n-2}\theta_{\beta}(\alpha(i),\alpha(j))\,s_{\alpha(i)\alpha(j)}\right)\,,
\ee
where $s_{ij}:=(k_i+k_j)^2$ are the Mandelstam invariants and
\be\label{thetaorder}
\theta_{\beta}(\alpha(i),\alpha(j))=\left\{\begin{array}{l}
0 \qquad \mbox{if } \alpha(i)<\alpha(j) \mbox{ in } \beta \\
1 \qquad \mbox{otherwise }
\end{array}\right.\,.
\ee
Here, the choice of three external legs, $1$, $n-1$ and $n$ is arbitrary -- any three legs can be chosen; the point is that the graviton takes the form of a sum over two copies of a basis of $(n-3)!$ colour-ordered gluon amplitudes, multiplied together via the KLT kernel. The size of this basis can be understood as the reduction of the na\"ive number of colour-orderings, namely $n!$, to $(n-3)!$ due to the photon decoupling~\cite{Mangano:1990by}, Kleiss-Kuijff~\cite{Kleiss:1988ne,DelDuca:1999rs} and Bern-Carassco-Johansson~\cite{Bern:2008qj} relations.

By contrast, our expression \eqref{altCS4} is a sum over integrals of a basis of $(n-d-2)!\times d!$ colour-ordered gluon integrands multiplied together by the integral kernel. A closer comparison with momentum space formulations of KLT double copy becomes clear when we perform a chiral splitting of our kernel \eqref{intkernel}.


\subsection{Chirally split momentum kernel}

It is easy to see that the integral kernel \eqref{intkernel} respects a chiral splitting in terms of the underlying helicity decomposition of the external states in the scattering amplitude. In particular, let $\hat{\rho}:=\rho\cup\bar{\rho}$ be the ordering on $\{1,\ldots,n\}\setminus\{a,b\}$ (i.e., and element of $S_{n-2}$) arising from the union of the orderings $\rho$ on $\bolh\setminus\{b\}$ and $\bar{\rho}$ on $\bolth\setminus\{a\}$, with $\hat{\omega}:=\omega\cup\bar{\omega}$ defined similarly. Then the prefactor $\mathcal{D}(\omega,\bar{\omega})=\mathcal{D}(\hat{\omega})$ can be viewed as a diagonal matrix $D$ acting as a linear transformation on $S_{n-2}$:
\be\label{diagD}
D=\mathrm{diag}\left(\mathcal{D}(\hat{\omega})\right)\,,
\ee
whose inverse $D^{-1}$ is simply given by inverting $\mathcal{D}(\hat{\omega})$ for each ordering.

Then we define the \emph{reduced kernel} $\bbS_{n,d}:=S_{n,d}\,D^{-1}$, which admits a natural chiral splitting in terms of orderings on $\bolth$ and $\bolh$ separately. To see this, define the negative helicity kernel as
\be\label{nhmker}
\bbS_{\bolth}[\bar{\rho}|\bar{\omega}]:=\sum_{\bar{T}\in\mathcal{T}^{a}_{\bar{\rho},\bar{\omega}}}\prod_{(i\to j)\in E(\bar{T})}\tilde{\phi}_{ij}\,,
\ee
and the positive helicity kernel as
\be\label{phmker}
\bbS_{\bolh}[\rho|\omega]:=\sum_{T\in\mathcal{T}^{b}_{\rho,\omega}}\prod_{(i\to j)\in E(T)}\phi_{ij}\,,
\ee
upon which the reduced kernel splits as
\be\label{splitker}
\bbS_{n,d}=\bbS_{\bolth}\otimes \bbS_{\bolh}\,.
\ee
In particular, the reduced kernel admits a chiral splitting that respects the underlying decomposition of sums over (partial) colour-orderings of the single-copy gluon amplitudes.

Our results so far can be summarized as the following:
\begin{thm}\label{thm:PT-KLT}
The tree-level S-matrix of general relativity in Minkowski spacetime admits the helicity-graded double copy representation:
\be\label{PT-KLT}
\cM_{n,d}=\sum_{\substack{b\rho,b\omega\in\mathcal{S}(\bolh) \\ a\bar{\rho},a\bar{\omega}\in\mathcal{S}(\bolth)}}\int\d\mu_{n,d}\,\cI_{n}^{\bolth}[a\bar{\rho}b\rho] \, S_{n,d}[\hat{\rho}|\hat{\omega}]\,\cI_{n}^{\bolth}[\bar{\omega}^{\mathrm{T}}ab\omega] \,\prod_{i\in\bolh}h_i\,\prod_{j\in\bolth}\tilde{h}_j\,,
\ee
in terms of the integral kernel $S_{n,d}$, which itself obeys a chiral splitting:
\be\label{ikchiral}
S_{n,d}[\hat{\rho}|\hat{\omega}]=\mathcal{D}(\hat{\omega})\,\bbS_{\bolth}[\bar{\rho}|\bar{\omega}]\,\bbS_{\bolh}[\rho|\omega]\,,
\ee
in terms of $\bbS_{\bolth}$ and $\bbS_{\bolh}$ the negative and positive helicity kernels \eqref{nhmker} and \eqref{phmker}, respectively.
\end{thm}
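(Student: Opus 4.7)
The plan is to recast both reduced determinants in the Cachazo-Skinner integrand \eqref{CSform} as weighted sums over directed spanning trees and then use Proposition~\ref{PTprop} to turn these into sums of broken Parke-Taylor factors times tree-weights, which can ultimately be combined into the two RSVW integrand copies demanded by the theorem.

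Concretely, I would begin with $\mathrm{det}'(\HH)$: elementary determinant manipulations pull out the $(j\,l)$ denominators so that the reduced determinant becomes the $b^{\mathrm{th}}$ minor of a weighted Laplacian matrix $B$ on the vertex set $\bolh$ with symmetric off-diagonal weights \eqref{Bweight}. Applying the weighted matrix-tree theorem converts this minor into a sum over directed spanning trees rooted at $b$, after which I would reorganise the edge weights to isolate two factors of the form $\tfrac{(j\,x)}{(i\,j)(i\,x)}$ and $\tfrac{(j\,y)}{(i\,j)(i\,y)}$ for any choice of $x,y\in\bolth$. Two applications of Proposition~\ref{PTprop} then convert these geometric factors into sums of broken Parke-Taylors on $\bolh\setminus\{b\}$ for orderings $\rho,\omega$ that are compatible with the directed tree, and swapping the order of summation yields \eqref{HMT5}. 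For $\mathrm{det}'(\HH^\vee)$ the strategy is analogous once the non-trivial identity \eqref{dHdiagi} is used to convert the diagonal entries of $\HH^\vee$ into the sum form appropriate to a weighted Laplacian; applying the matrix-tree theorem and Proposition~\ref{PTprop} twice (now with reference points $s,t\in\P^1$) produces \eqref{dHMT2}.

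The final step is a merging step: exploiting the arbitrariness of $x,y,s,t$, I would set $s=b\in\bolh$ and $x=a\in\bolth$, so that the four broken Parke-Taylor factors combine via \eqref{btofPT} into the two \emph{full} Parke-Taylors $\fullpt_n[a\bar\rho b\rho]$ and $\fullpt_n[\bar\omega^{\mathrm{T}}ab\omega]$, up to the residual projective prefactor that gets absorbed into $\cD[\omega,\bar\omega]$. Packaging the tree sums into $\bbS_{\bolh}$ and $\bbS_{\bolth}$ as in \eqref{phmker}-\eqref{nhmker}, collecting the factor $|\bolth|^8$ (two powers of $|\bolth|^4$ are generated along the way from the leftover Vandermondes) and multiplying by the twistor wavefunctions yields \eqref{PT-KLT}. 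The chiral splitting \eqref{ikchiral} is then immediate from the definition \eqref{intkernel}: the $\bolh$-tree sum depends only on $(\rho,\omega)$ and the $\bolth$-tree sum depends only on $(\bar\rho,\bar\omega)$, so the kernel factorises tensorially once the scalar prefactor $\cD$ is stripped off.

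The main obstacle I anticipate is not conceptual but bookkeeping: tracking the projective weights in each $\sigma_i$ through the two matrix-tree reductions, verifying that the factors of $(i\,l)$, $(i\,x)$, $(k\,i)$ etc.\ redistribute so that exactly the desired powers end up in $\phi_{ij}$, $\tilde{\phi}_{ij}$ and $\cD$, and checking that all dependence on the auxiliary reference points $\{a,b,y,t\}$ either cancels in the final sum or is demonstrably a consistent choice of representative (since the original Cachazo-Skinner integrand is manifestly independent of any such choice). A secondary subtlety is making sure that the two applications of Proposition~\ref{PTprop}, which use different reference points, interact correctly when their tree sums are combined into a single sum over $\mathcal{T}^{b}_{\rho,\omega}$ via the intersection construction following Proposition~\ref{compTrees}.
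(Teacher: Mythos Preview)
Your proposal is correct and follows essentially the same route as the paper: rewrite each of $\mathrm{det}'(\HH)$ and $\mathrm{det}'(\HH^\vee)$ as a weighted-Laplacian minor (using \eqref{dHdiagi} for the latter), apply the matrix-tree theorem and then Proposition~\ref{PTprop} twice with auxiliary points $(x,y)$ and $(s,t)$, swap the tree/ordering sums to land on \eqref{HMT5} and \eqref{dHMT2}, and finally specialise $s=b$, $x=a$ so that the broken Parke-Taylor factors assemble into the full ones via \eqref{btofPT}, with the leftover prefactor defining $\cD$. Your anticipated bookkeeping issues are real but routine, and the ``intersection construction'' you flag is simply the interchange of the sum over trees with the two sums over compatible orderings (a double-counting step), not a genuine application of Proposition~\ref{compTrees}, which only enters later when matching the MHV case to the standard KLT kernel.
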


\medskip

The chiral splitting \eqref{ikchiral} of the integral kernel helps to make the connection between the representation \eqref{PT-KLT} and other field theory KLT double copy representations in the literature more apparent. Although the standard version of the KLT double copy is as given in \eqref{KLTdc1}, a generalization of this formula in terms of `shortened' KLT kernels was derived in~\cite{Bjerrum-Bohr:2010diw,Bjerrum-Bohr:2010mtb,Bjerrum-Bohr:2010kyi,Bjerrum-Bohr:2010pnr} using (the field theory limit of) string monodromy relations\footnote{We will use one particular version of this generalization, a special case of which appeared much earlier in~\cite{Bern:1998ug}.}. This generalized KLT double copy can also be stated in helicity-graded form as.
\begin{multline}\label{gKLTdc1}
\cM_{n,d}=(-1)^{n-3}\sum_{\hat{\alpha}\in S_{n-3}}\:\sum_{\substack{\rho\in\mathcal{S}(\bolh\setminus\{b,n\}) \\ \bar{\omega}\in\mathcal{S}(\bolth\setminus\{a\})}} \cA_{n,d}[a\hat{\alpha} bn]\, \bbS^{\mathrm{KLT}}[\alpha|\rho(\alpha)] \\
\times\, \widetilde{\bbS}^{\mathrm{KLT}}[\bar{\omega}(\bar{\alpha})|\bar{\alpha}]\:\cA_{n,d}[\bar{\omega}(\alpha)ab\rho(\alpha)n]\,,
\end{multline}
where the first sum is over orderings $\hat{\alpha}=\alpha\cup\bar{\alpha}$ on $n-3$ letters, decomposed into $\alpha\in S_{n-d-3}$ and $\bar{\alpha}\in S_{d}$. The chirally-split KLT momentum kernels are
\be\label{KLTkerph}
\bbS^{\mathrm{KLT}}[\alpha|\rho(\alpha)]:=\prod_{i\in\bolh\setminus\{b,n\}}\left(s_{\alpha(i)\,b}+\sum_{j>i}\theta_{\rho(\alpha)}(\alpha(i),\alpha(j))\,s_{\alpha(i)\,\alpha(j)}\right)\,,
\ee
and
\be\label{KLTkernh}
\widetilde{\bbS}^{\mathrm{KLT}}[\bar{\omega}(\bar{\alpha})|\bar{\alpha}]:=\prod_{i\in\bolth\setminus\{a\}}\left(s_{\bar{\alpha}(i)\,a}+\sum_{j<i}\theta_{\bar{\omega}(\bar{\alpha})}(\bar{\alpha}(j),\bar{\alpha}(i))\,s_{\bar{\alpha}(i)\,\bar{\alpha}(j)}\right)\,,
\ee
with the ordering function $\theta$ defined as before in \eqref{thetaorder} and $s_{ij}$ the Mandelstam invariants.

Clearly, this expression more closely resembles our representation \eqref{PT-KLT} than the generic KLT double copy \eqref{KLTdc1}. The sums over chirally-split orderings $\rho,\bar{\omega}$ also appear in our formula, although the overall sum over orderings $\hat{\alpha}\in S_{n-3}$ does not appear in our expression. Instead, we have an \emph{integral} over the moduli space of rational maps from $\P^1$ to $\PT$, as well as additional complementary sums over orderings $\bar{\rho}$ and $\omega$.

These differences are not so stark as they may seem, and it is interesting to speculate that the two formulae \eqref{PT-KLT} and \eqref{gKLTdc1} are in fact equivalent. The moduli integral in \eqref{PT-KLT} is \emph{completely localised} against the delta function constraints appearing in the twistor wavefunctions, just as it is in the RSVW and Cachazo-Skinner formulae. These constraints are a helicity-graded version of the \emph{scattering equations} -- rational constraints localizing $n$ marked points on $\P^1$ in terms of on-shell kinematic data -- so the moduli integral is really a sum over solutions to these helicity-graded scattering equations. At N$^{d-1}$MHV, the number of such solution is given by the Eulerian number $E(n - 3, d-1)$~\cite{Spradlin:2009qr, Cachazo:2013iaa, Roehrig:2017wvh}, so the total number of terms summed in \eqref{PT-KLT} is $E(n -3, d-1) \times (d\,!)^2 \times ((n - d - 2)!)^2$. 
This is generally higher than the number of term in the sums \eqref{gKLTdc1} and \eqref{KLTdc1}, however there may be some form of redundancy in the solutions owing to a version of KLT orthogonality~\cite{Cachazo:2012da, Cachazo:2013gna}.

While these speculations suggest a very close relationship between the chirally split, integral kernel manifestation of double copy given by \eqref{PT-KLT} and the more standard momentum space formulation of \eqref{gKLTdc1}, their equivalence is, for generic $d$, purely conjectural. However, for the MHV helicity configuration (corresponding to $d=1$) where \eqref{gKLTdc1} coincides with the un-graded KLT double copy \eqref{KLTdc1}, we can make this equivalence completely precise.


\subsection{Example: the MHV configuration}

The MHV helicity configuration is particularly straightforward to analyse, as the moduli integrals can be performed explicitly. In this case, the various ingredients of \eqref{PT-KLT} simplify considerably.

Fixing $d=1$, without loss of generality let $\bolth = \{1, 2\}$ and then we can choose $a = 2$, $y=1$. Furthermore, the orderings $\bar{\rho},\,\bar{\omega}$ on $\bolth\setminus\{a\}$ are now an ordering on a single letter; namely, $\bar{\rho}=\bar{\omega}=1$, and thus $\bar{\omega}^*=1$ trivially. This means hat for the $d=1$ configuration,
\be\label{MHVcD}
\mathcal{D}(\hat{\omega})=(b\,1)\,(b\,2)\,(t\,1)\,(t\,2)\,,
\ee
and
\be\label{MHVSth}
\bbS_{\bolth}[1|1]=\frac{\left\la\lambda(\sigma_1)\,\lambda(\sigma_2)\right\ra}{(1\,2)}\,\frac{1}{(b\,1)\,(b\,2)\,(t\,1)\,(t\,2)}\,.
\ee
Furthermore, the residual GL$(2,\C)$ freedom in the parametrization of the map moduli $U^{A}_{\ba}$ can be fixed by setting
\be\label{MHVmoduli}
U^{A}_{\ba}=\left(x^{\beta\dot\alpha}\,\delta_{\beta\ba},\,\delta_{\alpha\ba}\right)\,,
\ee
where this choice of gauge for the moduli identifies the anti-self-dual SL$(2,\C)$ spinor indices $\alpha,\beta,\ldots$ with the holomorphic homogeneous coordinate indices $\ba,\mathbf{b},\ldots$ of $\P^1$; the four $x^{\alpha\dot\alpha}$ are the remaining, gauge-fixed moduli of the degree one map.

With these choices, the $d=1$ double copy representation \eqref{PT-KLT} simplifies to
\begin{multline}\label{MHVKLT1}
\cM_{n,1}=\sum_{b\rho,b\omega\in\mathcal{S}(\bolh)}\int\d^{4}x\,(1\,2)^6\,\brpt_{n}[1b\rho2]\,\brpt_{n}[2b\omega1]\,\left[\sum_{T\in\mathcal{T}^{b}_{\rho,\omega}}\prod_{(i\to j)\in E(T)}-t_i\,t_j\,[i\,j]\,(i\,j)\right] \\
\times \prod_{k=1,2}t_k^{5}\,\d t_k\,\bar{\delta}^{2}(\kappa_k-t_k\,\sigma_k)\,\e^{\im\,t_k\,x^{\alpha\dot\alpha}\,\sigma_{\alpha\,k}\,\tilde{\kappa}_{\dot\alpha\,k}}\,\prod_{l=3}^{n}\frac{\d t_l}{t_l^3}\,\bar{\delta}^{2}(\kappa_l-t_l\,\sigma_l)\,\e^{\im\,t_l\,x^{\alpha\dot\alpha}\,\sigma_{\alpha\,l}\,\tilde{\kappa}_{\dot\alpha\,l}}\,.
\end{multline}
At this stage, all of the integrals can be performed explicitly against holomorphic delta functions or as simple exponential integrals, leaving
\begin{multline}\label{MHVKLT2}
\cM_{n,1}=(2\pi)^4\,\delta^{4}\!\left(\sum_{i=1}^{n}k_i\right)\,\sum_{\rho,\omega\in S_{n-3}}\hat{\cA}_{n,1}[1b\rho 2] \left[\sum_{T\in\mathcal{T}^{b}_{\rho,\omega}}\prod_{(i\to j)\in E(T)}-[i\,j]\,\la i\,j\ra\right]\hat{\cA}_{n,1}[12b\omega] \\
=(2\pi)^4\,\delta^{4}\!\left(\sum_{i=1}^{n}k_i\right)\,\sum_{\rho,\omega\in S_{n-3}}\hat{\cA}_{n,1}[1b\rho 2] \left[\sum_{T\in\mathcal{T}^{b}_{\rho,\omega}}\prod_{(i\to j)\in E(T)}s_{ij}\right]\hat{\cA}_{n,1}[12b\omega] \,,
\end{multline}
where $\hat{\cA}_{n,1}$ is the colour-ordered Parke-Taylor formula \eqref{ParkeTaylor}, stripped of its overall momentum-conserving delta functions.

Now, by Proposition~\ref{compTrees} and the results of \cite{Bjerrum-Bohr:2010pnr} it follows that
\be\label{MHVKLTKer}
\sum_{T\in\mathcal{T}^{b}_{\rho,\omega}}\prod_{(i\to j)\in E(T)}s_{ij}=\prod_{\substack{j=3 \\ i\neq b}}^{n}
\sum_{\substack{i<_{b\rho}j \\ i<_{b\omega}j}} s_{ij}=S^{\mathrm{KLT}}[\rho|\omega]\,.
\ee
In particular, it means that \eqref{MHVKLT2} is equivalent to 
\be\label{MHVKLT3}
\cM_{n,1}=(2\pi)^4\,\delta^{4}\!\left(\sum_{i=1}^{n}k_i\right)\sum_{\rho,\omega\in S_{n-3}}\hat{\cA}_{n,1}[1b\rho 2]\,S^{\mathrm{KLT}}[\rho|\omega]\,\hat{\cA}_{n,1}[12b\omega]\,,
\ee
which reproduces the known field theory KLT double copy for MHV amplitudes~\cite{Berends:1988zp}. A representation similar to this in terms of colour-kinematics dual numerators was also derived directly from the Hodges formula by Frost in~\cite{Frost:2021qju}. Here we have demonstrated that our general N$^{d-1}$MHV double copy representation \eqref{PT-KLT} is equivalent to the Berends-Giele-Kuijf formula for graviton scattering in the MHV sector.


\section{Biadjoint scalar amplitudes in twistor space} \label{sec:proof}

It is a well-known fact that the \emph{inverse} of the field theory KLT kernel encodes the tree-level S-matrix of biadjoint scalar (BAS) theory~\cite{Cachazo:2013gna, Mizera:2016jhj, Frost:2020eoa}; these tree-level scattering amplitudes can be decomposed into doubly colour-ordered partial amplitudes $m_n[\alpha|\beta]$ for $\alpha,\beta\in S_n$. In particular, written in terms of permutations $\rho,\omega$ on $n-3$ letters, 
\be\label{BAS-KLT}
m_{n}[12b\rho|21b\omega]=(2\pi)^4\,\delta^{4}\!\left(\sum_{i=1}^{n}k_i\right)\,S^{-1}_{\mathrm{KLT}}[\rho|\omega]\,,
\ee
where the field theory KLT kernel for these orderings is given by \eqref{MHVKLTKer}, and $S^{-1}_{\mathrm{KLT}}$ is its inverse, computed via Proposition~\ref{propInv}:
\be\label{KLTinverse}
S^{-1}_{\mathrm{KLT}}[\rho|\omega]=\frac{1}{s_{34\cdots n}}\,\sum_{T\in\mathcal{BT}_{b\rho,b\omega}}\frac{1}{s_T}=\frac{1}{s_{12}}\,\sum_{T\in\mathcal{BT}_{b\rho,b\omega}}\frac{1}{s_T}\,,
\ee
where the second equality follows on the support of overall momentum conservation. 

In light of this connection between inverted field theory KLT kernels and BAS tree-level amplitudes, it seems natural to ask if there is some way to encode BAS scattering amplitudes in twistor space by inverting the integral kernel \eqref{ikchiral}. Remarkably, this is indeed the case, resulting in a series of representations, graded by degree, all of which encode BAS tree amplitudes that see the degree only through their colour-orderings. We first present the resulting formula, and then prove that it is correct by looking at its factorization properties. The detailed proof is contained in Section~\ref{sec:BASproof}.


\subsection{BAS tree-amplitudes}

To invert the integral kernel $S_{n,d}[\hat{\rho}|\hat{\omega}]$, we need to invert each of its constituent parts, as linear maps on the space of ordering: $\mathcal{D}(\hat{\omega})$, $\bbS_{\bolth}[\bar{\rho}|\bar{\omega}]$ and $\bbS_{\bolh}[\rho|\omega]$. The inversion of $\mathcal{D}(\hat{\omega})$ is trivial,
\be\label{cDinverse}
\mathcal{D}^{-1}(\hat{\omega})=\frac{(y\,\omega^*)\,(\bar{\omega}^*\,t)}{(a\,b)\,(\omega^*\,\bar{\omega}^*)\,(b\,y)\,(a\,t)\,(y\,t)^2}\,\prod_{\substack{k\in\bolh\setminus\{b,t\} \\ l\in\bolth\setminus\{a,y\}}}(k\,l)^2\,,
\ee
while the inversion of the chiral factors $\bbS_{\bolth}$, $\bbS_{\bolh}$ is accomplished using Proposition~\ref{propInv}. 

This gives
\be\label{Shinv}
\bbS^{-1}_{\bolh}[\rho|\omega]=\frac{1}{\phi_{\mathrm{total}}}\,\sum_{T\in\mathcal{BT}_{b\rho,b\omega}}\frac{1}{\phi_T}\,,
\ee
and
\be\label{Sthinv}
\bbS^{-1}_{\bolth}[\bar{\rho}|\bar{\omega}]=\frac{1}{\tilde{\phi}_{\mathrm{total}}}\,\sum_{\bar{T}\in\mathcal{BT}_{a\bar{\rho},a\bar{\omega}}}\frac{1}{\tilde{\phi}_{\bar{T}}}\,.
\ee
Here, recall that $\phi_{ij}$ and $\tilde{\phi}_{ij}$ are the weights \eqref{genweightsH} and \eqref{genweightsHt}. The `total' quantities appearing as prefactors in these expressions are 
\be\label{totalphis}
\phi_{\mathrm{total}}=\sum_{\substack{i,j\in\bolh \\ i\neq j}}\phi_{ij}\,, \qquad \tilde{\phi}_{\mathrm{total}}=\sum_{\substack{i,j\in\bolth \\ i\neq j}}\tilde{\phi}_{ij}\,,
\ee
while $\phi_T$, $\tilde{\phi}_{\bar{T}}$ denote the product over all edge weights appearing in the binary trees $T$, $\bar{T}$, respectively. Here each edge weight is $ \phi_E = \sum_{i, j \in I(E)}\phi_{ij}$ where $I(E)$ is the set of leaves \emph{upstream} from $E$, and similarly for $\tilde{\phi}_{\tilde{E}}$.

Collecting these results, one has
\be\label{inkerinverse}
S^{-1}_{n,d}[\hat{\rho}|\hat{\omega}]=\mathcal{D}^{-1}(\hat{\omega})\,\bbS^{-1}_{\bolh}[\rho|\omega]\,\bbS^{-1}_{\bolth}[\bar{\rho}|\bar{\omega}]\,,
\ee
which is an inverse of the integral kernel \eqref{intkernel} in the sense that
\be\label{inversenorm}
\sum_{\substack{\omega\in S_{n-d-2} \\ \bar{\omega}\in S_{d}}}S_{n,d}[\hat{\rho}|\hat{\omega}]\,S^{-1}_{n,d}[\hat{\omega}|\hat{\alpha}]=\sum_{\substack{\omega\in S_{n-d-2} \\ \bar{\omega}\in S_{d}}}S^{-1}_{n,d}[\hat{\alpha}|\hat{\omega}]\,S_{n,d}[\hat{\omega}|\hat{\rho}]=\delta_{\rho\alpha}\,\delta_{\bar{\rho}\bar{\alpha}}\,,
\ee
where the Kronecker deltas represent the identity on the space of linear maps between permutations.

\medskip

Armed with the inverse integral kernel \eqref{inkerinverse}, we are now in a position to state the following result:
\begin{thm}\label{BASthm}
Let
\be\label{PT-BAS}
m_{n,d}[a\bar{\rho}b\rho|\bar{\omega}^{\mathrm{T}}ab\omega]:=\int\d\mu_{d}\,S^{-1}_{n,d}[\hat{\rho}|\hat{\omega}]\,\prod_{i=1}^{n}\varphi_i(Z(\sigma_i))\,\D\sigma_i\,,
\ee
for $\varphi_i\in H^{0,1}(\PT,\cO(-2))$. This expression is equal to the doubly colour-ordered tree-level partial amplitudes of BAS theory, in the sense that
\be\label{PT-BAS2}
m_{n,d}[a\bar{\rho}b\rho|\bar{\omega}^{\mathrm{T}}ab\omega]=m_{n}[a\bar{\rho}b\rho|\bar{\omega}^{\mathrm{T}}ab\omega]\,,
\ee
for each $d=1,\ldots,n-3.$
\end{thm}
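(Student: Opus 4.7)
The plan is to establish Theorem~\ref{BASthm} by induction on the multiplicity $n$, using BCFW recursion adapted to twistor space, in the same spirit as the twistor-string proofs of the RSVW and Cachazo-Skinner formulae in \cite{Skinner:2010cz,Adamo:2013tca}. Since tree-level BAS partial amplitudes are uniquely fixed by their factorization on physical poles together with sufficiently low-multiplicity base cases, it suffices to show that the right-hand side of \eqref{PT-BAS} satisfies the same recursion.

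For the base case, I would treat $n=3$ (and if needed $n=4$) directly. At $n=3$ only $d=1$ is allowed, the auxiliary data $(a,b,y,t)$ together with the three wavefunctions $\varphi_i\in H^{0,1}(\PT,\cO(-2))$ fully localise the moduli integral, and the inverse kernel $S^{-1}_{3,1}[\hat{\rho}|\hat{\omega}]$ degenerates to an overall numerical factor (via Proposition~\ref{propInv}, the binary tree sum is trivial). A direct calculation should then reproduce $m_3[abc|\sigma(abc)]=\pm 1$ with the correct signs. A consistency check at $n=4$ for both $d=1$ and $d=2$ would also confirm the degree-independence of the result predicted by \eqref{PT-BAS2}.

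For the inductive step, I would implement a twistor BCFW shift $\wh Z_j(u)=Z_j+u\,Z_i$ on a reference pair $i,j$ chosen to be adjacent in both orderings $a\bar\rho b\rho$ and $\bar\omega^{\mathrm{T}}ab\omega$. Since $\varphi_i\in H^{0,1}(\PT,\cO(-2))$, the shifted wavefunctions scale uniformly, and the $u$-dependence of the integrand of \eqref{PT-BAS} is entirely controlled by (i) the rational function $S^{-1}_{n,d}[\hat\rho|\hat\omega]$ of the marked points, and (ii) the moduli integrals, whose support is shifted by $u$. A careful power-counting in $u$ (using the explicit form \eqref{inkerinverse} together with the asymptotic behaviour of the moduli measure and the scalar wavefunctions) should show that the integrand vanishes as $u\to\infty$, so that the contour at $u=0$ can be deformed onto the finite poles. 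Each such pole corresponds to a boundary divisor of the moduli space where the underlying map degenerates into two maps of degrees $d_L,d_R$ with $d_L+d_R=d$ glued at a node, exactly as in the standard twistor BCFW analysis.

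The main obstacle, and the technical heart of the proof, is to show that on each such degeneration the inverse kernel factorises cleanly. On a factorisation channel $I\subset\{1,\dots,n\}$ compatible with \emph{both} orderings, I expect the binary-tree sum in $\bbS^{-1}_{\bolh}[\rho|\omega]=\frac{1}{\phi_{\mathrm{total}}}\sum_{T}\phi_T^{-1}$ to split according to the unique internal edge of $T$ separating $I$ from its complement: that edge contributes a single factor $\phi_I^{-1}$, which after localisation onto the scattering equations reproduces the physical propagator $1/s_I$, while the two subtrees give exactly the lower-multiplicity inverse kernels $\bbS^{-1}_{\bolh_L}$ and $\bbS^{-1}_{\bolh_R}$ attached to the two sides of the node. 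The same decomposition must hold for the conjugate chirality $\bbS^{-1}_{\bolth}$, and the auxiliary prefactor $\mathcal D^{-1}(\hat\omega)$ must distribute itself between the two subamplitudes in a way consistent with a suitable choice of new reference points on each side. Once this graph-theoretic splitting is verified, the residue at the pole becomes $m_{n_L,d_L}[\cdot|\cdot]\,\frac{1}{s_I}\,m_{n_R,d_R}[\cdot|\cdot]$, the inductive hypothesis identifies the factors with lower-point BAS amplitudes, and summing over channels closes the recursion. The remarkable $d$-independence of $m_n[\alpha|\beta]$ then emerges automatically: each individual degree $d$ produces the full amplitude, with the $d$-splittings across the factorisation channel absorbed into the sum over subamplitude degrees.
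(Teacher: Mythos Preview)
Your overall strategy---BCFW recursion in twistor space, reducing the claim to a factorization analysis of the inverse kernel on boundary divisors of the moduli space---is the same as the paper's. However, the picture you sketch for \emph{which} boundary divisors contribute, and how the kernel splits on them, is not what actually happens, and this is a genuine gap.

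You expect generic factorization channels $I$ with a split $d_L+d_R=d$ into two lower-point instances of the same formula, and you expect both chiral halves $\bbS^{-1}_{\bolh}$ and $\bbS^{-1}_{\bolth}$ to factorise symmetrically across the node. The paper's analysis shows this is not the case. Taking both shifted particles in $\bolh$ (as the paper does, without loss of generality), a careful $s\to 0$ power-count establishes that the \emph{only} singular configuration is the extremal one with $d_L=0$ and $n_L=2$: all of $\bolth$ lands on $\Sigma_R$, so $\bbS^{-1}_{\bolth}$ does not split at all, and $\bbS^{-1}_{\bolh}$ peels off exactly one cubic leaf $T_L$ from the binary tree $T$. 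The $\Sigma_L$ piece is then a \emph{degree-zero} three-point object, which is outside the range $d\geq 1$ of the theorem and must be evaluated separately (Lemma~\ref{lemma:BAS3pt} in the paper). The recursion therefore does not terminate at a generic lower-point amplitude but iterates, stripping one cubic vertex at a time from $T$ until one reaches the $\overline{\mathrm{MHV}}$ case $d=n-3$, which is handled by the explicit $d=1$ evaluation \eqref{MHV-BAS1} together with the parity argument of Appendix~\ref{parityApp}. Your final remark about $d$-independence emerging from ``$d$-splittings absorbed into the sum over subamplitude degrees'' is accordingly off: there is no such sum, because only $d_L=0$ survives.

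In short, the missing ingredient is the singular-locus analysis that forces $n_L=2$, $d_L=0$; without it, your proposed factorization of the kernel (in particular the claimed splitting of $\bbS^{-1}_{\bolth}$ and the appearance of a generic $\phi_I^{-1}\to 1/s_I$) cannot be justified, and the induction does not close in the form you describe.
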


\medskip

We will prove this theorem in the following subsection, but for now, let us make a few elementary observations about the formula \eqref{PT-BAS}. Firstly, it is easy to check that the formula is mathematically well-defined, in the sense that all of the integrals make sense projectively. 

To see this, observe that external states in this formula are still partitioned into `positive helicity' and `negative helicity' sets, $\bolh$ and $\bolth$, respectively, despite the fact that there is no helicity associated with the massless scalars being described by the twistor wavefunctions $\varphi_i$. Now, $\mathcal{D}^{-1}(\hat{\omega})$ carries weight $-2$ in $\{a,b,t,y\}$, weight $2(d-1)$ in all elements of $\bolh\setminus\{b,t\}$ and weight $2(n-d-3)$ in all elements of $\bolth\setminus\{a,y\}$. Meanwhile, $\bbS^{-1}_{\bolh}$ carries weight $-2(n-d-2)$ in each element of $\bolth\setminus\{a,y\}$ and $\bbS^{-1}_{\bolth}$ carries weight $2d$ in each element of $\bolth\cup\{b,t\}$. Combined with the scaling weight $-2d$ of each twistor wavefunction and the weight $2$ holomorphic forms $\D\sigma_i$, it follows that the integrand of \eqref{PT-BAS} is a $(1,1)$-form, homogeneous of weight zero in each insertion $\sigma_i$ on the Riemann sphere.

Furthermore, when the twistor wavefunctions correspond to massless scalar momentum eigenstates
\be\label{momeigsc}
\varphi_i(Z)=\int_{\C^*}t_i\,\d t_i\,\bar{\delta}^2(\kappa_i-t_i\,\lambda)\,\e^{\im t_i\,[\mu\,i]}\,,
\ee
the same counting of integrals versus delta functions that holds for the RSVW and Cachazo-Skinner formulae shows that all of the integrals appearing in \eqref{PT-BAS} are localized against delta functions, with four remaining delta functions encoding momentum conservation.

In this sense, the formula for $m_{n,d}$ passes the most basic sanity tests, and it can also be evaluated explicitly when $d=1$. In this case, following the same steps as in our evaluation of $\cM_{n,1}$ above, the formula evaluates to
\be\label{MHV-BAS1}
\begin{split}
m_{n,1}[12 b \rho|21b\omega]&=\frac{1}{s_{3\cdots n}}\,\sum_{T\in\mathcal{BT}_{b\rho,b\omega}}\frac{1}{s_{T}}\,\int\d^{4}x\,\e^{\im (k_1+\cdots+k_n)\cdot x} \\
&=(2\pi)^4\,\delta^{4}\!\left(\sum_{i=1}^{n}k_i\right)\,\frac{1}{s_{12}}\,\sum_{T\in\mathcal{BT}_{b\rho,b\omega}}\frac{1}{s_{T}}\,,
\end{split}
\ee
which is precisely the BAS tree-amplitude $m_{n}[12 b \rho|21b\omega]$. A sort of `parity symmetry' inherited from the graviton amplitude \eqref{PT-KLT} (shown in Appendix \ref{parityApp}) means that we also obtain the concrete, correct expression for $d=n-3$.

The remarkable claim is that this is true \emph{for all degrees} $d$: namely, \eqref{PT-BAS} is equal to the BAS tree amplitude, which is sensitive to the degree of the underlying map to twistor space only through the arrangement of its double colour-orderings. Intuitively, the counting in degree can be interpreted as follows. Due to the specific structure $[a \bar{\rho}b \rho|\bar{\omega}^{\mathrm{T}} a b \omega]$ of the colour-orderings, the corresponding BAS amplitude is divided into two halves: the binary trees on the left-hand set of $(d+1)$ particles, and the binary trees on the right-hand set of $(n - d - 1)$ particles. These trees are treated independently, and so the degree of the map can be interpreted as enumerating the `complexity' of the trees contained in the corresponding BAS formula.


\subsection{Proof of BAS formula}\label{sec:BASproof}

The proof proceeds by using an argument based on BCFW recursion~\cite{Britto:2005fq} in twistor space~\cite{Mason:2009afn}, applied to each tree sub-amplitude of $\mathbb{S}_{\bolh}[\rho|\omega]$, to establish that each of these sub-amplitudes is equivalent to a trivalent Feynman diagram contributing to the tree-level BAS scattering amplitude in the relevant colour-ordering. In this way, we establish that the formula \eqref{PT-BAS} captures the full tree-level Feynman diagram expansion of BAS theory. Our notation and method will closely follow~\cite{Cachazo:2012pz}. 

The relevant tree structure is apparent in \eqref{PT-BAS}, in which $S_{n, d}^{-1}[\hat{\rho}|\hat{\omega}]$ can be expanded as
\begin{equation}\label{eqtreedecomp}
    S_{n, d}^{-1}[\hat{\rho}|\hat{\omega}] = \mathcal{D}^{-1}(\hat{\omega}) \sum_{\substack{T \in \mathcal{BT}_{b \rho, b\omega}\\ \tilde{T} \in \mathcal{BT}_{a \bar{\rho},a \bar{\omega}}}} \frac{1}{\phi_{\mathrm{total}}}\frac{1}{ \phi_T}\times  \frac{1}{\tilde{\phi}_{\mathrm{total}}}\frac{1}{\tilde{\phi}_{\tilde{T}}}.
\end{equation} This decomposition is also illustrated in Figure \ref{fig:trees}. From this we define a tree sub-amplitude integrand to be, for the tree $\hat{T} = T \cup \tilde{T}$,
\begin{equation}
    S^{-1 \, \hat{T}}_{n, d}[\hat{\rho}|\hat{\omega}] = \mathcal{D}^{-1}(\hat{\omega}) \frac{1}{\phi_{\mathrm{total}}}\frac{1}{ \phi_T}\times  \frac{1}{\tilde{\phi}_{\mathrm{total}}}\frac{1}{\tilde{\phi}_{\tilde{T}}}.
\end{equation}
The proof will then proceed on each of these subamplitudes, and show that each one is equal to the equivalent trivalent Feynman diagram in BAS theory. The formula \eqref{PT-BAS} thus captures the full tree-level Feynman diagram expansion of the theory.

\begin{figure}
    \centering
    \includegraphics{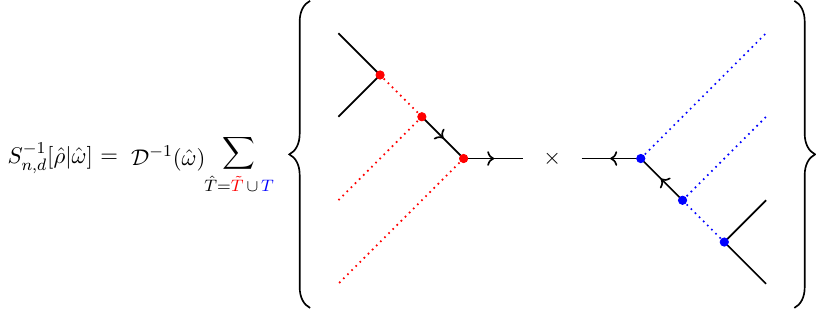}
    \caption{The tree decomposition of $S^{-1}_{n, d}[\hat{\rho}|\hat{\omega}]$, where the sum is over all trees compatible with the orderings on both $\bolth$ and $\bolh$. }
    \label{fig:trees}
\end{figure}

The factorization analysis is made easier by explicitly performing the moduli integrals for the $\mu^{\dot\alpha}(\sigma)$-components of the degree $d$ holomorphic map to twistor space in \eqref{PT-BAS}, evaluated on the momentum eigenstates \eqref{momeigsc}. This leads to the equivalent formula
\begin{multline}\label{PT-BASint*}
m_{n,d}[a\bar{\rho}b\rho|\bar{\omega}^{\mathrm{T}}ab\omega]=\int\frac{\d^{2(d+1)}\lambda}{\mathrm{vol}\,\GL(2,\C)}\,\mathcal{D}^{-1}(\hat{\omega})\,\bbS^{-1}_{\bolh}[\rho|\omega]\,\bbS^{-1}_{\bolth}[\bar{\rho}|\bar{\omega}] \\
\times\,\delta^{2(d+1)}\!\left(\sum_{i=1}^{n}t_i\,\tilde{\kappa}_i^{\dot\alpha}\,\sigma_i^{\mathbf{a}(d)}\right) \prod_{i=1}^{n}\D\sigma_i\,t_i\,\d t_i\,\bar{\delta}^{2}(\kappa_i-t_i\,\lambda(\sigma_i))\,,
\end{multline}
where an irrelevant overall factor of $(2\pi)^{2(d+1)}$ has been ignored. As the factorization analysis is also essentially local in the moduli space of holomorphic maps, it is also convenient to work in an affine patch $\sigma^{\ba}=(1,u)$ of $\P^1$, under which the remaining components of the holomorphic map are written
\be\label{affinelambda}
\lambda_{\alpha}(\sigma)=\lambda_{\alpha}(u)=\sum_{r=0}^{d}\lambda_{\alpha\,r}\,u^r\,,
\ee
with the set of $2(d+1)$ complex parameters $\{\lambda_{\alpha\,r}\}$ the remaining moduli of the map. The formula \eqref{PT-BASint*} in this representation becomes
\begin{multline}\label{PT-BASint}
m_{n,d}[a\bar{\rho}b\rho|\bar{\omega}^{\mathrm{T}}ab\omega]=\int\frac{\d^{2(d+1)}\lambda}{\mathrm{vol}\,\GL(2,\C)}\,\mathcal{D}^{-1}(\hat{\omega})\,\bbS^{-1}_{\bolh}[\rho|\omega]\,\bbS^{-1}_{\bolth}[\bar{\rho}|\bar{\omega}] \\
\times\,\prod_{r=0}^{d}\delta^{2}\!\left(\sum_{i=1}^{n}t_i\,\tilde{\kappa}_i^{\dot\alpha}\,u_i^r\right) \prod_{i=1}^{n}\d u_i\,t_i\,\d t_i\,\bar{\delta}^{2}(\kappa_i-t_i\,\lambda(u_i))\,,
\end{multline}
where all quantities are now evaluated on the affine patch, where $(i\,j)=u_i-u_j$ for all $i,j=1,\ldots,n$.


\subsubsection{The BCFW shift}

Consider a BCFW-deformation of the $n$-particle kinematics: this is a one-parameter complex deformation which preserves the on-shell condition for each external 4-momentum as well as overall 4-momentum conservation. Without loss of generality, we can take this shift to act on the momenta of particles $1,n\in\bolh$ as\footnote{The proof is qualitatively un-changed if we take the two shifted particles to be in $\bolth$. We expect that taking the shifted particles in different sets would yield the same result, with added complexity in the argument.}:
\begin{equation}\label{BCFWshift}
\kappa_1 \rightarrow \kappa_1 + z \,\kappa_n, \qquad \tilde{\kappa}_n \rightarrow \tilde{\kappa}_n - z\, \tilde{\kappa}_1\,,
\end{equation}
where $z\in\C$ is the shift parameter. Under this shift, the quantity $m_{n,d}$ becomes a rational function of $z$, $m_{n,d}(z)$. We say that such a shift is \emph{admissible} if $m_{n,d}(z)$, described by the formula \eqref{PT-BAS} under the shift \eqref{BCFWshift}, is analytic as $|z|\to\infty$ -- in other words, there is no pole at $z\to\infty$.

\paragraph{General shift behaviour:} Under the shift \eqref{BCFWshift}, the various constituents of the formula \eqref{PT-BASint} pick up dependence on the shift parameter $z$; for instance, the delta functions
\begin{equation} \label{eq:zshift1}
\bar{\delta}^2\!\left(\kappa_1 + z\, \kappa_n - t_1 \sum_{r = 0}^{d} \lambda_r\, u_1^r \right)\prod_{r = 0}^d \delta^{2}\!\left(\sum_{i = 1}^n t_i \tilde{\kappa}_i\, u_i^r - z\, t_n\, \tilde{\kappa}_1\, u_n^r \right)
\end{equation}
now have explicit $z$-dependence. The arguments of these delta functions are apparently divergent as $z\to\infty$, but this can be absorbed by introducing a new scaling parameter and affine coordinate on $\P^1$ for the particle 1, defined by
\begin{equation}\label{hatvardef}
\hat{t}_1 := t_1 - z\, t_n\,, \qquad \hat{t}_1\,\hat{u}_1^d := t_1\,u_1^d - z\, t_n\,u_n^d\,,
\end{equation}
which imply that
\be\label{uhatdef}
u_1=\left(\frac{\hat{t}_1\,\hat{u}_1^d+z\,t_n\,u_n^d}{\hat{t}_1+z\,t_n}\right)^{1/d}\,.
\ee
On the support of the (un-shifted) delta function
\be\label{ndelta}
\bar{\delta}^{2}\!\left(\kappa_n-t_n\,\sum_{r=0}^d \lambda_{r}\,u_n^r\right)\,,
\ee
in \eqref{PT-BASint}, it then follows that 
\begin{equation}
\kappa_1 + z\, \kappa_n - t_1 \sum_{r = 0}^{d} \lambda_r\, u_1^r=
\kappa_1 - \hat{t}_1 \sum_{r = 0}^d \lambda_r \left(\frac{d-r}{d}\, u_n^r + \frac{r}{d}\, \hat{u}_1^d\, u_n^{r-d} \right) + O(z^{-1})\,,
\end{equation}
and
\be
\sum_{i = 1}^n t_i \tilde{\kappa}_i\, u_i^r - z\, t_n\, \tilde{\kappa}_1\, u_n^r = \sum_{i=2}^{n}t_i\,\tilde{\kappa}_i\,u_i^r+\hat{t}_1\,\tilde{\kappa}_1\left(\frac{d-r}{d}\,u_n^r+\frac{r}{d}\,\hat{u}_1^d\,u_n^{r-d}\right)+O(z^{-1})\,.
\ee
In particular, the arguments of the delta functions \eqref{eq:zshift1} are finite in the $z\to\infty$ limit when expressed in terms of the new variables \eqref{hatvardef}.

One must now account for the additional $z$-dependence in \eqref{PT-BASint}. With the assumption that $1,n\in\bolh$, we can also take $1,n\neq b,t$ (where $b,t\in\bolh$ are the arbitrarily-chosen points singled out in the definition of $\mathcal{D}^{-1}(\hat{\omega})$ and $\phi_{ij}$), in which case only $\mathcal{D}^{-1}(\hat{\omega})$ and $\bbS^{-1}_{\bolh}$ in the first line of \eqref{PT-BASint} have any $z$-dependence. It is easy to see that
\begin{multline}\label{bigzD}
\mathcal{D}^{-1}(\hat{\omega})\xrightarrow{z\to\infty} \frac{(u_y-u_{\omega^*})\,(u_{\bar{\omega}^*}-u_t)}{(u_a-u_b)\,(u_{\omega^*}-u_{\bar{\omega}^*})\,(u_b-u_y)\,(u_a-u_t)\,(u_y-u_t)^2} \\
\times\,\prod_{j\in\bolth\setminus\{a,y\}}(u_n-u_j)^4\,\prod_{\substack{k\in\bolh\setminus\{1,n,b,t\} \\ l\in\bolth\setminus\{a,y\}}}(u_k-u_l)^2 + O(z^{-1})\,,
\end{multline}
while the $z$-dependence of $\bbS^{-1}_{\bolh}$ is inherited from that of the weights $\phi_{ij}$:
\be\label{phiijz}
\phi_{ij}\xrightarrow{z\to\infty}-t_i\,t_j\,[i\,j]\,(u_i-u_j)\prod_{l\in\bolth\setminus\{a,y\}}(u_i-u_l)\,(u_j-u_l)\,, \qquad i,j\neq 1,n\,,
\ee
\be\label{phi1jz}
\phi_{1j}\xrightarrow{z\to\infty}-z\,t_n\,t_j\,[1\,j]\,(u_n-u_j)\,\prod_{l\in\bolth\setminus\{a,y\}}(u_n-u_l)\,(u_j-u_l)+O(z^0)\,, \qquad j\neq n\,,
\ee
\be\label{phinjz}
\phi_{1j}\xrightarrow{z\to\infty}z\,t_n\,t_j\,[1\,j]\,(u_n-u_j)\,\prod_{l\in\bolth\setminus\{a,y\}}(u_n-u_l)\,(u_j-u_l)+O(z^0)\,, \qquad j\neq 1\,,
\ee
and
\be\label{phi1nz}
\phi_{1n}\xrightarrow{z\to\infty}-t_n\,\hat{t}_1\,[1\,n]\,u_n\left(\frac{\hat{u}_1^d}{u_n^d}+1\right)\,\prod_{l\in\bolth\setminus\{a,y\}}(u_n-u_l)^2+O(z^{-1})\,, \qquad j\neq n\,.
\ee
In addition, we must account for potential large-$z$ dependence elsewhere in the integration measure:
\be\label{du1z}
\mathrm{d}u_1 \xrightarrow{z\to\infty} \frac{\hat{t}_1}{z\, t_n}\, u_n^{1-d}\, \hat{u}_1^{d-1}\, \mathrm{d} \hat{u}_1 + O(z^{-2})\,,
\ee
and
\be\label{dt1z}
t_1\,\d t_1\xrightarrow{z\to\infty}z\,t_n\,\d\hat{t}_1+O(z^0)\,,
\ee
where terms which wedge to zero in \eqref{PT-BASint} have been dropped in the last expression. Finally, observe that as the shifted particles $1,n\in\bolh$, there is no $z$-dependence coming from the factor $\mathbb{S}^{-1}_{\bolth}$.

\paragraph{Admissible shifts:}
\begin{figure} 
\centering
\includegraphics{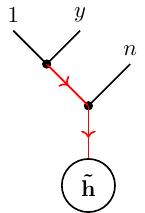} \qquad
\includegraphics{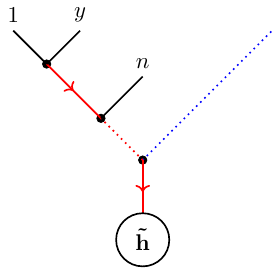} \qquad
\includegraphics{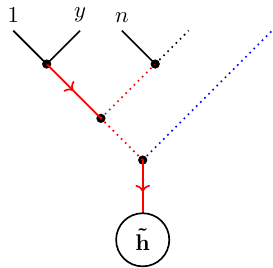}
\caption{Examples of the type of shift configuration for the points $1$ and $n$ that are admissible. Note that 1 has a unique neighbour $y$, and the next closest point is $n$. The red lines indicate a decay in $z^{-1}$ for the propagator, whilst the blue lines indicate that the lines go as $z^0$. Overall, the large-$z$ falloff due to these propagators will be at least $z^{-2}$, rendering the large-$z$ falloff of at least $m_{n,d}(z)\sim z^{-1}$. Note that we only consider cases with at least two points in $\bolh$, otherwise  we are already in the $\overline{\text{MHV}}$ sector which we can evaluate directly from \eqref{MHV-BAS1}.
}\label{fig:ads}
\end{figure}
Consider each binary tree over $\bolh$ in the sum appearing in \eqref{PT-BASint} separately. From \eqref{bigzD} -- \eqref{dt1z}, it follows that the tree-independent constituents of $m_{n,d}(z)$ contribute large-$z$ behaviour of the form $z^0$. The shift \eqref{BCFWshift} is \emph{admissible} for the given tree if the tree is such that it contains at least two factors of $\phi_{1i}^{-1}$ or $\phi_{ni}^{-1}$ for some $i\neq1,n$. Examples of trees, where $1$ has a unique nearest neighbour, are shown in Figure~\ref{fig:ads}. If a given binary tree in the sum is such that the shift on $1,n$ is not admissible, then we simply shift two other particles in $\bolh$ such that the shift is admissible for this tree.

Overall, this means that choosing an admissible shift for each binary tree in the sum gives large-$z$ behaviour
\be\label{treelzb}
m_{n,d}(z)\Big|^{\hat{T}}_{\mathrm{admissible}}\xrightarrow{z\to\infty}O(z^{-2})\,.
\ee
This means that, by choosing an admissible shift for each binary tree contributing to \eqref{PT-BASint}, there is no pole at $z\to\infty$ in the deformed quantity $m_{n,d}(z)$.


\subsubsection{Factorisation}

At this point, we restrict our focus to a single term in the sum over the set of binary trees $\mathcal{BT}_{b\rho,b\omega}$ on $\bolh$ and $\mathcal{BT}_{a\bar{\rho},a\bar{\omega}}$ on $\bolth$ compatible with the colour-orderings. Let this single term correspond to the binary tree $\hat{T}=T\cup\bar{T}$, and (without loss of generality) assume that the shift \eqref{BCFWshift} is admissible for this binary tree. We have established that the contribution of this term to the shifted expression \eqref{PT-BASint}, which we will denote by $m_{n,d}^{\hat{T}}(z)$, has no pole as $z\to\infty$. Our next task is to establish that the only other poles contained in $m_{n,d}^{\hat{T}}(z)$ correspond to certain multi-particle factorization channels.

To do this, we observe that $m_{n,d}^{\hat{T}}$ is, by definition, associated to a degree $d$ holomorphic rational curve in twistor space, and consider a degeneration of this underlying curve. If $m_{n,d}^{\hat{T}}$ develops poles in the modulus controlling this degeneration, then by standard arguments it will have a pole of the corresponding order in the shift parameter $z$ corresponding to a multi-particle factorization compatible with the curve degeneration.

The degeneration of the curve has a standard description by modelling the underlying $\P^1$, with homogeneous coordinates $\sigma^{\ba}=(1,u)$ in the chosen affine patch, as a conic in $\P^2$:
\begin{equation}
\Sigma_s = \{xy = s^2 z^2 \} \subset \mathbb{P}^2,
\end{equation}
where $s$ is a parameter and $[x,y,z]$ are homogeneous coordinates on $\P^2$, related to $\sigma^{\ba}$ by 
\begin{equation}
(x, y, z) = \left( (\sigma^{\mathbf{0}})^2, (\sigma^{\mathbf{1}})^2, \frac{\sigma^{\mathbf{0}}\, \sigma^{\mathbf{1}}}{s}\right)\,.
\end{equation}
As $s\rightarrow 0$, $\Sigma_s\cong\P^1$ degenerates into two components:
\begin{equation}
\lim_{s \rightarrow 0} \Sigma_s = \Sigma_L \cup \Sigma_R.
\end{equation}
These components are defined by $\Sigma_L = \{y = 0\}$ and $\Sigma_R = \{x = 0 \}$, with intrinsic homogeneous coordinates
\begin{equation}
\sigma^{\ba}_L = (z, x) = \sigma^{\mathbf{0}} \left(\frac{\sigma^{\mathbf{1}}}{s},\, \sigma^{\mathbf{0}}\right), \qquad \sigma^{\ba}_R = (z, y) = \sigma^{\mathbf{1}} \left(\frac{\sigma^{\mathbf{0}}}{s},\, \sigma^{\mathbf{1}}\right)\,.
\end{equation}
The affine coordinate $u=\sigma^{\mathbf{1}}/\sigma^{\mathbf{0}}$ on $\Sigma_s$ is related to the corresponding affine coordinates on $\Sigma_{L, R}$ by 
\begin{equation}
u_L = \frac{s}{u}, \qquad u_R = s\,u.
\end{equation}
The point at which the two components intersect in the degenerate $s\to0$ limit (which we will call the node $\bullet$) is fixed at $u_{L, R} = 0$ in both coordinate patches. 

Now, in the $s\to0$ limit, the $n$ marked points on $\Sigma_s$ are distributed among the two resulting components $\Sigma_L$ and $\Sigma_R$. Let $n_{L}$, $n_R$ denote the number of initial marked points distributed on $\Sigma_L$ and $\Sigma_R$, respectively, with $n_L+n_R=n$.
To understand the dependence of the integrand of $m_{n,d}^{\hat{T}}$ on $s$, it is useful to observe that holomorphic separations $(u_i-u_j)$ between marked points behave differently depending on where those points wind up in the degenerate limit. It follows that
\begin{equation}\label{udegens}
u_i - u_j = \begin{cases}
s \,\frac{u_{jL}- u_{iL}}{u_{iL}\,u_{jL}}\,, & i, j \in L \\
\frac{u_{iR} - u_{jR}}{s}\,, & i, j \in R \\
\frac{s^2 - u_{iL}u_{jR}}{s\,u_{iL}}\,, & i \in L, j\in R 
\end{cases}\,,
\end{equation}
in terms of the affine coordinates on $\Sigma_L$ and $\Sigma_R$. In addition, various other ingredients in $m_{n,d}^{\hat{T}}$ have standard behaviour in the degenerate limit which were determined in~\cite{Cachazo:2012pz}. The invariant measure 
\be\label{SL2meas}
\mathrm{d}\nu = \frac{1}{\mathrm{vol}\,\SL(2, \mathbb{C})} \prod_{i = 1}^n \d u_i\,,
\ee
on the marked point locations behaves as
\be\label{eq:measuretransfo}
\mathrm{d}\nu = s^{n_L - n_R - 4}\, \mathrm{d}s^2\, \mathrm{d}\nu_{L}\, \mathrm{d} \nu_R \, \prod_{i \in L} \frac{1}{u_{iL}^2}+O(s^{n_L-n_R-2})\,, 
\ee
where
\begin{equation}
\d \nu_{L} := \frac{1}{\mathrm{vol}\,\SL(2, \mathbb{C})} \prod_{i \in L\cup\bullet} \d u_{i L}\,, \qquad \d\nu_{R}:= \frac{1}{\mathrm{vol}\,\SL(2, \mathbb{C})} \prod_{i \in R\cup\bullet} \d u_{i R}\,.
\end{equation}
In particular, in the degenerate limit, the node $\bullet$ becomes a marked point, fixed by the choice of affine coordinate to be located at the origin in the affine patches on both $\Sigma_L$ and $\Sigma_R$.

The behaviour of the degree $d$ holomorphic map $Z:\Sigma_s\to\PT$ in the $s\to0$ limit is also easily deduced~\cite{Cachazo:2012pz}. To do this, a conveniently re-scaled set of map moduli $\{\mathcal{Z}_{r}\}$ can be defined by taking
\be\label{newmapmods}
\cZ_r\, = s^r\, U_{d_L - r}\,, \qquad \cZ_{\bullet} = U_{d_L}\,, \qquad \cY_r = s^r\, U_{d_L +r}\,,
\ee
for $d_L+d_R=d$, under which the holomorphic map admits a degeneration compatible with that of $\Sigma_s$ itself:
\begin{equation}\label{degenmap1}
\begin{split}
Z(u; s) &= \sum_{r = 0}^d U_r\, u^r \\
& = u^{d_L} \left( \sum_{r = 1}^{d_L}\cZ_r\, u^r_L + \cZ_{\bullet} + \sum_{b = 1}^{d_R} \cY_b \frac{s^{2b}}{u^b_L} \right) \quad u\in\Sigma_L \\
& = u^{d_L} \left( \sum_{b = 1}^{d_L} \cZ_b\, \frac{s^{2b}}{u^b_R} + \cZ_{\bullet} + \sum_{r = 1}^{d_R} \cY_r\, u^r_R \right) \quad u\in\Sigma_R\,.
\end{split}
\end{equation} 
In particular, note that as $s\to 0$ the holomorphic map $Z:\Sigma_s\to\PT$ degenerates into a degree $d_L$ holomorphic map on $\Sigma_L$ and a degree $d_R$ holomorphic map on $\Sigma_R$, up to the shared overall factor of $u^{d_L}$.

Implementing the re-scaling \eqref{newmapmods} at the level of the measure on the map moduli introduces explicit $s$-dependence:
\be\label{mapmodscale}
\frac{\d^{4(d+1)}U}{\vol\,\C^*}=s^{2d_L(d_L+1)}\,s^{2d_R(d_R+1)}\,\frac{\d^{4}\cZ_{\bullet}}{\vol\,\C^*}\,\prod_{r=1}^{d_L}\d^{4}\cZ_{r}\,\prod_{b=1}^{d_R}\d^{4}\cY_b\,,
\ee
which nearly represents a splitting of the measure into measures for two distinct maps of degree $d_L$ and $d_R$, respectively. The obstruction to this splitting is due to the fact that the node $\bullet\in\Sigma_{L,R}$ is mapped to the \emph{same} point $\cZ_{\bullet}\in\PT$ for both degenerate maps. The image of the node on the $\Sigma_R$ side can trivially be disentangled by inserting a factor of unity into the integrand:
\be\label{eq:nodeid}
1 = \int \mathrm{D}^{3}\cY_{\bullet}\,\mathrm{D}^3 Z\, \frac{\d v}{v}\,r^3\, \d r\, \bar{\delta}^4 (Z - v\, \cZ_{\bullet})\, \bar{\delta}^4 (Z - r\, \cY_{\bullet})\,.
\ee
The holomorphic delta functions here set 
\be\label{ydotident}
\cY_{\bullet}=\frac{v}{r}\,\cZ_{\bullet}\,,
\ee
so the point $\cY_{\bullet}\in\PT$ is (projectively) just another copy of the image of the node as $s\to0$. Observing that
\be\label{ydotmeas}
\D^{3}\cY_{\bullet}=\frac{\d^{4}\cY_{\bullet}}{\vol\,\C^*}\,,
\ee
including the factor of unity \eqref{eq:nodeid} allows us to completely factorize the measure on the map moduli into degree $d_L$ and $d_R$ components:
\be\label{mapmodscale2}
\D^{3}\cY_{\bullet}\,\frac{\d^{4(d+1)}U}{\vol\,\C^*}=s^{2d_L(d_L+1)}\,s^{2d_R(d_R+1)}\,\frac{\d^{4}\cZ_{\bullet}}{\vol\,\C^*}\,\prod_{r=1}^{d_L}\d^{4}\cZ_{r}\,\frac{\d^{4}\cY_{\bullet}}{\vol\,\C^*}\,\prod_{b=1}^{d_R}\d^{4}\cY_b\,,
\ee
as desired.

However, there are still some inconvenient overall factors in the degenerate maps as $s\to0$. In particular, at this stage
\be\label{ZmL1}
Z(u)|_{\Sigma_L}=u^{d_L}\left(\cZ_{\bullet}+\sum_{r=1}^{d_L}\cZ_r\,u_L^{r}+O(s^2)\right)\,,
\ee
\be\label{ZmR1}
Z(u)|_{\Sigma_R}=u^{d_L}\,\frac{v}{r}\left(\cY_{\bullet}+\sum_{b=1}^{d_R}\cY_b\,u_R^b+O(s^2)\right)\,,
\ee
where we have re-scaled $\cY_b\to(v/r)\,\cY_b$ to obtain an overall factor of $v/r$ on the $\Sigma_R$ component. These overall factors can be removed from the degenerate maps by exploiting the fact that all ingredients of $m_{n,d}^{\hat{T}}$ are homogeneous functions of $Z(u)$. By the ingredients
\be\label{urescales1}
\varphi_i(Z(u_i))\to\left\{\begin{array}{l}
                           u_i^{-2d_L}\,\varphi_i(Z(u_i)) \quad \mbox{ if } u_i\in\Sigma_L \\
                           u_i^{-2d_L}\,\frac{r^2}{v^2}\,\varphi_i(Z(u_i)) \quad \mbox{ if } u_i\in\Sigma_R
                           \end{array}\right.\,,
\ee
as well as
\be\label{urescales2}
\phi_{ij}\to\left\{\begin{array}{l}
                  (u_i\,u_j)^{-d_L}\,\phi_{ij} \quad \mbox{ if } u_i,u_j\in\Sigma_L \\
                  (u_i\,u_j)^{-d_L}\,\frac{r}{v}\,\phi_{ij} \quad \mbox{ if } u_{i/j}\in\Sigma_L,\:u_{j/i}\in\Sigma_R \\
                  (u_i\,u_j)^{-d_L}\,\frac{r^2}{v^2}\,\phi_{ij} \quad \mbox{ if } u_i,u_j\in\Sigma_R
                  \end{array}\right.\,,
\ee
\be
\tilde{\phi}_{ij}\to\left\{\begin{array}{l}
                           (u_i\,u_j)^{d_L}\,\tilde{\phi}_{ij} \quad \mbox{ if } u_i,u_j\in\Sigma_L \\
                           (u_i\,u_j)^{d_L}\,\frac{v}{r}\,\tilde{\phi}_{ij} \quad \mbox{ if } u_{i/j}\in\Sigma_L,\: u_{j/i}\in\Sigma_R \\
                           (u_i\,u_j)^{d_L}\,\frac{v^2}{r^2}\,\tilde{\phi}_{ij} \quad \mbox{ if } u_i,u_j\in\Sigma_R
                           \end{array}\right.\,,
\ee
as well as
\be\label{measrescale3}
\prod_{b=1}^{d_R}\d^{4}\cY_b\to\left(\frac{v}{r}\right)^{4d_R}\,\prod_{b=1}^{d_R}\d^{4}\cY_b\,,
\ee
the overall factors can be removed from the degenerate limit of the holomorphic map. That is, after these re-scalings it follows that
\be\label{degenmapfin}
Z(u_L)=\cZ_{\bullet}+\sum_{r=1}^{d_L}\cZ_r\,u_L^{r}+O(s^2)\,, \qquad Z(u_R)=\cY_{\bullet}+\sum_{b=1}^{d_R}\cY_b\,u_R^b+O(s^2)\,,
\ee
in the degenerate limit.

\medskip

\paragraph{Isolating singular contributions:} The goal is to determine the leading behaviour of the formula as $s\to0$. In particular, we are interested in contributions which are singular; that is, terms which have poles in $s$. All contributions which are regular as $s\to 0$ are irrelevant as they do not correspond to kinematic poles and therefore will not result in poles in the BCFW shift parameter. At this point, we can tightly constrain the parameters $d_{L,R}$ and $n_{L,R}$ underlying the degeneration by demanding that we look at the \emph{most} singular contributions.

In~\cite{Cachazo:2012pz}, it was established that
\be\label{CSformdegen1}
\frac{\mathrm{det}'(\HH)\,\mathrm{det}'(\HH^{\vee})}{\prod_{i=1}^{n}\d u_i}\sim s^{2d_L(d_L+1)}\,s^{2d_R(d_R+1)}\,s^{2d_L(n_R-n_L)}\,s^{n_R-n_L+2}\,,
\ee
as $s\to 0$. Now, we have shown that 
\be\label{CSformalt1}
\mathrm{det}'(\HH)\,\mathrm{det}'(\HH^{\vee})=\sum_{\substack{b\rho,b\omega\in\mathcal{S}(\bolh) \\ a\bar{\rho},a\bar{\omega}\in\mathcal{S}(\bolth)}}\fullpt_{n}[a\bar{\rho}b\rho] \, S_{n,d}[\rho,\bar{\rho}|\omega,\bar{\omega}]\,\fullpt_{n}[\bar{\omega}^{\mathrm{T}}ab\omega] \,,
\ee
and as it is straightforward to determine the behaviour of the Parke-Taylor factors as $s\to 0$, this allows us to infer the behaviour of $S_{n,d}$ and its inverse by comparison with \eqref{CSformdegen1}.

For a generic $\fullpt_n[\alpha]$ on $\Sigma_s$, the behaviour as $s\to 0$ depends on how many times the colour-ordering $\alpha$ crosses the node in the degenerate limit. Clearly, it must cross an even number of times, $2m_b$, where $m_b$ denotes the number of crossings or, equivalently, the number of colour traces which will appear on each of $\Sigma_L$ and $\Sigma_R$ as $s\to 0$ -- see Figure~\ref{fig:brtrace}. A straightforward calculation shows that
\be\label{PTdegen1}
\frac{\fullpt_n[\alpha]}{\prod_{i=1}^n\d u_i}\sim s^{n_R-n_L+2m_b}\,,
\ee
as $s\to 0$. Thus, letting $m_{B}$ denote the total number of colour-trace crossings for \emph{both} Parke-Taylor factors in \eqref{CSformalt1}, it follows that
\be\label{invkerscaling1}
S_{n,d}^{-1}[\rho,\bar{\rho}|\omega,\bar{\omega}]\sim s^{2m_B}\,s^{-2d_L(d_L+1)}\,s^{-2d_R(d_R+1)}\,s^{2d_L(n_L-n_R)}\,s^{n_R-n_L-2}\,,
\ee
upon comparison with \eqref{CSformdegen1}. 

Since $m_{n,d}^{T}$ goes like $S_{n,d}^{-1}$, the most singular contributions will be those with the \emph{fewest} colour-order crossings of the node; that is, with the smallest possible value of $m_B$. Now, configurations with $m_B=0$ are not allowed; we have already established that (without loss of generality) $1,n\in\bolh$ must be on opposite sides of the degeneration, and this forces each Parke-Taylor factor in \eqref{CSformalt1} to cross the degeneration at least once, meaning that $m_B\geq2$ with the most singular contributions corresponding to $m_B=2$. This can only happen if all points in $\bolth$ are on a single side of the degeneration; otherwise $m_B$ will be greater than two. See Figure~\ref{fig:split} for an example. In other words, if $\bolth$ is distributed across both $\Sigma_L$ and $\Sigma_R$ then this configuration cannot be the most singular as $s\to 0$.

So in order to study the most singular contributions to the formula, we can (again, without loss of generality) assume that all elements of $\bolth$ are located on $\Sigma_R$ as $s\to 0$. By definition, this fixes $d_L=0$ and $d_R=d$, and it is easy to see that this in turn fixes $n_L=2$, $n_R=n-2$: all other configurations will have vanishing support in terms of the various delta functions appearing in \eqref{PT-BASint}.

\begin{figure}
    \centering
    \begin{subfigure}[t]{0.45\textwidth}
    \centering
    \resizebox{0.9\textwidth}{!}{\includegraphics{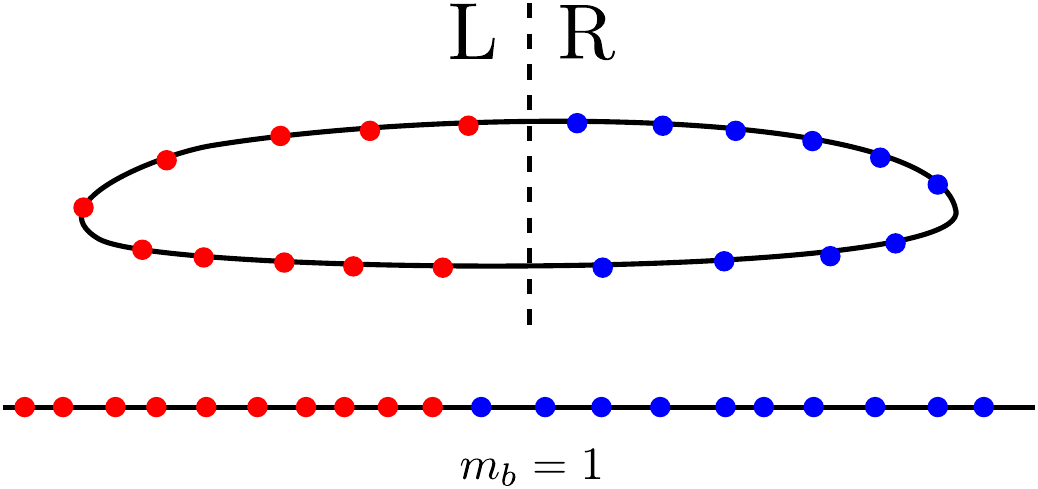}}
    \caption{The case in which the trace is broken the minimal number of times. In this case $m_b =1$.}
    \label{fig:ubtr}
    \end{subfigure}
    \qquad
    \begin{subfigure}[t]{0.45\textwidth}
    \centering
    \resizebox{0.9\textwidth}{!}{\includegraphics{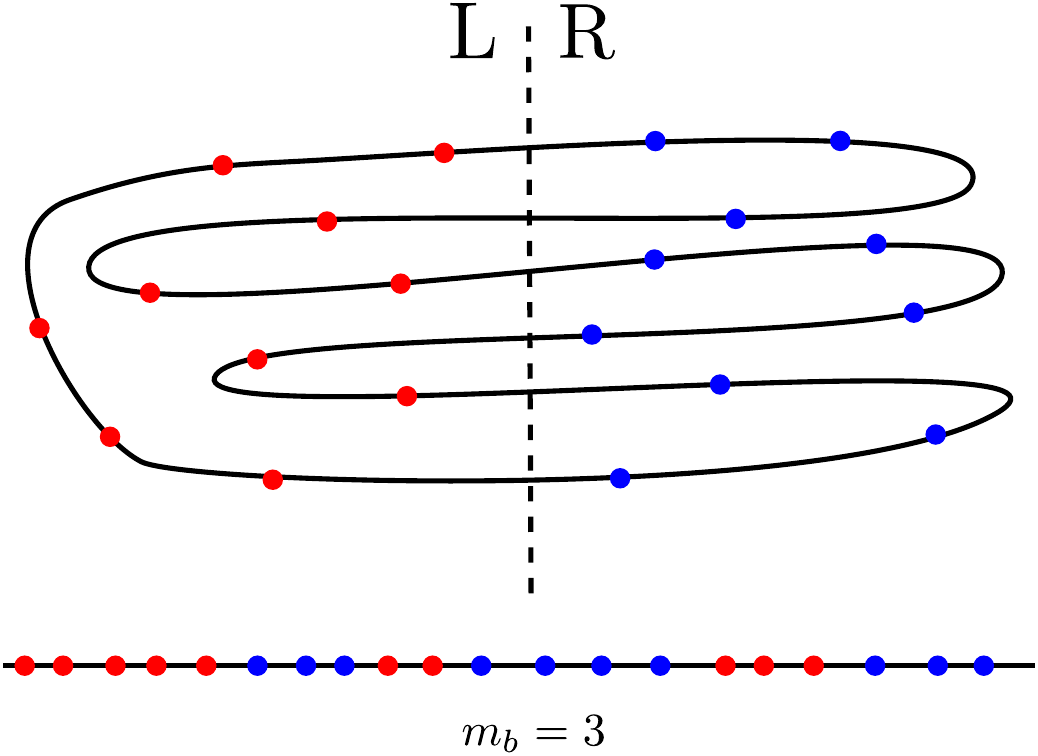}}
    \caption{For this colour ordering the trace is broken multiple times, and we end up with 3 trace components on each side, so $m_b = 3$.}
    \label{fig:brtr}
    \end{subfigure}
    \caption{Two examples of the possible trace configurations for a given distribution of points among the left and right sides, and their corresponding values of $m_b$. An easy way of counting $m_b$ is counting how many connected traces end up (e.g.) on the right side.}
    \label{fig:brtrace}
\end{figure}

\begin{figure}
    \centering
   \includegraphics{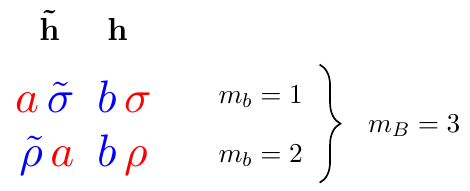}
    \caption{When $\bolth$ is distributed among the two sides the values of $m_B$ is always at least 3. Here this is illustrated in a simple configuration.}
    \label{fig:split}
\end{figure}

\medskip

Having established that the most singular contribution to the integrand occurs for the configuration where $n_L=2$ and $d_L=0$, we can now determine precisely how the entire integrand of $m_{n,d}^{\hat{T}}$ behaves in the degenerate limit. Collecting \eqref{eq:measuretransfo}, \eqref{mapmodscale2} and using \eqref{urescales1}, \eqref{degenmapfin}, it follows that
\begin{multline}\label{measurewfscale}
\frac{\d^{(4d+1)}U}{\vol\,\C^*}\,\d\nu\,\prod_{i=1}^{n}\varphi_i(Z(u_i))=s^{2d(d+1)-n}\,\d s^2\,\left[\d\nu_L\,\frac{\d^{4}\cZ_{\bullet}}{\vol\,\C^*}\,\prod_{i\in\bolh_L}\frac{\varphi_i(Z(u_{iL}))}{(u_{\bullet}-u_i)_L^2}\right]\left(\frac{v}{r}\right)^{4d-2n+4} \\
\times \left[\d\nu_{R}\,\frac{\d^{4}\cY_{\bullet}}{\vol\,\C^*}\,\prod_{b=1}^{d}\d^{4}\cY_{b}\,\prod_{j\in\bolh_R\cup\bolth}\varphi_j(Z(u_{jR}))\right]+O(s^{2d(d+1)-n+2})\,,
\end{multline}
where $\bolh_L\subset\bolh$ is the set comprising the two marked points which are left on $\Sigma_L$ as $s\to0$. From \eqref{invkerscaling1}, we also have that
\be\label{invkerscaling2}
S^{-1}_{n,d}[\hat{\rho}|\hat{\omega}]\sim s^{-2d(d+1)+n-2}+O(s^{-2d(d+1)+n})\,,
\ee
so that overall
\be\label{deglimscaling}
m_{n,d}^{\hat{T}}=\int \frac{\d s^2}{s^2}\left(1+O(s^2)\right)\,,
\ee
in the degenerate limit.

In particular, this establishes that the maximally singular $m_B=2$ configuration, with $d_L=0$ and $n_L=2$, is the \emph{only} singular configuration, having a simple pole in $s^2$. All other configurations will not contribute to the BCFW recursion; see Figure~\ref{fig:treefact} for an illustration.

\begin{figure}[h]
\centering
\includegraphics{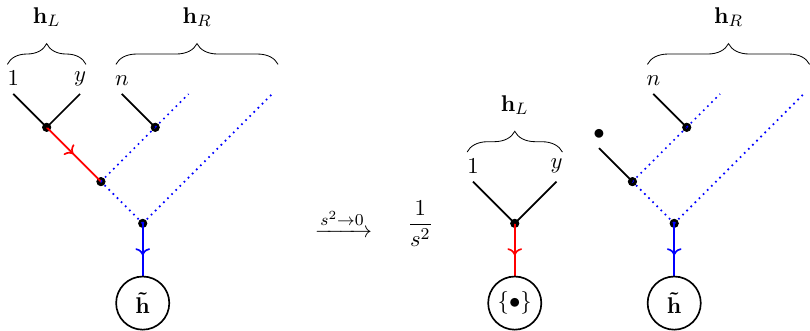}
\caption{The general decomposition of the product of propagators into propagators on the two halves. On the left, the red propagator takes contributions only from $\bolh_L$, whereas the leading (in $s^2$) contributions to the blue propagators come only from $\bolh_R$ on the right. In effect this `plucks' a branch from the tree.}
\label{fig:treefact}
\end{figure}

\medskip

\paragraph{Factorisation:} We can now focus entirely on this singular configuration, and determine the precise behaviour of each ingredient in $S^{-1\,\hat{T}}_{n,d}$ as $s\to 0$. Of the three factors making up $S^{-1\,\hat{T}}_{n,d}$, the easiest to consider is $\bbS^{-1}_{\bolth}[\bar{\rho}|\bar{\omega}]$, since this only depends on points in $\bolth\cup\{b,t\}$, all of which can be taken to lie on $\Sigma_R$ in the degenerate limit. Therefore, it follows from 
\be\label{tphidegscale}
\tilde{\phi}_{ij}=\frac{v^2}{r^2}\,s^{2d+3}\,\tilde{\phi}_{ij\,R}+O(s^{2d+5})\,,
\ee
that
\be\label{Stildedegscale}
\bbS^{-1}_{\bolth}[\bar{\rho}|\bar{\omega}]=\left(\frac{v}{r}\right)^{-2d}\,s^{-d(2d+3)}\,\bbS^{-1}_{\bolth}[\bar{\rho}|\bar{\omega}]_{R}\,,
\ee
where the subscript `$R$' on quantities such as $\tilde{\phi}_{ij\,R}$ or $\bbS^{-1}_{\bolth}[\bar{\rho}|\bar{\omega}]_{R}$ indicates that these are now defined in the affine coordinate patch on $\Sigma_R$. Observe that the scaling \eqref{Stildedegscale} is independent of $\bolh_L$, the distribution of points on $\Sigma_L$.

Next, we consider the pre-factor $\mathcal{D}^{-1}(\hat{\omega})$. Without loss of generality\footnote{It is a straightforward, if somewhat tedious, calculation to show that the same result is obtained with a generic distribution of these points.}, we can take all of the points $b,t,\omega^*\in\bolh$ to lie on $\Sigma_R$ in the limit $s\to 0$. In this case, we have
\begin{multline}\label{calDdegscale}
\mathcal{D}^{-1}(\hat{\omega})=\frac{s^{4-2(d-1)(n-d-3)}\,(u_{\omega^*}-u_y)_R\,(u_t-u_{\bar{\omega}^*})_R}{(u_a-u_b)_R\,(u_{\omega^*}-u_{\bar{\omega}^*})_R\,(u_b-u_y)_R\,(u_a-u_t)_R\,(u_y-u_t)^2_R} \\
\times\,\prod_{\substack{i\in\bolh_R\setminus\{b,t\} \\ j\in\bolth\setminus\{a,y\}}}(u_i-u_j)_R^2\,\prod_{j\in\bolth\setminus\{a,y\}}(u_\bullet-u_j)^2_R +\cdots\\
=s^{4-2(d-1)(n-d-3)}\,\mathcal{D}^{-1}(\hat{\omega})_R\,\prod_{j\in\bolth\setminus\{a,y\}}(u_\bullet-u_j)^2_R+\cdots\,,
\end{multline}
where the `$+\cdots$' indicates terms which are higher-order in $s$. 

Finally, we must account for the behaviour of $\bbS^{-1}_{\bolh}[\rho|\omega]$, which is the most complicated due to its dependence on $\bolh_L$. Using \eqref{udegens} and \eqref{urescales2}, one finds that
\begin{equation}\label{phidegenscal}
    \phi_{ij} = \begin{dcases}
        -s^{3-2d}\,t_it_j\,[i\,j]\,\frac{(u_i-u_j)_L}{(u_\bullet-u_i)_L\,(u_\bullet-u_j)_L}\prod_{l\in\bolth\setminus\{a,y\}}(u_{\bullet}-u_l)^2_R\,, & i,j\in L \\
        -\frac{s^{1-2d}\,r^2}{v^2}\,t_it_j\,[i\,j]\,(u_i-u_j)_R\,\prod_{l\in\bolth\setminus\{a,y\}}(u_i-u_l)_R\,(u_i-u_l)_R\,, & i,j\in L \\
        -\frac{s^{1-2d}\,r}{v}\,t_it_j\,[i\,j]\,(u_\bullet-u_j)_R\prod_{l\in\bolth\setminus\{a,y\}}(u_\bullet-u_l)_R\,(u_j-u_l)_R\,, & i\in L,\,j\in R
    \end{dcases}
\end{equation}
Recall that for a specified binary tree $T$
\begin{equation}\label{ShinvT}
\mathbb{S}^{-1\,T}_{\bolh}[\rho|\omega] = \frac{1}{\phi_{\mathrm{total}}} \prod_{E \in T} \frac{1}{\phi_E}\,,
\end{equation}
where 
\begin{equation}
\phi_E = \sum_{(i \rightarrow j) \in E} \phi_{ij}\,.
\end{equation}
Any $\bolh_L$ which is not connected with respect to $T$ (i.e., appearing as a leaf in the tree $T$) will involve more factors of $\phi_{ij}^{-1}$ with one index valued in $L$ and the other valued in $R$ and fewer factors of $\phi_{ij}^{-1}$ with both indices valued in $L$ in contrast to connected contributions (i.e., where $\bolh_L$ forms a leaf of $T$). 

Consequently, from \eqref{phidegenscal}, we see that such disconnected/non-leaf contributions will scale with a less singular power of $s$ as $s\to0$ than connected/leaf contributions; see Figure~\ref{fig:possconfigs}. However, we already established that the maximally-singular contribution to $m_{n,d}^{T}$ is in fact the only singular one, so to obtain the factorization behaviour of $\bbS^{-1}_{\bolh}$ corresponding to \eqref{invkerscaling2}, we must take the terms in \eqref{ShinvT} with the most singular scaling possible, corresponding to $\bolh_L$ being a single leaf of $T$, as illustrated in Figure~\ref{fig:treefact}. This means that on $\Sigma_L$, the node $u_{\bullet L}$ is treated as being the only member of $\bolth_L$, while on $\Sigma_R$ the node $u_{\bullet R}$ is treated as an additional element of $\bolh_R$. 

\begin{figure}
    \centering
    \includegraphics{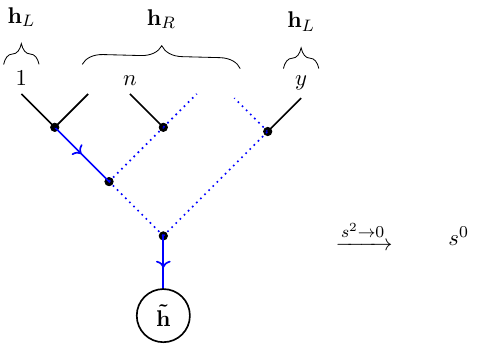}
    \caption{Example of a configuration where we have to split up the left and right leaves in the sub-amplitude, so that 1 and $y$ are no longer adjacent. Note that this now has more blue propagators (scaling as $s$ overall in the contribution from this factorisation channel, compared to $s^{-1}$ coming from red propagators) than the unbroken configuration, and therefore will not create a pole (as all other contributions remain unchanged).}
    \label{fig:possconfigs}
\end{figure}

Explicitly, this leads to
\be
    \bbS^{-1\,T}_{\bolh}[\rho|\omega] = s^{d(2n-2d-3)-n}\left(\frac{v}{r}\right)^{-2(d-n+2)} \prod_{j \in \bolth \setminus \{a, y\}} (u_{\bullet}-u_j)_R^{-2}\,\bbS^{-1\,T_L}_{\bolh_L}\,\bbS^{-1\,T_R}_{\bolh_R\cup\{\bullet_R\}}+\cdots\,, \label{nontildfactorise}
\ee
where $\bbS^{-1\,T_R}_{\bolh_R\cup\{\bullet_R\}}$ is defined in the obvious way on $\Sigma_R$ and $\bbS^{-1\,T_L}_{\bolh_L}$ is defined in terms of
\begin{equation}
    \phi_{ij}^{L} := -t_i\,t_j\,[i\,j]\, \frac{(u_i-u_j)_L}{(u_i-u_\bullet)_L\,(u_j-u_\bullet)_L}\,, \label{eq:zerod}
\end{equation}
for $\bolh_L=\{i,j\}$. The binary tree $T_L$ is composed of a single cubic vertex with external edges $\{i,j\}=\bolh_L$ (where one of $i,j$ is $1$ or $n$) and rooted at $\bullet_L$. It then immediately follows that
\be\label{SLexpression}
\bbS^{-1\,T_L}_{\bolh_L}=-\frac{(u_i-u_{\bullet})_L\,(u_j-u_\bullet)_L}{t_i t_j\,[i\,j]\,(u_i-u_j)_L}\,,
\ee
from the extremely simple structure of $T_L$.

\medskip

\paragraph{Collecting factors:} At this stage, we have isolated the leading behaviour of all of the ingredients of $m_{n,d}^{\hat{T}}$ in the $s\to 0$ degeneration limit. Collecting \eqref{eq:nodeid}, \eqref{measurewfscale}, \eqref{Stildedegscale}, \eqref{calDdegscale} and \eqref{nontildfactorise}, one observes a remarkable cancellation among the various powers of $s$, $v$ and $r$ to leave:
\begin{multline}\label{totfact1}
m_{n,d}^{\hat{T}}=\int\frac{\d s^2}{s^2}\,\D^{3}Z\left[\frac{\d^{4}\cZ_{\bullet}}{\vol\,\C^*}\,\d\nu_{L}\,\bbS^{-1\,T_L}_{\bolh_L}\,v\,\d v\,\bar{\delta}^{4}(Z-v\,\cZ_{\bullet})\,\prod_{i\in\bolh_L}\frac{\varphi_i(Z(u_{iL}))}{(u_{\bullet}-u_i)_L^2}\right] \\
\times \left[\frac{\d^{4(d+1)}\cY}{\vol\,\C^*}\,\d\nu_R\,\mathcal{D}^{-1}_R\,\bbS^{-1\,T_R}_{\bolh\cup\{\bullet_R\}}\,\bbS^{-1\,\bar{T}}_{\bolth}\,r\,\d r\,\bar{\delta}^{4}(Z-r\,\cY_{\bullet})\,\prod_{j\in\bolh_R\cup\bolth}\varphi_j(Z(u_{jR}))\right]\,,
\end{multline}
where we have dropped all regular terms as $s\to 0$, suppressed explicit dependence on the two colour-orderings and
\be\label{Rmodmeas}
\d^{4(d+1)}\cY:=\d^{4}\cY_{\bullet}\prod_{b=1}^{d}\d^{4}\cY_{b}\,,
\ee
denotes the measure on the map moduli corresponding to the right-hand branch of the degeneration.

Now, observe that as a distribution
\be\label{weighteddelta1}
\int_{\C^*}v\,\d v\,\bar{\delta}^{4}(Z-v\,\cZ_{\bullet})=\bar{\delta}^{3}_{-2,-2}\!\left(Z,Z(u_{\bullet L})\right)\,,
\ee
where the right-hand-side is a holomorphic delta function, that is, a $(0,3)$-distribution on $\P^3$, with support where the node on $\Sigma_L$ is mapped to the point $Z\in\PT$ (i.e., $Z(u_{\bullet L})\propto Z$) which is homogeneous of weight $-2$ in both of its arguments. In \eqref{totfact1}, this distribution is integrated against both $\D^{3} Z$ and $\d u_{\bullet L}$. Consequently, viewed as a differential form
\be\label{weighteddelta2}
\bar{\delta}^{3}_{-2,-2}(Z,Z(u_{\bullet L}))\in\Omega^{0,2}(\PT_Z,\cO(-2))\otimes\Omega^{0,1}(\P^1_{\bullet_L},\cO(-2))\,,
\ee
where the line bundles in each factor are understood to be defined over the respective projective varieties.

Integration against $\d u_{\bullet L}$ reduces the distribution to a $(0,2)$-distribution with compact support on $\PT$. Together with the integral over $\D^3 Z$ appearing in \eqref{totfact1}, this defines an exact pairing with twistor representatives of massless scalars, which, by the Penrose transform, are classes in $H^{0,1}(\PT,\cO(-2))$. So given any $\varphi(Z)\in H^{0,1}(\PT,\cO(-2))$, the integral
\be\label{cohpairing}
\int_{\PT_Z\times\P^1_{\bullet_L}}\D^{3}Z\,\d u_{\bullet L}\,\bar{\delta}^{3}_{-2,-2}(Z,Z(u_{\bullet L}))\,\varphi(Z)\,,
\ee
is simply a complex number. Observing that, similarly,
\be\label{weighteddelta3}
\int_{\C^*}r\,\d r\,\bar{\delta}^{4}(Z-r\,\cY_{\bullet})=\bar{\delta}^{3}_{-2,-2}(Z,Z(u_{\bullet R}))\,,
\ee
it follows that the integral over $\D^3Z$ is playing the role of a sum over a complete basis of massless, on-shell scalar wavefunctions inserted at the nodes on $\Sigma_L$ and $\Sigma_R$, respectively.

Consequently, we can define the wavefunctions
\be\label{nodewfs}
\varphi_{\bullet_L}(Z(u_{\bullet L})):=\bar{\delta}^{3}_{-2,-2}(Z,Z(u_{\bullet L}))\,, \qquad \varphi_{\bullet_R}(Z(u_{\bullet R})):=\bar{\delta}^{3}_{-2,-2}(Z,Z(u_{\bullet R}))\,,
\ee
at each node. The factorisation behaviour \eqref{totfact1} can then be written as
\be\label{totfact2}
m_{n,d}^{\hat{T}}=\int \D^{3}Z\,\d s^2\left[m_{3,0}^{T_L}(\bolh_L\cup\{\bullet_L\})\,\frac{1}{s^2}\,m_{n-1,d}^{T_R\cup\bar{T}}(\bolh_R\cup\{\bullet_R\}\cup\bolth)+O(s^2)\right]\,,
\ee
where
\be\label{mL1}
m_{3,0}^{T_L}(\bolh_L\cup\{\bullet_L\})=\int\frac{\d^{4}\cZ_{\bullet}}{\vol\,\C^*}\,\d\nu_L\,\prod_{i\in\bolh_L}\frac{1}{(u_i-u_{\bullet})_L^2}\,\bbS^{-1\,T_L}_{\bolh_L}\,\prod_{j\in\bolh_L\cup\{\bullet_L\}}\varphi_{j}(Z(u_{j L}))\,,
\ee
and 
\be\label{mR1}
m_{n-1,d}^{T_R\cup\bar{T}}(\bolh_R\cup\{\bullet_R\}\cup\bolth)=\int\frac{\d^{4(d+1)}\cY}{\vol\,\C^*}\,\d\nu_R\,\mathcal{D}^{-1}_R\,\bbS^{-1\,T_R}_{\bolh\cup\{\bullet_R\}}\,\bbS^{-1\,\bar{T}}_{\bolth}\,\prod_{i\in\bolh_R\cup\bolth\cup\{\bullet_R\}}\varphi_j(Z(u_{jR}))\,.
\ee
As the notation suggests, these two objects are simply the formula \eqref{PT-BAS} for the contributions to the BAS tree amplitudes corresponding to the binary rooted trees $T_L$ and $T_R\cup\bar{T}$, respectively. In the case of $m_{n-1,d}^{T_R\cup\bar{T}}$ given by \eqref{mR1}, this is obvious by comparison with the ingredients of \eqref{PT-BAS}, but for $m_{3,0}^{T_L}$ given by \eqref{mL1} the correspondence is less clear, as this expression is defined for a degree-zero map, whereas \eqref{PT-BAS} is defined for $d\geq1$. In fact, $m_{3,0}^{T_L}$ is equal to the 3-point amplitude of BAS theory, as can be verified by evaluating on momentum eigenstate wavefunctions.

\begin{lemma}\label{lemma:BAS3pt}
The quantity $m_{3,0}^{T}$, for $T$ the trivial cubic graph with a single vertex, is equal to the 3-point tree amplitude of BAS theory, in the sense that
\be\label{BAS3p0}
m_{3,0}^{T}=\delta^{4}\!\left(\sum_{i=1}^{3}k_i\right)\,,
\ee
when evaluated on momentum eigenstates.
\end{lemma}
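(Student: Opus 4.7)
The plan is to substitute the scalar momentum-eigenstate wavefunctions \eqref{momeigsc} into \eqref{mL1} with the three legs labelled $\{i,j,\bullet\}$, perform all the integrals explicitly, and show that the result is precisely the momentum-conserving delta function of \eqref{BAS3p0} (up to a conventional $(2\pi)^4$ that we absorb). First I would use the $\SL(2,\C)$ freedom in $\d\nu_L$ to gauge-fix the three marked points $u_{iL},u_{jL},u_{\bullet L}$ to arbitrary distinct values on the affine patch. The standard Faddeev--Popov procedure generates a Vandermonde Jacobian $(u_i-u_j)_L(u_j-u_\bullet)_L(u_\bullet-u_i)_L$; multiplying this against the prefactor $(u_i-u_\bullet)^{-2}_L(u_j-u_\bullet)^{-2}_L$ and against the explicit expression \eqref{SLexpression} for $\bbS^{-1\,T_L}_{\bolh_L}$, every factor of $(u-u)_L$ cancels pairwise, leaving behind only the simple constant $-1/(t_i\,t_j\,[i\,j])$ and no residual dependence on the marked points.

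Next, since $d=0$ the holomorphic map degenerates to the constant $Z(u)=Z_{\bullet}=(\mu_\bullet^{\dot\alpha},\lambda_{\bullet\,\alpha})$, so each scalar wavefunction in \eqref{mL1} collapses to $t_k\,\d t_k\,\bar\delta^{2}(\kappa_k-t_k\lambda_\bullet)\,\e^{\im t_k[\mu_\bullet\,k]}$. The two $\mu_\bullet^{\dot\alpha}$ components of $Z_{\bullet}$ appear only in the exponential, so the $\d^{2}\mu_\bullet$ integral immediately produces $(2\pi)^{2}\,\delta^{2}\!\bigl(\sum_{k\in\{i,j,\bullet\}}t_k\,\tilde\kappa_k^{\dot\alpha}\bigr)$, which is the $\tilde\kappa$-half of 4-momentum conservation. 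What remains is an integral over $\d^{2}\lambda_\bullet$ and the three scalings $t_k$ modulo the residual $\C^{*}$ quotient, weighted by the three holomorphic delta functions $\bar\delta^{2}(\kappa_k-t_k\lambda_\bullet)$. Two of these delta functions localise $\lambda_\bullet$ and one of the scalings (fixing $\lambda_\bullet\propto\kappa_i$ and $t_j/t_i=\la i\,j\ra$-type ratios), with accompanying Jacobians $\propto t_i t_j$ that combine with the $1/(t_i t_j [i\,j])$ factor from the first step; the third $\bar\delta^{2}$ then collapses to a distribution supported on the 3-point collinear locus $\kappa_i\parallel\kappa_j\parallel\kappa_\bullet$, which is the $\kappa$-half of 3-point on-shell momentum conservation.

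Stitched together with the $\delta^{2}\!\bigl(\sum_k t_k\tilde\kappa_k\bigr)$ from the $\mu_\bullet$ integral and the final $t_\bullet$ integral (fixed by the residual scaling), the combined constraints are equivalent to $\delta^{4}\!\bigl(\sum_{k=1}^{3}k_k^{\alpha\dot\alpha}\bigr)$ on the on-shell 3-point locus, establishing \eqref{BAS3p0}. The main obstacle is the last piece of bookkeeping: one must carefully track the Jacobians from the holomorphic delta-function localisations and the residual $\C^{*}$ rescaling to confirm that the $1/[i\,j]$ prefactor combines with the emergent collinear distribution to yield a \emph{helicity-free} momentum-conserving delta function, rather than a residual dressing by angle- or square-bracket factors of the kind that would be characteristic of 3-point MHV or $\overline{\mathrm{MHV}}$ gluon amplitudes. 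Once this cancellation is verified, the lemma follows.
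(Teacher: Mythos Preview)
Your proposal is correct and follows essentially the same route as the paper: gauge-fix the three marked points with $\SL(2,\C)$ so that the Vandermonde Jacobian cancels the $(u_i-u_j)_L$ factors, leaving $1/(t_it_j[i\,j])$; then integrate out $\mu_\bullet$ to produce the $\tilde\kappa$-delta function, use the $\C^*$ and $\lambda_\bullet$ integrals against the remaining $\bar\delta^2$'s, and finally localise the scaling integrals to assemble $\delta^4(\sum k)$. The paper carries out the final Jacobian bookkeeping you flag as the obstacle explicitly (fixing $t_3=1$ via $\C^*$, then doing the $t_1,t_2$ integrals against $\delta^2(t_1\tilde\kappa_1+t_2\tilde\kappa_2+\tilde\kappa_3)$ to produce the $1/[1\,2]^2$ that combines with the prefactor and remaining delta functions into the plain momentum-conserving $\delta^4$), confirming that no residual angle- or square-bracket dressing survives.
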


\proof Without loss of generality, let $T$ be the trivial cubic graph whose external legs are labeled by 1,2 and 3, and evaluate \eqref{mL1} on twistor momentum eigenstate representatives for the external wavefunctions. In this case, \eqref{mL1} is given by
\begin{multline}\label{BAS3p1}
m_{3,0}^{T}=\frac{1}{[1\,2]}\int\frac{\d^{4}\cZ_{\bullet}}{\vol\,\C^*}\,\frac{\d u_1\,\d u_2\,\d u_3}{\vol\,\SL(2,\C)}\,\frac{t_3\,\d t_1\,\d t_2\,\d t_3}{(u_1-u_2)\,(u_2-u_3)\,(u_3-u_1)} \\
\times\,\prod_{i=1}^{3}\bar{\delta}^{2}(\kappa_i-t_i\,\lambda_{\bullet})\,\e^{\im\,t_i\,[\mu_{\bullet}\,i]}\,.
\end{multline}
The $\SL(2,\C)$ freedom in the integral can be used to eliminate all dependence on the marked points $(u_1,u_2,u_3)$ with unit Jacobian, and the $\C^*$ freedom can be used to fix $t_3=1$ with unit Jacobian. The integrals over $\d^{4}\cZ=\d^{2}\mu_{\bullet}\,\d^{2}\lambda_{\bullet}$ can then be performed against the exponential and delta functions to leave
\be\label{BAS3p2}
\frac{1}{[1\,2]}\int\d t_1\,\d t_2\,\delta^2(\kappa_1-t_1\,\kappa_3)\,\delta^2(\kappa_2-t_2\,\kappa_3)\,\delta^{2}(t_1\,\tilde{\kappa}_1+t_2\,\tilde{\kappa}_2+\tilde{\kappa}_3)\,,
\ee
ignoring an irrelevant overall factor of $(2\pi)^2$. The two scaling integrals can now be performed against the final set of delta functions to give
\be\label{BAS3p3}
-\frac{1}{[1\,2]^2}\,\delta^2\!\left(\kappa_1+\frac{[2\,3]\,\kappa_3}{[2\,1]}\right)\,\delta^2\!\left(\kappa_2+\frac{[1\,3]\,\kappa_3}{[1\,2]}\right)=\delta^{4}\!\left(\sum_{i=1}^{3}k_i\right)\,,
\ee
as claimed. \qed

\medskip

In other words, as the notation suggest, $m_{3,0}^{T_L}$ is indeed equal to the tree-level BAS 3-point amplitude, simply written in terms of a degree zero (as opposed to a degree one) holomorphic rational map to twistor space. As \eqref{totfact2} captures all of the singular behaviour as $s\to0$, it follows that
\be\label{totfact3}
\underset{s\to0}{\mathrm{Res}}\:m_{n,d}^{\hat{T}}(\bolh\cup\bolth)=\sum_{\mathrm{sing.}\:\mathrm{trees}}\int\D^{3}Z\,m_{3,0}^{T_L}(\bolh_L\cup\{Z\})\,\,m_{n-1,d}^{T_R\cup\bar{T}}(\bolh_R\cup\{Z\}\cup\bolth)\,,
\ee
where the sum is over all decompositions of $T$ into $T_L$ and $T_R$ containing the simple pole in $s^2$ as $s\to0$. As showed above, the only such decompositions correspond to $T_L$ being a cubic tree graph with a single vertex -- see Figure~\ref{decompFig}.

This process can then be repeated on $m_{n-1,d}^{T_R\cup\bar{T}}$, splitting off further cubic tree graphs until the elements of $\bolh$ are exhausted. At this point, we have a $\overline{\text{MHV}}$ amplitude on $\bar{T}$ with two points remaining in $\bolh$, which we know is equal to the corresponding BAS amplitude via \eqref{MHV-BAS1} and `partity symmetry' (shown in Appendix \ref{parityApp}). We have therefore decomposed $m_{n,d}^{T}$ totally into single-vertex cubic graphs on $\bolh$ with each decomposition taking the form of \eqref{totfact3}, and the expected Feynman diagram contribution from the tree on $\bolth$, viewed as a $\overline{\text{MHV}}$ amplitude. 

Now, with the starting expression $m_{n,d}^{\hat{T}}$ evaluated on momentum eigenstates, it is known that factorisation of the form \eqref{totfact3} in twistor space is equivalent to a multi-particle factorization channel in momentum space~\cite{Arkani-Hamed:2009hub,Bullimore:2009cb,Mason:2009sa,Skinner:2010cz}. Thus, we have established that $m_{n,d}^{\hat{T}}$ contains all of the factorization channels of the Feynman diagram $\hat{T}$ contributing to the BAS tree amplitude $m_n$, with the correct residues when evaluated on the singularities of those channels. The same argument works for every other term in the sum over binary trees in \eqref{eqtreedecomp}. It is easy to see that there are no other singularities in the formula -- we have already established that there are no poles for large values of the BCFW shift parameter, and other spurious poles would spoil soft limits, which we check in Appendix~\ref{softApp} -- meaning that the formula \eqref{PT-BAS} has the same singularity structure as the BAS tree-amplitude. Therefore, it is in fact equal to the BAS tree-amplitude, as claimed in \eqref{PT-BAS2}, and obeys the same BCFW recursion relation.

\begin{figure}
\resizebox{0.9\textwidth}{!}{%
    \includegraphics{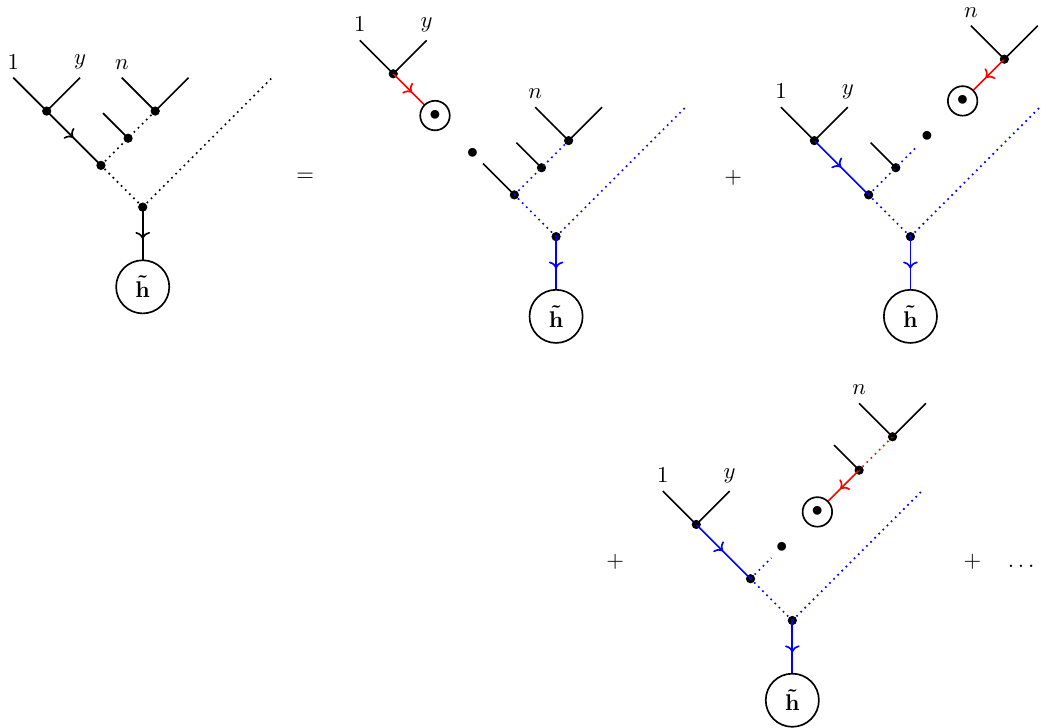}
}
\caption{Possible poles picked up by a tree subamplitude . Note that only the first line contributes because the rest contain $d=0$ amplitudes with more than 3 external particles, which evaluate to zero.}
    \label{decompFig}
\end{figure}


\section{Beyond flat spacetime}\label{sec:bg}

The astute reader will have observed that the graph theoretic arguments used in Section~\ref{sec:deriv} to derive the integral kernel on twistor space can be applied to \emph{any} formula based on determinants which encode a weighted counting of tree graphs via the matrix-tree theorem. Furthermore, twistor theory can be used to describe any solution to the vacuum Einstein equations with self-dual Weyl curvature and vanishing trace-free Ricci curvature~\cite{Penrose:1976js,Ward:1980am}. Consequently, one can ask if there are integral formulae for graviton scattering amplitudes in curved spacetimes which also admit an integral kernel representation indicative of double copy.

In fact, there are two known generalisations of the Cachazo-Skinner formula which also have this character. The first~\cite{Adamo:2015ina} is an adaptation of the formula to govern holomorphic maps from the Riemann sphere to the twistor space of AdS$_4$, aiming to describe tree-level graviton boundary correlation functions\footnote{We emphasize that although this formula is mathematically well-defined, its precise relationship to graviton boundary correlators in AdS$_4$, computed in positions space and for the Poincar\'e patch, remains unclear.}. The second~\cite{Adamo:2020syc,Adamo:2022mev} is a formula for the tree-level graviton scattering amplitudes in a class of chiral, asymptotically flat curved spacetimes known as self-dual radiative spacetimes.

In this section, we present the integral kernels that can be derived for these formulae; this is achieved using identical methods to those explained in detail in Section~\ref{sec:deriv}, so we gloss over most of the details. The resulting integral kernel defines a representation of the formula which is suggestive of a double copy, and we comment on potential difficulties with this interpretation in each case.


\subsection{AdS}

The metric of four-dimensional anti-de Sitter space, AdS$_4$, can be written as
\be\label{affinMAdS}
\d s^2=\frac{\d x_{\alpha\dot\alpha}\,\d x^{\alpha\dot\alpha}}{(1+\Lambda\,x^2)^2}\,,
\ee
where $x^{\alpha\dot\alpha}$ are coordinates on an affine Minkowski spacetime and $\Lambda<0$ is the cosmological constant. These coordinates have the advantage of admitting a smooth flat space limit ($\Lambda\to0$), although the boundary of AdS$_4$ is the finite hypersurface $x^2=-\Lambda^{-1}$. In~\cite{Adamo:2015ina}, it was shown how to extend the Cachazo-Skinner formula to a well-defined expression governing holomorphic maps from the Riemann sphere to the twistor space of AdS$_4$ in coordinates \eqref{affinMAdS}. 

The resulting formula looks superficially identical to the Cachazo-Skinner formula \eqref{CSform}:
\be\label{AdS1}
\cM^{\Lambda}_{n,d}=\int\d\mu_d\,|\bolth|^8\,\mathrm{det}'(\HH_{\Lambda})\,\mathrm{det}'\!\left(\HH^{\vee}_{\Lambda}\right)\,\prod_{i\in\bolh}h_i(Z(\sigma_i))\,\prod_{j\in\bolth}\tilde{h}_j(Z(\sigma_j))\,,
\ee
with the main distinction being in the entries of the matrices $\HH_{\Lambda}$ and $\HH^{\vee}_{\Lambda}$, which are where the formula is explicitly sensitive to the cosmological constant. In particular, the AdS conformal structure is encoded on twistor space through an \emph{infinity twistor} and its inverse; for the coordinates \eqref{affinMAdS} these are given by
\be\label{AdSIT}
I^{AB}=\left(\begin{array}{cc}
             \epsilon^{\dot\alpha\dot\beta} & 0 \\
             0 & \Lambda\,\epsilon_{\alpha\beta}
             \end{array}\right)\,, \qquad I_{AB}=\left(\begin{array}{cc}
                                                       \Lambda\,\epsilon_{\dot\alpha\dot\beta} & 0 \\
                                                       0 & \epsilon^{\alpha\beta}
                                                       \end{array}\right)\,,
\ee
which can be used to define the skew-symmetric brackets on quantities valued in cotangent and tangent bundles of twistor space, respectively:
\be\label{AdSbrackets}
\left[A,\,B\right]:=I^{AB}\,A_{A}\,B_{B}\,, \qquad \left\la C,\,D\right\ra:=I_{AB}\,C^{A}\,D^{B}\,.
\ee
It is these brackets which define the entries of the matrices:
\be\label{AdSHMat}
\begin{split}
\HH_{\Lambda\,ij}&=\frac{\sqrt{\D\sigma_i\,\D\sigma_j}}{(i\,j)}\,\left[\frac{\partial}{\partial Z(\sigma_i)},\,\frac{\partial}{\partial Z(\sigma_j)}\right]\,, \qquad i,j\in\bolh\,, i\neq j\,, \\
\HH_{\Lambda\,ii}&=-\,\D\sigma_i\,\sum_{\substack{j\in\bolh \\ j\neq i}}\frac{1}{(i\,j)}\,\left[\frac{\partial}{\partial Z(\sigma_i)},\,\frac{\partial}{\partial Z(\sigma_j)}\right]\prod_{l\in\bolth}\frac{(j\,l)}{(i\,l)} \,, \quad i\in\bolh\,.
\end{split}
\ee
a $(n-d-1)\times(n-d-1)$ matrix as before, and
\be\label{AdSdHmat}
\begin{split}
\HH^{\vee}_{\Lambda\,ij}&=\frac{\left\la Z(\sigma_i),\,Z(\sigma_j)\right\ra}{(i\,j)}\,, \qquad i,j\in\bolth\,, i\neq j \\
\HH^{\vee}_{\Lambda\,ii}&=-\frac{\left\la Z(\sigma_i)\,\d Z(\sigma_i)\right\ra}{\D\sigma_i}\,, \qquad i\in\bolth\,,
\end{split}
\ee
a $(d+1)\times (d+1)$ matrix as before. The reduced determinants appearing in \eqref{AdS1} are defined in the same way as before, by \eqref{redHdet}, \eqref{reddHdet}.

As in Minkowski space, the reduced determinant $\mathrm{det}'(\HH^{\vee}_{\Lambda})$ can still be interpreted as a resultant, but now for the full map $Z:\P^1\to\PT_{\Lambda}$, ensuring that the entire expression only has support for holomorphic maps which land inside the twistor space
\be\label{AdSPT}
\PT_{\Lambda}=\left\{Z\in\P^3\,|\,I_{AB}\,Z^{B}\neq 0\right\}\,,
\ee
for $I_{AB}$ given by \eqref{AdSIT}. Observe that all of the ingredients in the formula \eqref{AdS1} have a smooth flat space limit, in the sense that when $\Lambda\to0$, the formula reduces to the original Cachazo-Skinner formula \eqref{CSform}\footnote{Observe that \eqref{AdS1} is polynomial in $\Lambda$, so it can also be viewed as a formula for graviton scattering in dS$_4$ by simply continuing to $\Lambda>0$.}.

\medskip

Now, by following the exact same steps as in Section~\ref{sec:deriv}, it is possible to re-write the formula \eqref{AdS1} as
\be\label{AdS2}
\cM^{\Lambda}_{n,d}=\sum_{\substack{b\rho,b\omega\in\mathcal{S}(\bolh) \\ a\bar{\rho},a\bar{\omega}\in\mathcal{S}(\bolth)}}\int\d\mu_{n,d}\,\cI_{n}^{\bolth}[a\bar{\rho}b\rho] \, S^{\Lambda}_{n,d}[\rho,\bar{\rho}|\omega,\bar{\omega}]\,\cI_{n}^{\bolth}[\bar{\omega}^{\mathrm{T}}ab\omega] \,\prod_{i\in\bolh}h_i\,\prod_{j\in\bolth}\tilde{h}_j\,,
\ee
in terms of an AdS$_4$ integral kernel
\be\label{AdSIK1}
S_{n,d}^{\Lambda}=\mathcal{D}(\hat{\omega})\,\bbS^{\Lambda}_{\bolh}[\rho|\omega]\,\bbS^{\Lambda}_{\bolth}[\bar{\rho}|\bar{\omega}]\,.
\ee
The prefactor $\mathcal{D}(\hat{\omega})$ is the same as before \eqref{genwfactor}, while the positive and negative helicity kernels have the same structure as \eqref{phmker} and \eqref{nhmker}, but defined in terms of
\be\label{AdSphi}
\phi_{ij}=(i\,j)\,\left[\frac{\partial}{\partial Z(\sigma_i)},\,\frac{\partial}{\partial Z(\sigma_j)}\right]\,\prod_{l\in\bolth\setminus\{a,y\}}(i\,l)\,(j\,l)\,, \quad i,j\in\bolh\,,
\ee
and
\be\label{AdStildephi}
\tilde{\phi}_{ij}=\frac{\left\la Z(\sigma_i),\,Z(\sigma_j)\right\ra}{(i\,j)}\,\prod_{k\in(\bolth\cup\{b,t\})\setminus\{i,j\}}\frac{1}{(k\,i)\,(k\,j)}\,, \quad i,j\in\bolth\,,
\ee
respectively.

Now, as Yang-Mills theory is classically conformally invariant, it follows that the functional form of the RSVW formula \eqref{RSVW} should be un-modified on AdS$_4$, with the only imprint of the conformal structure coming from boundary conditions arising from the moduli integrals associated with the new hypersurface at infinity in \eqref{AdSPT}. In particular, $\mathcal{I}_{n}^{\tilde{\bolg}}[\alpha]$ should still be the integrand for tree-level gluon boundary correlators in AdS$_4$. Taken at face value, \eqref{AdS2} seems to provide an explicit twistorial double copy on AdS$_4$ in terms of the integral kernel \eqref{AdSIK1}.

While this conclusion is certainly tantalizing, a few words of caution are in order. Firstly, we reiterate that the precise connection between \eqref{AdS1} or the RSVW formula in AdS$_4$ with graviton/gluon boundary correlators has not been established. While these formulae certainly know something about the tree-level boundary correlators in AdS, they could be missing pure boundary contributions such as those which contribute to correlators with all external particles of the same helicity~\cite{Maldacena:2011nz}. Secondly, we do not currently have any physical interpretation of the inverse of the AdS kernel \eqref{AdSIK1}. If this were a true manifestation of double copy, one would expect this inverse to encode information about BAS `scattering' in some background (possibly AdS), but this does not seem to be immediately obvious. Finally, we observe that other explorations of double copy in AdS using explicitly Witten diagram/boundary correlator computations in spacetime have found structures which seem much more complicated than \eqref{AdS2} (cf., \cite{Armstrong:2020woi,Alday:2021odx,Zhou:2021gnu,Diwakar:2021juk,Alday:2022lkk,Herderschee:2022ntr,Cheung:2022pdk,Drummond:2022dxd,Lee:2022fgr,Lipstein:2023pih,Armstrong:2023phb,Mei:2023jkb,Liang:2023zxo}), so an explicit matching between our formulae and position or momentum space expressions is required before drawing any conclusions.


\subsection{Self-dual radiative spacetimes}

A self-dual (SD) radiative spacetime is a complex, asymptotically flat spacetime which is a self-dual solution of the vacuum Einstein equations uniquely determined by its radiative characteristic data on complexified null infinity~\cite{Newman:1976gc,Hansen:1978jz,Ludvigsen:1981,ko1981theory}. Historically, SD radiative metrics were also called `heavens' or $\cH$-spaces. Being vacuum self-dual solutions, any SD radiative metric corresponds to a twistor space $\CPT$ obtained as a complex deformation of $\PT$ which preserves the holomorphic fibration over $\P^1$ and the (weighted) Poisson structure on its fibres~\cite{Penrose:1976js}. 

The radiative data defining a SD radiative spacetime is encoded by a spin- and conformally-weighted function of three variables, denoted $\tilde{\bigma}^0$, corresponding to the asymptotic shear of constant $u$ hypersurfaces in (complexified) Bondi coordinates~\cite{Sachs:1961zz,Jordan:1961,Newman:1961qr,Bondi:1962px,Sachs:1962wk}. Pulled back to twistor space, this object is~\cite{Eastwood:1982}
\be\label{aShear}
\tilde{\bigma}^0([\mu\,\bar{\lambda}],\,\lambda_{\alpha},\,\bar{\lambda}_{\dot\alpha})\,,
\ee
where $\bar{\lambda}_{\dot\alpha}$ is the complex conjugate of $\lambda_{\alpha}$ and $\tilde{\bigma}^0$ has holomorphic homogeneity $+1$ and anti-holomorphic homogeneity $-3$ under projective rescalings on twistor space. This data can be used to directly parametrize the complex structure on $\CPT$ in terms of a Dolbeault operator
\be\label{defCStr}
\bar{\nabla}=\dbar+\tilde{\bigma}^0\,\D\bar{\lambda}\,\bar{\lambda}^{\dot\alpha}\,\frac{\partial}{\partial\mu^{\dot\alpha}}\,,
\ee
which automatically obeys the integrability condition $\bar{\nabla}^2=0$.

Using a chiral sigma model formulation of the holomorphic curves in $\CPT$~\cite{Adamo:2021bej}, it has been possible to derive the MHV amplitude for tree-level graviton scattering in any SD radiative spacetime, and there is a conjectural formula for the complete tree-level S-matrix which passes several non-trivial tests~\cite{Adamo:2022mev}. This general formula for the N$^{d-1}$MHV graviton tree-amplitude on the SD radiative background is given by:
\begin{multline}\label{SDRamp1}
\cM_{n,d}=\sum_{t=0}^{n-d-3}\sum_{p_1,\ldots,p_t}\int\d\mu_{d}\,|\bolth|^8\,\mathrm{det}'(\HH^{\vee})\,\prod_{i\in\bolh}h_i(Z(\sigma_i))\,\prod_{j\in\bolth}\tilde{h}_j(Z(\sigma_j)) \\
\times\,\left.\left(\prod_{\m=1}^{t}\D\sigma_\m\wedge\D\bar{\lambda}(\sigma_\m)\,\frac{N^{(p_\m-2)}(\sigma_\m)}{p_\m!}\frac{\partial^{p_\m}}{\partial\varepsilon_\m^{p_\m}}\right)\mathrm{det}'(\cH)\right|_{\varepsilon=0}\,.
\end{multline}
Here, the sum over $t=0,\ldots,n-d-3$ indexes the number of tail contributions to the amplitude, where external gravitons scatter off the background curvature leading to graviton-background and background-background interactions. Each index $p_\m$, for $m=1,\ldots,t$ is summed over all values greater than two, and encodes how many times a given background insertion contributes to the amplitude. These background insertions appear through 
\be\label{SDRnews}
N^{(k)}(\sigma_\m):=-\left.\frac{\partial^{k+1}}{\partial u^{k+1}}\tilde{\bigma}^0(u,\lambda(\sigma_\m),\bar{\lambda}(\sigma_\m))\right|_{u=[\mu(\sigma_\m)\,\bar{\lambda}(\sigma_\m)]}\,,
\ee
the $k$-times differentiated news function of the SD radiative background.

While the matrix $\HH^{\vee}$ is unchanged from Minkowski space due to the self-duality of the background, $\HH$ is replaced in each term of \eqref{SDRamp1} by a $(n+t-d-1)\times(n+t-d-1)$ matrix $\cH$, which is then fed into a reduced determinant akin to \eqref{redHdet}. From this reduced determinant, one extracts only polynomials in the $\m^{\mathrm{th}}$ background insertion of order $p_\m$; this is the role played by the formal parameter $\varepsilon_\m$ which enters (at most linearly) in the entries of $\cH$~\cite{Adamo:2022mev}.

For our purposes, all that is required is the decomposition of $\mathrm{det}'(\cH)$ according to the matrix-tree theorem. This follows by the same methods as in flat space, to give
\be\label{cHdecomp1}
\mathrm{det}'(\cH)=\frac{1}{|\bolth|^2}\,\prod_{\substack{k\in\bolh\cup\mathbf{t} \\ l\in\bolth}}\frac{1}{(k\,l)^2}\sum_{\substack{T^{b} \\ \mathrm{spanning}\,\,\bolh\cup\mathbf{t}}}\prod_{(i\to j)}\frac{[K_i\,K_j]}{(i\,j)}\,\prod_{q\in\bolth}(i\,q)\,(j\,q)\,,
\ee
where $\mathbf{t}$ is the set of $\m=1,\ldots,t$ background insertions and 
\be\label{SDmomspin}
K_{i}^{\dot\alpha}=\left\{\begin{array}{l}
                         \im\,t_i\,H^{\dot\beta\dot\alpha}(U,\sigma_i)\,\tilde{\kappa}_{i\,\dot\beta} \quad i\in\bolh \\
                         \im\,\varepsilon_i\,H^{\dot\beta\dot\alpha}(U,\sigma_i)\,\bar{\lambda}_{\dot\beta}(\sigma_i) \quad i\in\mathbf{t}
                         \end{array}\right. \,,
\ee
for $H^{\dot\alpha\dot\beta}(U,\sigma)$ the holomorphic frame of the self-dual spinor bundle on the SD radiative spactime, pulled back to the degree $d$ holomorphic curve in $\CPT$:
\be\label{SDRholframe}
\dbar H^{\dot\alpha\dot\beta}(U,\sigma)=\bar{\lambda}^{\dot\alpha}(\sigma)\,\bar{\lambda}_{\dot\gamma}(\sigma)\,H^{\dot\gamma\dot\beta}(U,\sigma)\,N(\sigma)\,\D\bar{\lambda}(\sigma)\,.
\ee
Using Proposition~\ref{PTprop}, \eqref{cHdecomp1} can be written as
\begin{multline}\label{cHdecomp2}
\mathrm{det}'(\cH)=\frac{(b\,x)\,(b\,y)}{\left|\bolth\cup\{b\}\right|^2} \sum_{\rho^+,\omega^+\in\mathcal{S}(\bolh\cup\mathbf{t}\setminus\{b\})}\brpt_{n+t-d}[b\rho^+x]\,\brpt_{n+t-d}[b\omega^+y] \\
\times\sum_{T\in\mathcal{T}^{b}_{\rho^+,\omega^+}}\prod_{(i\to j)\in E(T)}\left([K_i\,K_j]\,(i\,j)\,\prod_{l\in\bolth\setminus\{x,y\}}\frac{(j\,l)}{(i\,l)}\right)\,,
\end{multline}
in terms of broken Parke-Taylor factors.

As it stands, this decomposition is not yet appropriate to defining a momentum kernel, as it involves orderings on the set $\bolh\cup\mathbf{t}$ which includes background insertions. To compensate for this, we can define a `correction' factor for any $(n+t-d)$-point broken Parke-Taylor factor which breaks it down to the desired $(n-d)$-point object:
\be\label{PTcorrection}
\brpt_{n+t-d}[b\rho^+x]:=\brpt_{n-d}[b\rho x]\,\mathcal{C}(b\rho^+x;b\rho x)\,,
\ee
where $b\rho$ on the right-hand-side is now an ordering on $\bolh$ alone.  By definition, the correction factor $\mathcal{C}(b\rho^+ x;b\rho x)$ will be homogeneous of weight $-2$ on $\P^1$ in each element of $\mathbf{t}$ but homogeneous of weight zero in its other arguments. For example,
\be\label{Correctionex}
\mathcal{C}(b123 t_1 x; b123x)=\frac{(3\,x)}{(3\,t_1)\,(t_1\,x)}\,,
\ee
where $t_1$ denotes a single background insertion. 

\medskip

This allows the graviton amplitude \eqref{SDRamp1} to be rewritten in terms of an integral kernel:
\be\label{SDradamp2}
\cM_{n,d}=\sum_{\substack{b\rho,b\omega\in\mathcal{S}(\bolh) \\ a\bar{\rho},a\bar{\omega}\in\mathcal{S}(\bolth)}}\int\d\mu_{n,d}\,\cI_{n}^{\bolth}[a\bar{\rho}b\rho] \, S^{N}_{n,d}[\rho,\bar{\rho}|\omega,\bar{\omega}]\,\cI_{n}^{\bolth}[\bar{\omega}^{\mathrm{T}}ab\omega] \,\prod_{i\in\bolh}h_i\,\prod_{j\in\bolth}\tilde{h}_j\,,
\ee
where the background-dependent integral kernel still admits a chiral splitting
\be\label{SDradik1}
S^{N}_{n,d}[\rho,\bar{\rho}|\omega,\bar{\omega}]=\mathcal{D}(\hat{\omega})\,\bbS_{\bolth}[\bar{\rho}|\bar{\omega}]\,\bbS^{N}_{\bolh}[\rho|\omega]\,,
\ee
with only the positive helicity factor $\bbS^{N}_{\bolh}$ sensitive to the background. Explicitly, the positive helicity factor of the integral kernel is given by
\begin{multline}\label{SDradik2}
\bbS^{N}_{\bolh}[\rho|\omega]=\sum_{t=0}^{n-d-3}\sum_{p_1,\ldots,p_t}\left(\prod_{\m=1}^{t}\D\sigma_\m\wedge\D\bar{\lambda}(\sigma_\m)\,\frac{N^{(p_\m-2)}(\sigma_\m)}{p_\m!}\frac{\partial^{p_\m}}{\partial\varepsilon_\m^{p_\m}}\right) \\
\sum_{b\rho^+,b\omega^+\in\mathcal{S}_{\rho,\omega}(\bolh\cup\mathbf{t})}\Bigg[\mathcal{C}(b\rho^+x;b\rho x)\,\mathcal{C}(b\omega^+y;b\omega y) \\
\left.\left.\sum_{T\in\mathcal{T}^b_{\rho^+,\omega^+}}\prod_{(i\to j)\in E(T)}[K_i\,K_j]\,(i\,j) \prod_{l\in\bolth\setminus\{x,y\}}(i\,l)\,(j\,l)\right]\right|_{\varepsilon=0}\,.
\end{multline}
The sum over $b\rho^+,b\omega^+\in\mathcal{S}_{\rho,\omega}(\bolh\cup\mathbf{t})$ denotes summing over all compatible extensions of the orderings $b\rho,b\omega$ to include the $t$ background insertions.

\medskip

The double copy interpretation of this formula is not immediately clear. First of all, the extension of the RSVW formula to a SD radiative gauge field background is given by~\cite{Adamo:2020yzi}
\be\label{SDradglu1}
\cA_{n,d}[\rho]=\int\d\mu_{d}\,|\tilde{\bolg}|^4\,\prod_{i=1}^{n}\frac{\mathsf{H}^{-1}(U,\sigma_{\rho(i)})\mathsf{T}^{\msf{a}_{\rho(i)}}\mathsf{H}(U,\sigma_{\rho(i)})\,\D\sigma_i}{(\rho(i)\,\rho(i+1))}\,\prod_{j\in\bolg}a_j\,\prod_{k\in\tilde{\bolg}}b_k\,,
\ee
where $\mathsf{H}(U,\sigma)$ denotes the holomorphic trivialization of a SD radiative gauge field pulled back to a curve in twistor space, defined by
\be\label{RGluHF}
\dbar\mathsf{H}(U,\sigma)=-\tilde{\cA}^{0}(\sigma)\,\mathsf{H}(U,\sigma)\,\D\bar{\lambda}(\sigma)\,,
\ee
where $\tilde{\cA}^{0}$ is the characteristic radiative data of the SD gauge field pulled back to twistor space. Clearly, there is no hint of these holomorphic frames appearing in the formula \eqref{SDradamp2}.

However, in the special case of Cartan-valued SD radiative gauge field backgrounds, then a double copy prescription \emph{is} possible. In this case
\be\label{RGLcart1}
\mathsf{H}^{-1}(U,\sigma_{i})\mathsf{T}^{\msf{a}_{i}}\mathsf{H}(U,\sigma_{i})=\exp\!\left[e_i\,g(U,\sigma_i)\right]\,\mathsf{T}^{\msf{a}_i}\,,
\ee
where $e_i$ is the root-valued charge of $\mathsf{T}^{\msf{a}_i}$ with respect to the Cartan subalgebra (where the background gauge field lives) and $g(U,\sigma)$ is defined by
\be\label{RGLcart2}
\dbar g(U,\sigma)=\tilde{\cA}^{0}(\sigma)\,\D\bar{\lambda}(\sigma)\,.
\ee
With this simplification, \eqref{SDradglu1} becomes
\be\label{SDradglu2}
\cA_{n,d}[\rho]=\int\d\mu_d\,|\tilde{\bolg}|^4\,\mathrm{PT}_n[\rho]\,\prod_{i\in\bolg}a_i\,\e^{e_i\,g(U,\sigma_i)}\,\prod_{j\in\tilde{\bolg}}b_j\,\e^{e_j\,g(U,\sigma_j)}\,,
\ee
and \eqref{SDradamp2} becomes a double copy formula when written as:
\be\label{SDradamp3}
\cM_{n,d}=\sum_{\substack{b\rho,b\omega\in\mathcal{S}(\bolh) \\ a\bar{\rho},a\bar{\omega}\in\mathcal{S}(\bolth)}}\int\d\mu_{n,d}\,\cI_{n}^{\bolth,e}[a\bar{\rho}b\rho] \, S^{N}_{n,d}[\rho,\bar{\rho}|\omega,\bar{\omega}]\,\cI_{n}^{\bolth,-e}[\bar{\omega}^{\mathrm{T}}ab\omega] \,\prod_{i\in\bolh}h_i\,\prod_{j\in\bolth}\tilde{h}_j\,,
\ee
where $\cI_{n}^{\bolth,e}$ denotes the integrand of \eqref{SDradglu2} with Cartan-valued charges $e=\{e_1,\ldots,e_n\}$. 

In particular, by taking the double copy between gluon amplitudes with sign-reversed background charges, the exponentials depending on the gauge field background in \eqref{SDradglu2} cancel, leaving \eqref{SDradamp2} as required. Precisely this charge-reversed prescription was found previously in constructing double copies for the only known examples of 3-point amplitudes on \emph{non-chiral} radiative backgrounds in gauge theory and gravity~\cite{Adamo:2017nia,Adamo:2020qru}. In this sense, \eqref{SDradamp2} does indeed have some double copy interpretation, but it seems strange that such a formula would correctly capture double copy for Cartan-valued SD radiative gauge backgrounds but not fully non-abelian ones. It is not known how to replace the charge-reversed double copy prescription when the gauge field background becomes non-abelian. Furthermore, we do not know how to invert the positive helicity integral kernel \eqref{SDradik2} as a linear map on the space of colour-orderings, akin to~\cite{Mizera:2016jhj}, meaning that we have no plausible link with BAS theory in any form, as would be expected for a truly robust statement of double copy.


\section{Conclusion} \label{sec:conc}

The question posed in the introduction of this paper was: is there a helicity-graded representation of all-multiplicity tree-level gluon and graviton scattering amplitudes which manifests double copy? More specifically, what is the double copy relation between the RSVW formula for gauge theory, and the Cachazo-Skinner formula for gravity? In this paper, we have definitively answers these questions. 

We found a natural double copy structure of the integrands, in the form of a helicity-graded integral kernel in Section \ref{sec:deriv}, derived through quite elementary graph theoretic and combinatorial results. We additionally proved that the inverse of this object should be viewed as the integrand for BAS theory in Section \ref{sec:proof}. A connection like this should be expected of a robust double copy structure. We extended the analysis further to formulae on non-trivial backgrounds -- though the interpretation of these expressions remains open.

\paragraph{Other double copies:} Whilst the derivation of this double copy structure followed a simple combinatorial path, its interpretation is a bit more mysterious. Comparing the structure and properties to other versions of the double copy -- field theory KLT or CHY formulae -- it seems to take a middle ground. At MHV level, our formula \eqref{PT-KLT} is equivalent to the KLT double copy with a certain choice of basis. However, for generic helicity, the formula expresses gravity as a sum over a strictly \emph{smaller} basis of colour orderings. In this way it could be viewed as being computationally `simpler' than KLT. Of course, this relative simplicity is compensated for by the helicity-graded solutions to the scattering equations implied by the integral. 

From the other perspective, we can compare our formula to the CHY double copy. Here, the double copy is naturally manifested as a multiplicative rule on a single term, evaluated on the support of the full set of solutions to the scattering equations. Since our formula only requires a helicity-graded subset of these equations, it is simpler in that sense. However, the integrand is much more complicated due to the presence of the sums over colour-orderings.  

It remains to be seen whether a closer analysis of the relationship between our integral kernel, momentum space KLT and the CHY formulae might provide clues towards new relations between solutions to the scattering equations. Additionally, it seems that the structure of the basis of colour-orderings is essential to the relation -- and encodes the degree of the maps, and thus the helicity configuration of the external legs. This fact -- also at the level of the BAS amplitude -- appears very mysterious. Why should the double copy care about helicity in this way? 

\medskip 

In relation to other versions of the double copy, one may wonder if there is a way to rephrase our results as some kind of colour-kinematic dual statement. Indeed, the basis of this work was the calculation of the colour-kinematics dual numerators arising from the Hodges formula by Frost in~\cite{Frost:2021qju}. There is also a question whether this structure is somehow related to a double copy at the level of twistor actions, as explored in~\cite{Borsten:2022vtg,Borsten:2023paw}, or more generally for Chern-Simons theories~\cite{Ben-Shahar:2021zww}. 

Finally, it would be remiss to not comment on approaches to double copy similar to ours in the literature. The first of these is the Cachazo-Geyer formula~\cite{Cachazo:2012da}, giving a proposal for all-multiplicity graviton amplitudes constructed out of RSVW integrands and the spacetime KLT kernel. It would be interesting to see whether this formula can be related to our new representation of the gravity amplitude via the double copy to give a sharp equivalence. This is not at all obvious, as our kernel is chirally spit and the colour-orderings are helicity graded.

Another double copy for the RSVW and Cachazo-Skinner integrands was proposed in~\cite{Cachazo:2016sdc} in the form of certain `scalar blocks'. It was noted in this paper that the objects constructed have unphysical poles, and must be summed over degree to give the bi-adjoint scalar amplitude. However, they do provide a \emph{minimal} basis for the double copy in the sense that N$^{d-1}$MHV amplitudes in gravity are given as a sum of $E(n -3, d-1)$ distinct objects built from two gauge theory factors. The object we have found is a bit different in nature, and captures the scalar properties in a more physical way. 

\medskip

\paragraph{BAS theory:} The new formula for the doubly colour-ordered BAS amplitude \eqref{PT-BAS} does not provide any calculational advantages for computing these amplitudes (the expansion in trees is precisely that of the usual Feynman diagrams, and there is an additional integral over maps). That being said, it is interesting that twistor space encodes the structure of propagators in this chiral way. It would be interesting to see whether there are any connections to the recent studies of double copy and BAS theory (or $\mathrm{tr}(\phi^3)$ theory) via positive geometry~\cite{Arkani-Hamed:2017mur,delaCruz:2017zqr,Cao:2021dcd,Cachazo:2022vuo,delaCruz:2023sgk,Arkani-Hamed:2024nhp}.

\medskip

\paragraph{Non-trivial backgrounds:} Our exploration of the double copy beyond flat space in Section \ref{sec:bg} suggests that the scalar theory associated to the kernel will be deformed in some way. The examples presented in this paper are definitely not exhaustive, and we expect that the true relations will arise from different configurations. But notably, with twistor space, we have one of the best handles on all-multiplicity data for certain backgrounds, and this may prove the playground to test the existence of double copy beyond flat space. Perhaps the deformation of the kernel in various theories can be related to the KLT bootstrap program~\cite{Chi:2021mio}, a generalisation of the Berends-Giele currents used in~\cite{Frost:2020eoa, Mafra:2020qst}, or the stringy version of~\cite{Mizera:2016jhj}. 

In the same vein, one should compare our construction to existing versions of the double copy on non-trivial backgrounds. The objects we consider might live in a different space, but the final result should be intuitively similar to these spacetime approaches~\cite{Adamo:2017nia,Adamo:2018mpq,Adamo:2020qru,Armstrong:2020woi, Alday:2021odx, Zhou:2021gnu, Diwakar:2021juk, Alday:2022lkk,Herderschee:2022ntr,Cheung:2022pdk,Drummond:2022dxd,Lee:2022fgr, Lipstein:2023pih,Armstrong:2023phb,Mei:2023jkb,Liang:2023zxo,Sivaramakrishnan:2021srm,Ilderton:2024oly}. In connection to CHY, perhaps our AdS formula can be related via a version of the scattering equations in (A)dS~\cite{Eberhardt:2020ewh,Roehrig:2020kck,Gomez:2021qfd,Gomez:2021ujt}.

\acknowledgments

SK would like to thank Jan Boruch, Wei Bu and Paul\'{i}na Smol\'{a}rov\'{a} for useful comments on aspects of this work. TA is supported by a Royal Society University Research Fellowship, the Leverhulme Trust grant RPG-2020-386, the Simons Collaboration on Celestial Holography MPS-CH-00001550-11 and the STFC consolidated grant ST/X000494/1. SK is supported by an EPSRC studentship.

\appendix

\section{`Parity' invariance of the twistor BAS amplitude} \label{parityApp}

Even though massless scalar particles do not have a helicity (i.e., all of them are helicity zero), the BAS amplitude \eqref{PT-BAS} is graded by the degree of the holomorphic map to twistor space, which is traditionally associated with the helicity of the external particles\footnote{In the BAS formula, degree is instead related to the structure of the colour ordering.}. A desirable property of the formula would be that it is `parity' symmetric in the sense that changing the degree of the map to $\tilde{d} = n - d - 2$ and interchanging angle and square brackets reproduces the same quantity. 

One can show this following methods similar to~\cite{Roiban:2004yf,Witten:2004cp,Bullimore:2012cn} by first noting that, when evaluated on the twistor momentum eigenstates \eqref{momeigsc} we have
\begin{equation}
\langle \lambda(\sigma_i)\, \lambda(\sigma_j) \rangle = \frac{\langle i \, j \rangle}{t_i\, t_j}\,,
\end{equation}
where the $t_i$ are the scaling parameters associated to each momentum eigenstate in twistor space. Now, change variables from $t_i$ to $s_i$ for each $i=1,\ldots,n$, where 
\begin{equation}
    s_i\, t_i := \prod_{j \neq i} \frac{1}{(i\,j)}\,,
\end{equation}
where the product is over \emph{all} insertion points $j\neq i$. As shown in~\cite{Roiban:2004yf}, this transforms the moduli integral over degree $d$ curves to a moduli integral over degree $n- d - 2$ curves, up to a factor
\begin{equation}
    \prod_{i = 1}^n  s_i^4  \,\prod_{j\neq i}\, (i\,j)^2 . \label{mapjacobi}
\end{equation}
At the same time, the measure in the wavefunctions transforms to
\begin{equation}
    t_i\, \d t_i = - \frac{\d s_i}{s_i^3} \,\prod_{j \neq i} \,\frac{1}{(i\,j)^2}\,.
\end{equation}
These two Jacobians combine to give the scaling parameter measure $s_i\,\d s_i$ appropriate to scalar momentum eigenstates on twistor space. Therefore, all integral measures are invariant up to a sign, and it remains to be checked that the integrand behaves in the same way. 

The prefactor $\mathcal{D}^{-1}$ will not change, and  is already symmetric in the exchange of $\bolh$ and $\bolth$. The transformation of the quantities $\phi_{ij}$, $\tilde{\phi}_{ij}$ is less trivial. Recall that
\begin{equation}
    \phi_{ij} = -  t_i t_j\, [i\, j] (i\,j) \prod_{l \in \bolth \setminus \{a, y\}} \,(i\,l)(j\,l)\,, 
\end{equation}
so changing variables to $s_i$ gives
\begin{align}
    \phi_{ij} &= - \frac{[i\,j]}{s_i s_j} (i\,j) \prod_{l \neq i} \,\frac{1}{(i\,l)} \prod_{k \neq j}\, \frac{1}{(j\,k)} \prod_{l \in \bolth\setminus \{ay\}} (i\,l)(j\,l)\\
    & = -  \frac{[i\,j]}{s_i s_j} (i\,j) \prod_{l \in \bolh \cup \{a, y\} \setminus \{i\}} \,\frac{1}{(i\,l)} \prod_{k \in \bolh \cup \{a, y\} \setminus \{j\}}\, \frac{1}{(j\,k)} \\
    & = -  \frac{[i\,j]}{s_i s_j (i\,j)} \prod_{l \in \bolh \cup \{a, y\} \setminus \{i, j\}} \,\frac{1}{(i\,l)(j\,l)}\,,
\end{align}
which, upon also transforming $[i\,j] \rightarrow \langle i \,j \rangle$, reproduces the structure of $ - \tilde{\phi}_{ij}$. 

A similar calculation for $\tilde{\phi}_{ij}$ yields
\begin{equation}
    \tilde{\phi}_{ij} = \langle i\, j \rangle \, s_i s_j (i\,j) \prod_{l \in \bolh \setminus \{b, t\}} \,(i\,l)(j\,l)
\end{equation}
under the reparametrisation from $t_i$ to $s_i$. Once again, transforming $\langle i\, j \rangle \rightarrow [i\, j]$ reproduces $- \phi_{ij}$ as desired. The overall minus sign cancels the minus sign in the twistor wavefunction. Thus, we have established that $\bar{m}_{n, d} = m_{n, n-d - 2}$. This fact is used in the proof of Theorem \ref{BASthm} to evaluate the N$^{n - 3}$MHV amplitudes at the end of the iterative argument. 


\section{Soft limits} \label{softApp}

In this section we verify that the proposed formula \eqref{PT-BAS} reproduces the soft limits of biadjoint scalar amplitudes. Multi-particle soft limits also follow easily as a corrollary of this derivation. These two results combined verify that there are no unphysical poles in the proposed formula, as these would present themselves in an anomalous multi-particle soft factor. This section follows closely the soft analysis in~\cite{Roiban:2004yf} and~\cite{Bullimore:2012cn}. 

We again analyse the amplitude via its tree sub-amplitudes \eqref{eqtreedecomp}, each of which will correspond to the equivalent Feynman diagram. We will analyse the limit of sending a leg $n \in \bolh$ soft, which, by the `parity' invariance shown in Appendix~\ref{parityApp}, is the same as sending a particle in $\bolth$ soft. In the spinor-helicity formalism, we can describe this soft limit as $\kappa_n^{\alpha} \rightarrow 0$ whilst $\tilde{\kappa}_n^{\dot{\alpha}}$ stays constant (the holomorphic soft limit), or as $\tilde{\kappa}_n^{\dot{\alpha}} \rightarrow 0$ whilst $\kappa_n^{\alpha}$ stays constant (the anti-holomorphic soft limit). In a scalar theory, the soft limit should not depend on this choice, so we will verify both.

\paragraph{Holomorphic soft limit:} Starting with the holomorphic soft limit, the only dependence on $n$ is in the associated twistor wavefunctions $\varphi_n (Z(\sigma))$ and the quantities $\phi_{n j}$. Therefore, we are interested in analysing the singularity structure of
\begin{equation} \label{softpre}
    \int \d t_n \, t_n \, \D\sigma_n \prod_{l \in \bolth \setminus \{a, y\}} (n\,l)^2  \Bigg(\prod_{I(E) \ni n} \frac{1}{\phi_E}\Bigg) \,\bar{\delta}^2 (\kappa_n - t_n \lambda (\sigma_n)) \, \e^{\im t_n \, [\mu(\sigma_n) \,n]}\,,
\end{equation}
where the second product appearing here is over all `propagators' in the primary tree that contain contributions from particle $n$, given by 
\begin{equation}
    \phi_E = \sum_{\{i, j\} \subset I(E)} \phi_{ij}, \qquad \phi_{ij} = - t_i t_j \,[i \,j]\, (i\,j) \prod_{l \in \bolth \setminus \{a, y\}} \,(i\,l)\,(j\,l)\,.
\end{equation}
In \eqref{softpre}, the integral over the scaling parameter $t_n$ can be performed against one of the holomorphic delta functions, leaving
\begin{equation} \label{softpre2}
    \int \frac{\D\sigma_n \, \langle a \, b\rangle }{\langle \lambda(\sigma_n) \, a \rangle ^2} \prod_{l \in \bolth \setminus \{a, y\}} (n\,l)^2 \, \prod_{E \ni n} \,\frac{1}{\phi_E}\,\,\bar{\delta}\!\left( \frac{\langle n \, b \rangle}{\langle n \, a \rangle} - \frac{\langle \lambda(\sigma_n)\, b \rangle}{\langle \lambda (\sigma_n)\, a \rangle} \right)\,  \e^{\im t_n \, [\mu(\sigma_n) \,n]}, \quad t_n \rightarrow \frac{\langle n \,a  \rangle }{\langle \lambda(\sigma_n)\, a \rangle}\,,
\end{equation}
where $\{a^\alpha,b^\alpha\}$ is an arbitrary. choice of (constant) spinor dyad used to decompose the two holomorphic delta functions.

Since $t_n$ is fixed to be of order $\kappa_n$ by this integration, terms coming from the exponential in \eqref{softpre2} can be neglected, as they will be subleading in the holomorphic soft expansion. On the support of the other holomorphic delta functions in the subamplitude (which we have not written explicitly as they do not contain singular behaviour in the soft limit), the argument of the remaining holomorphic delta function in \eqref{softpre2} can be rewritten as 
\begin{equation}
    \bar{\delta}\left( \frac{\langle n\, j \rangle \langle a \,b \rangle}{\langle n \, a \rangle \langle j \, a \rangle} - \Bigg[ \frac{\langle\lambda(\sigma_j)\, b\rangle}{\langle \lambda(\sigma_j)\, a \rangle} - \frac{\langle \lambda (\sigma_n)\, b \rangle}{\langle \lambda (\sigma_n) \, a \rangle} \Bigg]\right)\,, \label{deltafunc}
\end{equation}
using  $\langle j \,b \rangle /\langle j \, a\rangle - \langle \lambda (\sigma_j) \, b\rangle / \langle \lambda (\sigma_j)\, a \rangle=0$ for some $j\neq n$. The quantity in the brackets here is a rational function of $\sigma_n$ and vanishes when $\sigma_n \rightarrow \sigma_j$. Therefore, we can introduce a new variable $\omega:= (n\,j)$, and express the bracketed quantity as $\omega \,F(\omega, \sigma_j)$ for some function $F$ that is regular at $\omega = 0$. In terms of this $\omega$, the other contributions to \eqref{softpre2} are
\begin{equation}
    \phi_{n i}(\omega) = - \frac{ t_i \,\langle n \,a \rangle\, [n \,i]\, (\omega - (i\,j))  }{\langle \lambda (\sigma_n)\, a \rangle}\prod_{l \in \bolth \setminus \{a, y\}} (\omega - (j\,l))\,(i\,l)\,,
\end{equation}
and the prefactor
\begin{equation}
    \prod_{l \in \bolth \setminus \{a, y\}} (n\,l)^2 = \prod_{l \in \bolth \setminus \{a, y\}} (\omega - (j\,l))^2\,.
\end{equation}
The remaining integral in \eqref{softpre2} thus takes the form
\begin{equation}
  \int_{\mathbb{C}^{\ast}} \d \omega \,f(\omega) \,\bar{\delta}\!\left(\frac{\langle n\, j \rangle\,\langle a \, b \rangle}{\langle n \, a \rangle\, \langle j \, a \rangle}- \omega \,F(\omega, \sigma_j) \right)\,,  \label{softfinal}
\end{equation}
where 
\begin{equation}
    f(\omega) = \frac{\langle a \, b \rangle}{\langle \lambda (\sigma_n) \, a \rangle^2} \prod_{l \in \bolth \setminus \{a, y\}} (\omega - (j\,l))^2 \,  \prod_{E \ni n} \frac{1}{\phi_E(\omega)} \label{smallf}
\end{equation}
is a rational function of $\omega$. 

We now investigate the roots of the argument of the holomorphic delta-function in \eqref{softfinal} as $\langle n\, j \rangle \rightarrow 0$. One of these is when $\omega$ becomes small, of the same order as $\langle n\, j \rangle$, whilst the function $F(\omega, \sigma_j)$ remains of order unity. The other roots are when $F(\omega, \sigma_j)$ becomes small. For the latter case, performing the integral will not give us singular behaviour unless this root is also pole of $f(\omega)$. However, since $f(\omega)$ has poles determined by kinematic square brackets while $F(\omega, \sigma_j)$ only depends on kinematic angle brackets, for generic momentum configurations these will not have coinciding singular points. Additionally, due to the form of $F(\omega, \sigma_j)$ from \eqref{deltafunc}, there is no singularity coming from $1/\langle \lambda(\sigma_n) \, a \rangle$ in \eqref{smallf}.

Thus, we need only consider the case where $\omega$ becomes of order $\langle n \, j \rangle$. This leads to a singularity if $\phi_E = \phi_{nj}$ for one of the edges; that is, if the soft scalar $n$ attaches directly to an external $j$ line. Here
\begin{equation}
    \phi_{nj} = - \omega \, \frac{t_j \, \langle n \, a \rangle [n\,j] }{ \langle \lambda(\sigma_n)\, a \rangle} \prod_{l \in \bolth \setminus \{a, y\}} (\omega - (j\,l))(j\,l). 
\end{equation}
Notably we do not get a soft singularity if $n$ does not attach to an external line. If it does attach to an external line, the particle $j$ is specified \emph{a posteriori} as corresponding to this line\footnote{If one does not choose $j$ wisely in this way, we can not neglect roots where $F(\omega, \sigma_j)$ becomes small. This is because $1/\phi_{ni}$ for $i \neq j$ has a non-zero pole in $\omega$ that is independent of square brackets and can therefore coincide with a root of $F(\omega, \sigma_j)$.}.

Evaluating the integral \eqref{softfinal} in the limit $\langle n \, j \rangle \rightarrow 0$, we see that all dependence on $F(\omega, \sigma_j)$ drops out, and we can set $\omega = 0$ and $\sigma_n = \sigma_j$ in the remaining integrand.  The `soft factor' multiplying the lower-point tree subamplitude with leg $n$ removed is thus
\begin{equation}
    \frac{\langle \lambda (\sigma_j)\, a \rangle}{ t_j \, \langle j \, a \rangle\, \langle n\, j \rangle\, [n\, j]} = \frac{1}{2\,  k_n \cdot k_j}\,,
\end{equation}
on the support of the holomorphic delta-functions for particle $j$. Summing over all the possible trees, we recover the full soft structure of BAS theory at tree-level (cf., \cite{Abhishek:2020xfy}).

\medskip

\paragraph{Anti-holomorphic soft limit:} The analysis for the anti-holomorphic soft limit, where $\tilde{\kappa}^{\dot{\alpha}}_n \rightarrow 0$ whilst keeping $\kappa^{\alpha}_n$ constant, is much easier as the only dependence on $[n\,j]$ is trivial in $\phi_{nj}$, and is subleading in the twistor wavefunction. The only singularity is again when $n$ attaches directly to the external leg $j$, appearing through
\begin{equation}
    \frac{1}{\phi_{nj}} = - \frac{1}{t_n t_j \,[n \, j]\, (n\,j)} \prod_{l \in \bolth \setminus \{a, y\}} \,\frac{1}{(n\,l)\,(j\,l)}.
\end{equation}
The remaining $n$-dependence in $\phi_E^{-1}$ is subleading in the anti-holomorphic soft limit and therefore reduces to the product over edges in the reduced tree with leg $n$ removed. We are thus interested in the singularity structure of
\begin{equation}
    \int \d t_n \, t_n \, \D\sigma_n\, \prod_{l \in \bolth \setminus \{a, y\}} (n\, l)^2 \,\frac{1}{\phi_{nj}} \,\bar{\delta}^2 (\kappa_n - t_n \lambda (\sigma_n)) \, \e^{\im t_n \, [\mu(\sigma_n) \,n]}\,.
\end{equation}
Performing the integral over $t_n$ against one of the holomorphic delta functions and keeping track of the remaining $n$-dependence gives
\begin{gather}
    - \int \frac{\D\sigma_n}{ \langle \lambda(\sigma_n) \, a \rangle} \frac{\langle a \, b \rangle}{ t_j\, [n\,j]\, \langle n \, a \rangle\, (n\,j)} \prod_{l \in \bolth \setminus \{a, y\}} \frac{(n\,l)}{(i\,l)} \,\bar{\delta}\!\left( \frac{\langle n \, b \rangle}{\langle n \, a \rangle} - \frac{\langle \lambda(\sigma_n)\, b \rangle}{\langle \lambda (\sigma_n)\, a \rangle} \right)\,\e^{\im t_n \, [\mu(\sigma_n) \,n]}, \\ t_n \rightarrow \frac{\langle n \, a \rangle}{\langle \lambda(\sigma_n)\, a \rangle}\,, \label{antisoft1}
\end{gather}
as in \eqref{softpre2}, but without the extra factors coming from $\phi_E^{-1}$ which have already simplified in this anti-holomorphic soft limit. 

The exponential terms are subleading because of their dependence on $\tilde{\kappa}_n$. The remaining holomorphic delta function cannot be further massaged in this case as there are no small parameters in which to expand, but its argument can still be localized on the support of the holomorphic delta functions corresponding to particle $j$ (which we have not written explicitly). This can be used to set
\begin{equation}
    t_j = \frac{\langle j \, a \rangle}{\langle \lambda(\sigma_j) \, b \rangle}\,,
\end{equation}
for $\{a^{\alpha},b^{\alpha}\}$ again an arbitrary spinor dyad. Collecting factors and applying the Schouten identity to the argument of the holomorphic delta function gives
\begin{equation}
   -  \frac{1}{[n \, j ]} \int  \D\sigma_n \,\frac{\langle \lambda (\sigma_j) \, a\rangle }{ \langle j \, a \rangle \,(n\,j)} \prod_{l \in \bolth \setminus \{a, y\}} \frac{(n\,l)}{(j\,l)} \,\,\bar{\delta}(\langle \lambda (\sigma_n) \, n \rangle)\,.\label{preIBP}
\end{equation}
Recalling the definition \eqref{holdel}
\begin{equation}
    \bar{\delta}( \langle \lambda(\sigma_n) \, n \rangle ) \coloneqq \frac{1}{2 \pi \im} \, \,\bar{\del} \,\Bigg( \frac{1}{ \langle \lambda(\sigma_n) \, n \rangle} \Bigg)\,,
\end{equation}
we see that an integration-by-parts can be performed in \eqref{preIBP}.

In particular, note that given meromorphic functions $f$ and $g$ in $z$, it follows that
\begin{equation}
    \int_{\mathbb{C}^*} \d z \,  f(z) \, \bar{\delta} (g(z)) = \frac{1}{2 \pi \im} \int_{\mathbb{C}^{\ast}} \d z\wedge  \d \bar{z}\, \frac{\partial}{\partial \bar{z}} \Bigg(\frac{f(z)}{g(z)} \Bigg) - \int_{\mathbb{C}^{\ast}} \d z \, \frac{1}{g(z)} \, \bar{\delta} \Bigg( \frac{1}{f(z)}\Bigg)\,.
\end{equation}
The first term on the left-hand side is a boundary term, equal to the pole at $z = \infty$ of $\d z\,f(z)/g(z)$. If no such pole exists, then
\begin{equation}
    \int_{\mathbb{C}^{\ast}} \d z \,  f(z) \, \bar{\delta} (g(z)) = - \int_{\mathbb{C}^{\ast}} \d z \, \frac{1}{g(z)} \, \bar{\delta} \Bigg( \frac{1}{f(z)}\Bigg).
\end{equation}
A rational differential form does not have a pole at infinity if it behaves at least as $O(z^{-2})$ as $z \rightarrow \infty$. This is true for our integrand
\begin{equation}
    -\frac{  \langle \lambda (\sigma_j) \, a \rangle }{ \langle j \, a \rangle (n\,j)} \prod_{l \in \bolth \setminus \{a, y\}} \frac{(n\,l)}{(j\,l)} \,\,\ \frac{1}{\langle \lambda (\sigma_n) \, n \rangle}\,,
\end{equation}
with $\sigma_n$ playing the role of $z$, due to $\lambda(\sigma_n)$ being a polynomial of degree $d\geq1$ in $\sigma_n$, and $\bolth$ being a set of $d+1$ elements. 

So finally, \eqref{preIBP} is turned into
\begin{equation}
    \frac{1}{[n \, j ]} \int  \D\sigma_n \,\,\frac{\langle \lambda (\sigma_j) \, a\rangle }{ \langle j \, a \rangle \,\langle \lambda (\sigma_n) \, n \rangle} \prod_{l \in \bolth \setminus \{a, y\}} \frac{(n\,l)}{(j\,l)} \,\,\bar{\delta} ((n\,j))\,.
\end{equation}
The new holomorphic delta function, which emerges after the integration-by-parts, enables us to set $\sigma_n=\sigma_j$, and then on the support of the holomorphic delta functions for particle $j$ we can replace $\lambda(\sigma_j)$ with $\kappa_j$ since the whole expression has balanced homogeneity. This leaves the final soft factor
\begin{equation}
    \frac{1}{ [n\,j]\, \langle n\, j \rangle} = \frac{1}{2\, p_n \cdot p_j}\,,
\end{equation}
where $j$ is the external leg that $n$ attaches to directly in the selected tree subamplitude. Repeating this for all trees contributing to the formula, we recover the full soft structure of BAS from the anti-holomorphic soft limit.

The same analysis can be adapted to the case of multi-particle soft limits, verifying that the amplitude has no unphysical poles. This is in contrast to the scalar blocks constructed in~\cite{Cachazo:2016sdc}, for which it is easy to see that there are spurious soft singularities that are additionally dependent on the `parity' of the soft limit taken.

\bibliography{KLT}
\bibliographystyle{JHEP}
\end{document}